\theoremstyle{definition} 
    \newtheorem{definition}{Definition}
    \newtheorem{assumption}[definition]{Assumption}
\theoremstyle{plain} 
    \newtheorem{theorem}[definition]{Theorem}
    \newtheorem{proposition}[definition]{Proposition}
    \newtheorem{lemma}[definition]{Lemma}
    \newtheorem{corollary}[definition]{Corollary}
    \newtheorem{calculation}[definition]{Calculation}
\theoremstyle{remark} 
    \newtheorem{remark}[definition]{Remark}
    \newcommand{\wt}[1]{\widetilde{#1}}
     \newcommand{\tc}{c}
     \newcommand{\tom}{\omega}
     \newcommand{\te}{e}
     \newcommand{\txi}[1]{\zeta^{#1}}
     \newcommand{\comp}[1]{\langle #1 \rangle}
     \newcommand{\filt}[1]{(#1)}
    \newcommand{\intl}{\int\limits}
    \newcommand{\trintl}[1]{\mathrm{Tr}\intl_{#1}}
    \newcommand{\phip}{\varphi}
    \newcommand{\phin}{\varphi_n}
\newcommand{\BFV}{\mathsf{BFV}}
\newcommand{\F}{\mathbb{F}}
\newcommand{\C}{\mathbb{C}}
\newcommand{\FF}{\mathbb{FF}}
\newcommand{\Co}{C_\circ}
\newcommand{\CCo}{\C_\circ}
\newcommand{\Cres}{C_{\mathrm{res}}}
    \definecolor{light-gray}{gray}{0.95}
\title{Double BFV Quantisation of 3d Gravity}
\author{Giovanni Canepa}
\address{INFN, Sezione di Firenze, via Sansone 1, 50019 Sesto Fiorentino (FI), Italy}
\email{giovanni.canepa.math@gmail.com}
\author{Michele Schiavina}
\address{Department of Mathematics, University of Pavia, via Ferrata 5, 27100, Pavia.}
\address{INFN Sezione di Pavia, via Bassi 6, 27100 Pavia, Italy}
\email{michele.schiavina@unipv.it}
\begin{document}
\thanks{G.C.\ acknowledges partial support of SNF Grants No P500PT 203085 and No P5R5PT 222221. M.S.\ would like to thank INDAM, Instituto Nazionale: di Alta Matematica, and its Mathematical Physics research group GNFM.
This material is partially based upon work supported by the Swedish Research Council under grant no. 2021-06594 while G.C. was in residence at Institut Mittag-Leffler in Djursholm, Sweden, during the spring semester of 2025.
}

\begin{abstract}
    We extend the cohomological setting developed by Batalin, Fradkin and Vilkovisky (BFV), which produces a resolution of coisotropic reduction in terms of hamiltonian dg manifolds, to the case of nested coisotropic embeddings $C\hookrightarrow C_\circ \hookrightarrow F$ inside a symplectic manifold $F$.
    To this, we naturally assign $\underline{C}$ and $\underline{C_\circ}$, as well as the respective BFV dg manifolds.
    
    We show that the data of a nested coisotropic embedding defines a natural graded coisotropic embedding inside the BFV dg manifold assigned to $\underline{C}$, whose reduction can further be resolved using the BFV prescription.
    We call this construction \emph{double BFV resolution}, and we use it to prove that ``resolution commutes with reduction'' for a large class of nested coisotropic embeddings. 
    
    We then deduce a quantisation of $\underline{C}$, from the (graded) geometric quantisation of the double BFV Hamiltonian dg manifold (when it exists), following the quantum BFV prescription. 
    
    As an application, we provide a well defined candidate space of (physical) quantum states of three-dimensional Einstein--Hilbert theory, which is thought of as a partial reduction of the Palatini--Cartan model for gravity.
\end{abstract}

\maketitle

\tableofcontents

\section{Introduction}

Quantisation of a mechanical system is a procedure that associates to a symplectic manifold endowed with a Hamiltonian flow, some space of states (typically Hilbert) with a distinguished (possibly unbounded) Hermitian operator over it \cite{dirac1981principles}. The history of quantisation techniques is long, and rich with more or less successful attempts, including some crucial no-go theorems that obstructed the original axiomatic approach proposed by Dirac \cite{Groenewold,VanHowe} (see also \cite{GotayObstructions}).

To overcome these obstructions, in geometric quantisation, one looks at a suitably chosen subalgebra of the algebra of functions over a sympletic manifold (thought of as the phase space of a mechanical system) and builds a quantisation map (a Lie algebra morphism) onto some suitable set of operators over a (possibly) Hilbert space, after choosing a polarization \cite{Kos73,souriau1970,Woodhouse,Gotayfunctorial}. 

When the mechanical system is constrained, i.e.\ the dynamics is tangent to a coisotropic submanifold $C\subset F$, one has the option of quantising either $F$ or the coisotropic reduction $\underline{C}$. In the physics literature, the quantisation of $F$ is often called the ``kinematical'' space of states, while the quantisation of $\underline{C}$ has the qualifier ``physical'', for physical configurations are only those which lie in the constraint set. The problem of comparing these two, also phrasable as the implementation of ``quantum constraints'', is known in symplectic geometry as the ``quantisation commutes with reduction paradigm'', originally conjectured in general and proven for Hamiltonian actions on K\"ahler manifolds under certain assumptions in \cite{GuilleminSternberg}, and later generalised by \cite{MeinrenkenRR,VergneMultiplicities,MeinrenkenSpinDirac,MeinrenkenSjamaar}. (A review of these progress following Guillemin and Sternberg's paper was written by Vergne \cite{VergneReview}.) 

Another approach - which is more suitable to applications in field theory - follows the idea that instead of pursuing the quantisation of the reduced space $\underline{C}$, which may be singular and unwieldy,\footnote{A reasonable intermediate stage is when the reduction is a smooth stratified manifold.} one can instead look at a coisotropic resolution of the quotient, in terms of the Koszul--Tate--Chevalley--Eilenberg complex or, crediting the pioneers of this method, the Batalin--Fradkin--Vilkovisky complex \cite{BV1,BV2,BV3}. (See \cite{Stasheff1997} for an explanation of the method that is more amenable to mathematicians.) A more recent viewpoint on this approach is presented in \cite{OhPark} and \cite{Schaetz:2008}, where it is shown that the BFV complex (or its underlying strong homotopy Lie algebra), controls deformations of coisotropic embeddings.

The BFV method is not only a way to resolve reductions, but it also proposes an alternative way to construct a space of physical quantum states: Instead of looking at the geometric quantisation of $\underline{C}$, one quantises a Hamiltonian dg manifold to a ``quantum" dg vector space\footnote{Observe that, having components in nonzero degree, $\mathbb{V}$ is not going to be a Hilbert space, since there must be null-norm vectors. One can sometimes give it a Krein space structure.} $(\mathbb{V},\mathbb{\Omega})$, and the physical space of states is its cohomology in degree zero $\mathbb{V}_{\mathrm{phys}}\doteq H^0(\mathbb{V},\mathbb{\Omega})$. The relation between BFV quantisation and geometric quantisation is explained in \cite{FradkinLinetsky}.

When one is looking at a nested embedding, i.e.\ a diagram $C\hookrightarrow \Co \hookrightarrow F$ where $C$ and $\Co$ are both coisotropic in $F$, one can perform coisotropic reduction in two steps, as $C$ descends to a coisotropic submanifold $\Cres$ in the (partial) reduction $\underline{\Co}$. In the case of group actions, where $C$ and $\Co$ are, for instance, the zero level sets of equivariant momentum maps for the action of (respectively) a group $G$ and a subgroup $G_\circ$ this is called Hamiltonian reduction by stages. Note, however, that in our general scenario we do not need to require that {the vanishing ideal $\mathcal{I}_{\Co}$ of} $\Co$ be a Poisson ideal of the vanishing ideal\footnote{This is {encoded by $\{\phi,\psi\}\propto \phi$ for any two generators $\phi\in \mathcal{I}_{\Co}$ and $\psi\in\mathcal{I}_C$. An example of this more restrictive scenario arises in Hamiltonian reduction by stages, where $C$ is the zero level set of a $G$-momentum map and $\Co$ is the zero level set of the momentum map associated to a normal subgroup $G_\circ\subset G$.}} $\mathcal{I}_C$ of $C$. This picture is especially relevant when one is concerned with the problem of quantisation and reduction of the residual embedding $\Cres\hookrightarrow\underline{\Co}$, assumed smooth. {(Meaning that one has access to a space $X$ and realises it as a partial reduction in a nested coisotropic embedding, $X\simeq \underline{\Co}$ in order to quantise it.)}

In particular, our main application---the quantisation of the reduced phase space of three dimensional general relativity---is precisely of this kind. Indeed, the Hamiltonian approach to three dimensional gravity in the Einstein--Hilbert (EH) formulation produces a coisotropic embedding in a symplectic manifold $\mathcal{C}_{EH}\hookrightarrow \mathcal{F}_{EH}$, known in the literature as the canonical constraint set of gravity \cite{Katz,deWitt}, which can be shown to be (symplectomorphic to) 
the (partial) coisotropic reduction of the constraint set of the Hamiltonian data of gravity in the Palatini--Cartan (PC) formulation \cite{CS2019,CCS2020,CaSc2019}. More precisely, there exists a nested coisotropic embedding 
\begin{equation}\label{e:PCnestedEmbedding}
    \mathcal{C}_{PC}\hookrightarrow \mathcal{C}_{\mathrm{Lor}}\hookrightarrow \mathcal{F}_{PC}
\end{equation} {and a space $\mathcal{C}_{PC,\mathrm{res}}\hookrightarrow \underline{\mathcal{C}_{\mathrm{Lor}}}$ such that $\mathcal{C}_{PC,\mathrm{res}}\simeq \mathcal{C}_{EH}$ and \cite{CS2019}
\begin{equation}\label{e:SummaryEHPCred}
\mathcal{C}_{EH}\simeq \mathcal{C}_{PC,\mathrm{res}}\hookrightarrow \underline{\mathcal{C}_{\mathrm{Lor}}} \simeq \mathcal{F}_{EH} \quad \leadsto \quad \underline{\mathcal{C}_{EH}}\simeq \underline{\mathcal{C}_{PC,\mathrm{res}}}.  
\end{equation}}
{It is important to note that the nested inclusion appearing in general relativity that we study here is \emph{not} a standard example of Hamiltonian reduction by stages. In fact, while the submanifolds $\mathcal{C}_{PC}$ and $\mathcal{C}_{\mathrm{Lor}}$ are given by the zero level set of equivariant momentum maps for the action of Lie groups $\mathcal{G}\supset \mathcal{G}_\circ$ (given by the semidirect product of diffeomorphisms with Lorentz transformations), crucially $\mathcal{G}_\circ$ (Lorentz transformations) is not normal.}

Hence, to find a quantisation of the coisotropic reduction $\underline{\mathcal{C}_{EH}}$---which would yield the desired space of physical quantum states---one can follow various alternative procedures. One can try to quantise it as a (possibly singular) symplectic manifold. This is the most direct path, but also the most difficult, owing to the nonlocal nature of the (currently known) ways to describe the constraint set,\footnote{See for instance \cite{RielloSchiavina1,RielloSchiavina2} for the example of Yang--Mills theory.} as well as its singular behaviour. One can alternatively try to quantise $\mathcal{F}_{EH}$ and find a quantum version of the constraints. This approach has a long history and rich literature in quantum gravity. See for instance \cite{deWitt,Thiemann_2007,DelcampDittrichRiello,BonzomDupuisGirelli}. Generally speaking, finding solutions to the quantum constraints is not easy.

In this paper we provide a third alternative, that uses the full extent of the relation between Einstein--Hilbert and Palatini--Cartan theory, as well as the strong equivalence between the dg models for the reduced phase space of PC theory and another topological model, called $BF$ theory \cite{CaSc2019}. 

Namely, as mentioned, the phase space of Einstein--Hilbert theory $\mathcal{F}_{EH}$ with its set of canonical constraints $\mathcal{C}_{EH}$ is obtained as the partial coisotropic reduction of the phase space of PC theory. 
Using the double BFV construction for the nested coisotropic embedding (Eq.\ \eqref{e:PCnestedEmbedding}), we implement the partial coisotropic reduction within the BFV framework: {there is a dg coisotropic submanifold $\mathbb{C}_{\mathrm{Lor}}\hookrightarrow \F_{PC}$ inside the BFV resolution $\F(\mathcal{F}_{PC},\mathcal{C}_{PC})$ (a symplectic dg manifold) of the degree-zero coisotropic embedding $\mathcal{C}_{PC}\hookrightarrow \mathcal{F}_{PC}$. We then exploit the fact that the reduction of the dg coisotropic submanifold $\mathcal{C}_{\mathrm{Lor}}$ can itself be resolved by BFV methods, to show explicitly that one can obtain the BFV resolution $\F(\mathcal{F}_{EH},\mathcal{C}_{EH})$ of the Reduced Phase Space $\underline{\mathcal{C}_{EH}}$ of EH theory as a dg-coisotropic reduction $\underline{\mathbb{C}_{\mathrm{Lor}}}$ inside the BFV resolution $\F(\mathcal{F}_{PC},\mathcal{C}_{PC})$ of the reduced phase space $\mathcal{C}_{PC}$ of PC theory. Namely, there is a morphism of Hamiltonian dg manifolds:
\[
\underline{\sigma}_{\BFV}\colon \underline{\mathbb{C}_{\mathrm{Lor}}} \to \F(\mathcal{F}_{EH},\mathcal{C}_{EH}).
\]
}

Since, furthermore, it was shown in \cite{CaSc2019} that the BFV resolution of the reduced phase space of PC theory is dg symplectomorphic to the BFV resolution of the reduced phase space of $BF$ theory, for which an explicit quantisation was obtained by Cattaneo, Mnev and Reshetikhin in \cite{CMR2}, we will use such quantisation of $BF$ theory as a starting point to build a quantisation of 3d PC theory. This, combined with our double BFV construction, will output a well-defined quantization of 3d Einstein Hilbert theory, in the sense of a space of (physical) quantum states.

We present two such quantisations, essentially determined by a choice of polarisation in the classical (i.e.\ degree zero) phase space, which can be thought of as the choice of position vs momentum representation, where position and momentum are represented by connection one forms and nondegenerate triads. While it is customary to think of the space of quantum states for GR as associated to quantising metrics, it turns out that choosing the opposite polarisation results in a simpler quantisation procedure and output. This shows that one can indeed consider the space of states of GR as functions over the space of connections, with the flatness constraint enforced via the quantum BFV operator. This appears to be possible even restricting to the nondegenerate sector, i.e.\ when triads are required to be nondegenerate. See Section \ref{sec:nondegeneracy}. 

For completeness we also showcase the output of our construction for the ``opposite'' polarisation, whose output is much less legible. This should be compared with the standard Wheeler-deWitt quantisation of Einstein--Hilbert gravity. While a direct comparison is not immediately possible, we argue that the symplectomorphisms we establish at the classical level between the relevant theories (namely EH theory and the partially reduced PC theory, see Section \ref{s:relation_Classical_theories} and, in particular, Theorem \ref{t:BF_coisotropic_EH}) are a strong indication that the three quantisations should be all equivalent in some appropriate sense. We leave this to further work.

This article is structured as follows. In Section \ref{s:coiso_by_BFV} we introduce the classical BFV construction for the resolution of coisotropic submanifolds and in Section \ref{s:quantumBFVstandard} we outline its quantisation. In Section \ref{s:doubleBFV} we describe the nested coisotropic embeddings, deduce some properties and show how to describe them with a double application of the BFV formalism. We also prove in Theorem \ref{thm:doubleBFVreduced} that this resolution commutes with reduction. Then, in Section \ref{s:quantisation_BFVonBFV}, we quantise the double BFV construction. Note that the results of these sections are proven in the finite dimensional setting.  In section \ref{s:example_linearconstraints}, we then consider a simple example with linear constraints as a warm up for the gravity case.

In Section \ref{s:BFVtheories_gravity} we present the BFV formulations of 3d gravity in the Einstein--Hilbert formalism (Section \ref{s:BFVtheory_EH}) and Palatini--Cartan formalism (Section \ref{s:BFVtheory_PC}), and the BFV formulation of $BF$ theory  (Section \ref{s:BFVtheory_BF}). Then, in Section \ref{s:relation_Classical_theories} we explicitly show how to obtain the BFV formulation of Einstein--Hilbert theory as a partial coisotropic reduction of the one of $BF$ theory. This (partial) reduction is then described within the double BFV framework in Section \ref{s:BFVonBFV_Classical}, adapted to the case at hand. Finally, in Section \ref{s:quantumBF_doubleBFV}, the quantisation scheme for double BFV theories we outlined in Section \ref{s:quantisation_BFVonBFV} is applied to this case and the main result is summarized in Theorem \ref{thm:quantisationEH_doubleBFV}.

\section*{Acknowledgements}
We thank Alberto S.\ Cattaneo, Jonas Schnitzer and Konstantin Wernli for useful comments, suggestions and discussions relevant to the making of this paper. 

\section{BFV resolution of coisotropic submanifolds} 
A key construction used in this work is the cohomological resolution of the reduction of coisotropic submanifolds. We recap here the construction and refer to  \cite[Section 5.1]{CCS2020} for more insight. (See also \cite{Stasheff1997,OhPark,Schaetz:2008,SchaetzTH}.)

\subsection{Classical BFV formalism}\label{s:coiso_by_BFV} 

Let $(F,\omega)$ be a (possibly graded) $0$-symplectic manifold. Furthermore let $\{\phi_i\}_{i\in I} \in C^{\infty}(F)$ be a collection of independent Hamiltonian functions.\footnote{These are often called \emph{constraints}.} Let us now assume that the subset\footnote{In the applications we will be interested in, it is generally too much to require that $C$ be a smooth embedded submanifold. The algebraic reduction is however still available in singular scenarios, and the cohomological resolution is compatible with an algebraic description of coisotropic sets.} $C$, defined by  
\[
C=\{x\in F:\phi_i(x)=0\ \forall i\in I\}\equiv\mathrm{Zero}\{\phi_i\},
\]
is coisotropic, meaning that\footnote{The Poisson brackets are defined on Hamiltonian function via the symplectic form: $\{f,g\}\doteq \iota_{X_f}\iota_{X_g}\omega$. With this prescription this definition is equivalent to the standard definition of coisotropic submanifold.} $\{\phi_i,\phi_j\}=f_{ij}^k \phi_k$ (where the $\{f_{ij}^k\}$'s are functions on $F$).

By coisotropic reduction we mean the space of leaves $\underline{C}$ of the characteristic foliation of $C\subset (F,\omega)$, generated by the Hamiltonian vector fields of the $\{\phi_i\}$'s (which are involutive because of the coisotropicity condition). At the level of functions, this means that 
\begin{align*}
   C^\infty(\underline{C})\simeq \left(C^\infty(F)/\mathcal{I}_C\right)^{\mathcal{I}_C} 
\end{align*}
 is a Poisson algebra, where $\mathcal{I}_C$ is the vanishing ideal of $C$, which is a Poisson subalgebra by assumption. In other words, (the smooth locus of) $\underline{C}$ is equipped with a symplectic form $\underline{\omega}$ defined such that its pullback along $\pi_C\colon C \to \underline{C}$ is $\omega|_C=\pi_C^*\underline{\omega}$.  

The BFV construction---after Batalin, Fradkin and Vilkovisky---provides a cohomological description of $\underline{C}$, i.e.\ it outputs a cochain complex $\mathfrak{BFV}^\bullet$ whose cohomology in degree zero is (a replacement of) $C^\infty(\underline{C})\simeq H^0(\mathfrak{BFV}^\bullet)$.

The construction goes through by enlarging $(F,\omega)$ to a graded $0$-symplectic manifold $\F(M,C)$ endowed with a cohomological vector field $Q$ such that $\mathfrak{BFV}^\bullet\doteq C^\infty(\F(M,C))$ is a complex with $Q$ as differential.

The degree zero cohomology of this complex then describes the space of functions on $\underline{C}$, with $C\subset F$, since $Q$ is constructed to be a combination of the Koszul--Tate complex associated to $\mathcal{I}_C\subset C^\infty(F)$, and the Chevalley--Eilenberg complex associated to the canonical Lie algebroid structure arising from $C\subset F$ being coisotropic (see e.g. \cite{CattaneoFelderRelative}). 

Explicitly, when $C=\mathrm{Zero}\{\phi_i\in C^\infty(F)\}$ one considers a Lagrange multiplier $c_i$ of degree $\mathrm{deg}(c_i)=1-\mathrm{deg}(\phi_i)$ for each constraint. If we call $W$ the space of Lagrange multipliers\footnote{In finite dimensions this can be just taken to be some $\mathbb{R}^k$. In field theory this is some vector space $W$. We keep the notation for further reference.}, we then define the new space of BFV fields to be\footnote{In our applications we shall consider spaces of sections of fibre bundles, so the cotangent bundle is truly defined as the space of densitised sections of the dual (tangent) bundle. Here the discussion considers a finite-dimensional toy model.} 
$\F(F,C)\doteq F\times T^* W[1]$ where the variables on the fiber of $T^* W[1]$ are denoted by $c_i^\dag$, and are of degree $\mathrm{deg}(c_i^\dag)= -\mathrm{deg}(c_i)$. The canonical symplectic form associated to this space is then\footnote{In the field theory scenario, this is a local, weak-symplectic form. The differential $d$ is then replaced with the vertical differential $\delta$, which can be thought of de Rham on a space of sections of a vector bundle.} 
\begin{align*}
    \wt{\omega} = \omega + d c^\dagger_i\, d c^i.
\end{align*}

\begin{remark}\label{rmk:generalBFV}
    Note that, whenever the Koszul complex is not acyclic, one can extend the procedure by introducing higher (negative) degree variables whose differential are nontrivial Koszul cocycles, and thus obtain a new, acyclic, complex. See \cite{Stasheff1997} for a complete description of the procedure. We will assume here that the procedure stops, but in principle we could ask $W$ to be a (nontrivial) graded vector space itself, and still apply the formulas provided here, with a little extra care.   
\end{remark}

By construction, $\F(F,C)$ comes equipped with a cohomological vector field $Q$, which is a geometrised version of the Koszul--Tate--Chevalley--Eilenberg differential. Another way of seeing it emerge is as follows: Let us look for a function $S \in C^{\infty}(\F)$ as 
\begin{align}\label{e:Rfunction}
    S =  c^i\phi_i + \frac12 f_{ij}^k c^\dagger_k c^ic^j + r
\end{align}
where $r$ is a function of higher homogeneous degree in the variables $c^\dag_i$ such that $\{S,S\}=0$ (the BFV master equation). Let now $Q$ be the Hamiltonian vector field of $S$. Then it can be shown that the cohomology of $Q$ in degree zero is isomorphic to 
$$C^\infty(\underline{C})\simeq(C^\infty(F)/\mathcal{I}_C)^{\mathcal{I}_C}$$ 
as a Poisson algebra, where $\mathcal{I}_C=\mathrm{Span}_{C^\infty(F)}\{\phi_i\}$ is the vanishing ideal of $C$ and the invariance is w.r.t. the Hamiltonian vector fields  of $\phi_i$ \cite{BV1981,Batalin:1983,Stasheff1997}. In other words, the Koszul--Tate--Chevalley--Eilenberg differential is the derivation associated to the cohomological vector field $Q$, which is Hamiltonian for the master functional $S$.

\begin{remark}[Notation]
    In what follows we will need to distinguish the BFV resolution of several different submanifolds, and as such we use a particular notation that keeps track of the various elements of the construction, as detailed by the following Definition.
\end{remark}

\begin{definition}\label{d:BFVtheory_of_C}
    The BFV data associated to the coisotropic submanifold $C\subset F$ is
    \[
    \BFV(F,C) = (\F(F,C), \omega_C, Q_C, S_C).
    \]
    This defines the BFV complex
    \[
    Q_C\circlearrowright\mathfrak{BFV}^\bullet(F,C) = C^\infty(\F(F,C)).
    \]
\end{definition}

\begin{remark}[Graded coisotropic]\label{rmk:dgcoiso}
Notice that we left the option open for $(F,\omega)$ to be graded symplectic. In particular, we will interested in cases where $(F,\omega,Q)$ is itself a dg manifold and $C\hookrightarrow F$ is coisotropic. One could forget the compatibility with $Q$ and simply require that $C\hookrightarrow F$ be coisotropic in the sense of graded manifold, i.e.\ its vanishing locus $\mathcal{I}_C$ is an associative ideal which is also a subalgebra in the graded Poisson algebra of functions $C^\infty(M)$. However, in the examples we will consider, a stronger notion will emerge naturally. Namely we ask $\mathcal{I}_C$ to be also a differential ideal w.r.t.\ $Q$, or in other words, $Q$ is tangent to $C$ as a vector field.
\end{remark}

\subsection{Quantum BFV formalism}\label{s:quantumBFVstandard}
Quantisation in the BFV formalism involves performing some quantisation of the graded symplectic manifold $(\F(F,C),\omega_C)$. In the original approach \cite{BatalinFradkin}, one constructs an algebra of quantum observables starting from all functions of the BFV space of fields, and then implements gauge fixing by further extending the BFV space of fields (this is analogous to the addition of fields to implement gauge fixing fermions within BV quantisation). However, one can equivalently approach the problem via geometric quantisation, following \cite{CMR2}. (See also \cite{FradkinLinetsky} for a review of the link between BFV quantisation and geometric quantisation.)

One chooses a polarisation---i.e., a foliation by Lagrangian submanifolds---and defines a space of polarised sections of a (possibly trivial) line bundle, which leads to a graded vector space $V_C$ (polarised sections of the line bundle), as well as a ``quantisation map'', i.e.\ a Lie algebra morphism 
\[
\mathfrak{q}\colon (\mathcal{O}, \{\cdot,\cdot\}) \to (\mathrm{End}(V_C), [\cdot,\cdot])
\]
where $\mathcal{O}\subset C^\infty(\F(F,C))$ is some appropriate Poisson subalgebra of ``quantisable'' functions. (Observe that, even assuming a symmetric pairing is given on $V_C$, the graded nature of the symplectic manifold will make it so that odd sections will have zero norm, so this vector space is not going to be ``Hilbert''.)

Assuming that one can quantise\footnote{In other words if $S_C\in \mathcal{O}$, which will be the case in our examples.} the BFV action $S_C$ to an operator $\Omega_C$, bracketing with it yields a differential $[\Omega_C,\cdot]$ in $\mathrm{End}(V_C)$,  making $(V_C,\Omega_C)$ into a cochain complex,\footnote{Observe indeed that $[\Omega_C,\Omega_C] = [\mathfrak{q}(S_C)),\mathfrak{q}(S_C))] = \mathfrak{q}(\{S_C,S_C\}) = 0$.} The quantum BFV complex. The cohomology in degree zero of this complex is then taken to be a quantisation of $\underline{C}$. 

Note that this touches upon the question ``does quantisation commute with reduction?'', which would relate the quantisation of $(F,\omega)$ to that of $(\underline{C},\underline{\omega})$, but looks at it from a slightly different point of view. We wish to relate a `classical resolution' to a `quantum resolution': i.e.\ a cohomological framework such that the true quantum (Hilbert) space of states --- which would be the quantisation of the space $\underline{C}$ --- is recovered as the cohomology in degree zero of the quantum BFV complex. In terms of the usual slogan, the question here changes to: ``does quantisation commute with resolution?''. We answer this question positively in a number of cases with Theorem \ref{thm:doubleBFVreduced}.

\subsection{Double BFV}\label{s:doubleBFV}
In this article we consider two combined applications of the BFV formalism. Apart from the ``usual'' resolution of coisotropic reduction in ordinary, i.e.\ non-graded, symplectic manifolds, we will also consider the scenario in which a \emph{graded} $0$-symplectic manifold---possibly representing the BFV space for some \emph{other} coisotropic reduction---is itself endowed with a (graded) coisotropic submanifold, and we shall approach \emph{that} reduction via BFV resolution as well. (Note that the explicit graded-coisotropic reduction, in the particular example we consider, is performed in Section \ref{s:relation_Classical_theories}.)
    
Indeed, one can consider a graded symplectic manifold obtained as the BFV resolution of an ordinary coisotropic reduction, and look at a coisotropic submanifold therein. In our particular example, this will correspond to a \emph{partial reduction} of the original, degree $0$, coisotropic submanifold. In other words, will be performing a BFV resolution over another BFV resolution. Let us be more explicit.

Consider the following diagram of inclusion of submanifolds inside a symplectic manifold $(F,\omega_F)$:
\[
\xymatrix{
C_\psi \ar[r]^{\iota_\psi} & F \\
C \ar[r] \ar[u] & C_\phi \ar[u]^{\iota_\phi}
}
\]
where we assume that
\begin{align*}
    &\begin{cases}
    C_\phi \doteq \mathrm{Zero}\{\phi_i, i=1,\dots, \ell_1\},\\
    C_\psi \doteq \mathrm{Zero}\{\psi_j, j=\ell_1+1,\dots, \ell_2\},
\end{cases}\\
&C\equiv C_{\phi}\cap C_{\psi}= \mathrm{Zero}\{\phi_i, \psi_j, i=1\dots \ell_1, j=\ell_1+1,\dots, \ell_2\} {\simeq \mathrm{Zero}\{\iota^*_\phi\psi_j\}\subset C_\phi}.
\end{align*}

We ask that all three submanifolds $C_\phi, C_\psi$ and $C$ be coisotropic and, for the purposes of this paper we assume the (symplectic) reduction of the coisotropic submanifold $C_\phi$,  denoted  by
\[
\pi_\phi\colon C_\phi \to \underline{C_\phi}, \qquad \iota_\phi\colon C_\phi\hookrightarrow F,
\]
to be smooth. {We further ask that the functions $\psi_j$ be compatible with the characteristic foliation of $C_\phi$ in the following sense: we ask that their restriction to $C_\phi$ define an (associative) ideal $\mathcal{I}_{\iota^*_\phi\psi}$ that is invariant w.r.t. the action of the Hamiltonian vector fields of the functions $\phi$: i.e.\ $X_{\phi_i}(\mathcal{I}_{\iota^*_\phi\psi})=\mathcal{I}_{\iota^*_\phi\psi}$} for all $\phi_i$. {We can then assume that} that there exist $\underline{\psi}{}_j\in C^\infty(\underline{C_\phi})$ such that the ideal generated by the $\underline{\psi}$ is a Lie subalgebra of $C^\infty(\underline{C_\phi})$ and it is isomorphic to the associative ideal\footnote{This is not going to be a subalgebra simply because $C^\infty(C_\phi)$ is not a Poisson algebra, unless $C_\phi$ is a Poisson submanifold, which is a much stronger requirement.} in $C^\infty(C_\phi)$ generated by the restrictions $\iota^*_\phi\psi$.  
{Note that the $\iota^*_\phi\psi_j$ are not required to be invariant (i.e.\ basic) themselves, so it is not going to be true in general that $\pi_\phi^*\underline{\psi}_j=\iota_\phi^*\psi_j$, which corresponds to the special case of ``ideal'' embeddings (Definition \ref{def:nestedcoisoembedding}).} Finally, the $\phi$'s and the $\psi$'s should be functionally independent, meaning that they cannot be written as a combination of the other functions in the set.

At the level of generators of the vanishing ideals, this structure is encoded by\footnote{{We stress here that the standard case of coisotropic reduction by stages is encoded by the condition $g=0$. It is an example of Hamiltonian reduction by stages if, additionally, the structure functions are constant. Our main example will showcase constant structure functions but with $m=0$ and $g\not=0$.}}
\begin{equation}\label{e:Coisostructure}
\{\phi_i,\phi_j\} = f_{ij}^k\phi_k, \qquad \{\psi_i,\psi_j\} = h_{ij}^k\psi_k,\qquad\{\phi_i,\psi_j\} = m_{ij}^k\phi_k + g_{ij}^k\psi_k
\end{equation}
for $f_{ij}^k,h_{ij}^k,g_{ij}^k,m_{ij}^k\in C^{\infty}(F)$. 
Note that in the above formulas we are listing using only one index $i\in\{1,\dots, \ell_1+\ell_2\}$, so that
\[
\phi_i \not= 0 \iff 1\leq i\leq \ell_1, \qquad \psi_i \not= 0 \iff \ell_1+1\leq i\leq \ell_1+k,
\]
and
\begin{align*}
f_{ij}^k\not = 0 & \iff 1\leq i,j,k\leq \ell_1& h_{ij}^k\not = 0 &\iff \ell_1+1\leq i,j,k\leq \ell_1+\ell_2\\
m_{ij}^k\not = 0 & \iff \begin{cases}
    1\leq i,k\leq \ell_1, \\ 
    \ell_1+1\leq k\leq \ell_1+\ell_2
\end{cases}
    & g_{ij}^k\not = 0 &\iff \begin{cases}
        1\leq i\leq \ell_1, \\ 
        \ell_1+1\leq j,k\leq \ell_1+\ell_2
    \end{cases}
\end{align*} 
This clearly also means one can write $\Phi_i=\phi_i$ for $1\leq i\leq \ell_1$ and $\Phi_i=\psi_i$ for $\ell_1+1\leq i\leq \ell_1+\ell_2$, and $C=\mathrm{Zero}\{\Phi_i\}$ is coisotropic since $\{\Phi_i,\Phi_j\} = F_{ij}^k\Phi_k$.

\begin{lemma} \label{l:Relations_structure_constants}
We have the following identities 
    \begin{subequations}\begin{gather}\label{e:pseudo_Jacobi_relation}
    \mathrm{Cyc}_{p,i,j}\left(\{\phi_p,f_{ij}^k\} + f_{ij}^mf_{p m}^k\right) = 0, \\
    \mathrm{Cyc}_{p,i,j}\left(\{\psi_p,h_{ij}^k\} + h_{ij}^mh_{p m}^k\right) = 0, \\
    \mathrm{Cyc}_{j,k} \left(\{\psi_j,m_{ki}^p\} + m_{ki}^l m _{jl}^p  - \frac{1}{2}h_{jk}^l m_{il}^p\right)=0,\\
    \mathrm{Cyc}_{j,k} \left(\{\psi_j,g_{ki}^l\} +m_{ki}^l g _{jl}^p + g_{ki}^l h _{jl}^p - \frac{1}{2}h_{jk}^l g_{il}^p  - \frac{1}{2} \{\psi_i,h_{jk}^p\}\right)=0,\\
    \mathrm{Cyc}_{ij}\left(\{\phi_i,g_{jk}^p\} + g_{jk}^q g_{iq}^p - \frac12 f_{ij}^q g_{kq}^p\right) =0,\label{e:pseudo_Jacobi_relation2}\\
    \mathrm{Cyc}_{p,i}\left( \{\phi_p, m_{ij}^k\} + m_{ij}^m f_{p m}^k + g_{ij}^m m_{p m}^k - \frac{1}{2} f_{ip}^l m_{jl}^k - \frac{1}{2} \{\psi_j, f_{ip}^k \} \right)=0,
    \end{gather}
    \end{subequations}
    
\end{lemma}

\begin{proof}
    
    All these relations follow from the structural equations \eqref{e:Coisostructure} by taking iterated Poisson brackets, computing their cyclic permutations and using Jacobi identity. The identities are then obtained by recalling that the $\phi_i,\psi_j$ are functionally independent.
    
    For example let us prove \eqref{e:pseudo_Jacobi_relation}. Using Jacobi identity we have
    \begin{align*}
        &\mathrm{Cyc}_{p,i,j} \{\phi_p, \{\phi_i, \phi_j\}\}
        =\mathrm{Cyc}_{p,i,j} \{  \phi_p, f_{ij}^k \phi_k \} \\
        &= \mathrm{Cyc}_{p,i,j} f_{ij}^m f_{pm}^k \phi_k + \mathrm{Cyc}_{p,i,j} \{  \phi_p, f_{ij}^k \} \phi_k
    \end{align*}
    from which we deduce the required identity. All the remaining equations are obtained similarly by considering all possible combinations of $\phi_i,\psi_j$.
\end{proof}

\begin{proposition}\label{prop:reductiondecomposition}Let $\mathcal{I}_C$, $\mathcal{I}_\phi$, $\mathcal{I}_{\underline{\psi}}$ denote the vanishing ideals associated to the sets $\{\phi_i,\psi_j\}$, $\{\phi_i\}$ and $\{\underline{\psi_j}\}$ respectively. We have
    \[
    C^\infty(\underline{C})\equiv \left(C^\infty(F)/\mathcal{I}_C\right)^{\mathcal{I}_C} \simeq \left(C^\infty(\underline{C_\phi})/I_{\underline{\psi}}\right)^{I_{\underline{\psi}}} \equiv \left(\left(C^\infty(F)/\mathcal{I}_\phi\right)^{\mathcal{I}_\phi}/I_{\underline{\psi}}\right)^{I_{\underline{\psi}}}
    \]
\end{proposition}

\begin{proof}
    Denote by $[f]_\phi\in C^\infty(\underline{C_\phi})$ the class of an $\mathcal{I}_\phi$ invariant function in $C^\infty(C_\phi)$. 
    A class of functions in $C^\infty(\underline{C_\phi})/I_{\underline{\psi}}$ is $I_{\underline{\psi}}$ invariant iff for every representative $[f]_\phi + G(\underline{\psi})$ we have $X_{\underline{\psi}}([f]_\phi + G(\underline{\psi})) = 0$. However, we can test this condition in {$F$} by choosing {an invariant representative $f + F(\phi)\in C^\infty(F)$ in the class $[f]_\phi$} and {a basic function representing $\underline{\psi}$, i.e.\ of the form $\iota^*_\phi\psi = \pi^*\underline{\psi}$}, namely
    \begin{align*}
        0 = X_{\underline{\psi}}([f]_\phi + G(\underline{\psi})) &\iff X_{\iota^*_\phi\psi}(f + F(\phi) + G(\iota^*_\phi\psi)) = 0 \\
        &\iff X_{\iota^*_\phi\psi}(f) = X_{\iota^*_\phi\psi}(F(\phi) + G(\iota^*_\phi\psi)) = F'(\phi) + G'(\iota^*_\phi\psi).
    \end{align*}
    which means that $X_{\iota^*_\phi\psi}(f) = 0 \mod \phi \mod \iota^*_\phi\psi$, that is to say $f$ defines an element in $\left(C^\infty(M)/\mathcal{I}_C\right)^{\mathcal{I}_C}\simeq C^\infty(\underline{C})$, since $f \mod \phi$ is also $X_\phi$ invariant by assumption.

    On the other hand such a function $[f]\in C^\infty(\underline{C})$ is represented by an $X_\phi$- and $X_\psi$-invariant function $f + F(\phi) + G(\psi) = f + F(\phi) + G(\iota^*_\phi\psi) + H(\phi)$, which defines a function $[f]_\phi + G(\underline{\psi}) \in C^\infty(\underline{C_\phi})$ which represents the class $[[f]_\phi]_{\underline{\psi}}\in C^\infty(\underline{C_\phi})/\mathcal{I}_{\underline{\psi}}$.
\end{proof}

{
\begin{definition}[Nested Coisotropic Embedding]\label{def:nestedcoisoembedding}
    A nested coisotropic embedding is a sequence of presymplectic manifolds
    \[
    C\hookrightarrow \Co\hookrightarrow F,
    \]
    with $\Co \equiv C_\phi$ and $C\equiv C_\phi\cap C_\psi$, for some set of generators $\{\phi_i,\psi_i\}$ that satisfy the structure equations \eqref{e:Coisostructure}. We also denote $\Cres \equiv C_{\underline{\psi}}$, so that $\underline{\Cres}\simeq\underline{C}$ (Proposition \ref{prop:reductiondecomposition}).

    We will say that a nested coisotropic embedding is 
    \begin{enumerate}
        \item \emph{nice} iff the structure functions $g_{ij}^k$ are constant;
        \item \emph{ideal} iff the structure functions $g_{ij}^k=0$, i.e.\ if $C$ is generated by functions on $\Co$ that are basic w.r.t.\ the coisotropic reduction $\pi_\circ\colon \Co \to \underline{\Co}$.
        \item \emph{of Lie type} iff all structure functions are constant;
    \end{enumerate}
\end{definition}

\begin{remark}
Note that we have two degenerate scenarios, when $C\equiv \Co\hookrightarrow F$ and when $C\hookrightarrow \Co\equiv F$. In the first case the set of (residual) constraints $\{\psi_j\}=\emptyset = \{\underline{\psi}\}_j$ which implies that $C_{\mathrm{res}}\equiv \underline{\Co}$, which is symplectic and thus its reduction is trivially the space itself $\underline{C_{\mathrm{res}}}\equiv \underline{\Co}\simeq \underline{C}$. In the second case, instead, the set of constraints $\{\phi_i\}=\emptyset$ so that $\psi_j \equiv \underline{\psi}_j$ and $C\equiv C_{\mathrm{res}}$, whence clearly we have again $\underline{C}=\underline{C_{\mathrm{res}}}$.
\end{remark}

We have the following obvious relation between these notions.
\begin{proposition}
    Coisotropic embeddings that are ideal, or of Lie type, are nice.
\end{proposition}

When a nested coisotropic embedding is ideal, the vanishing ideal generated by the $\phi$ functions is a Poisson ideal of the vanishing ideal generated by both sets of functions $\phi$ and $\psi$, which in turn is a Poisson subalgebra of all functions. In other words, the $\psi$ functions restricted to $C_\phi$ are invariant under the action of any $\phi$ (not only the ideal is invariant but all generators are).

Our main example of General Relativity---the application that motivated this construction---features an ideal nested coisotropic embedding, as it satisfies the additional requirement $g_{ij}^k=0$.  However, it is worthwhile to explore the general case.

}

\begin{remark}[Reduction by stages]
Coisotropic embeddings of Lie type emerge, for example, when $C$ is the zero-level set of the momentum map for a Hamiltonian action $G\circlearrowright F$, while $\Co$ is the zero-level set of the momentum map associated to a (normal) subgroup. In other words, an example is given when performing symplectic reduction ``by stages'' \cite{marsden2007stages}. While in this section we will \emph{not} assume that the functions $f_{ij}^k$ and $h_{ij}^k$ be structure constants of a Lie algebra, in our main example this will be the case. However, it is worthwhile to allow for a more general scenario.
\end{remark}

We summarise this construction with the following diagram
\[
\tikz[
overlay]{
    \draw[dotted] (0, 0.5) -- (1,0.5) -- (3.1, -1.4)-- (1,-3.2)--(0,-3.2)--(0,0.5);
    \draw[dotted] (-0.1, 3.2) -- (1.1,3.2) -- (4.4, 0.5)-- (4.4,-3.3)--(-0.1,-3.3)--(-0.1, 3.2);
}
\begin{tikzcd}[row sep=2.5em]
    F & &\\
 &\Co \arrow[ul] \arrow[ld] &\\
\underline{\Co}& & \arrow[ul]\arrow[dl]C \arrow[dd]\\
& \arrow[ul]\arrow[dl]\Cres &\\
 \underline{\Cres} \arrow[rr, "\sim"] & & \underline{C}
\end{tikzcd}
\]

We can then construct two BFV spaces, one associated with the reduction of $C\hookrightarrow F$ and one associated with $\Cres\hookrightarrow \underline{\Co}$. They are given, respectively, by
\begin{subequations}\begin{align}
\BFV(F,C)&\doteq \left(\F(F,C),\ \omega_{C},\ Q_C, S_C\right), & \F(F,C) &\doteq F \times T^*W[1], \label{e:BFVBigRes}\\ 
\BFV(\underline{\Co},\Cres)& \doteq\left(\F(\underline{\Co},\Cres),\ \omega_{\mathrm{res}},\  Q_{\mathrm{res}}, S_{\mathrm{res}}\right), & \F(\underline{\Co},\Cres) &\doteq \underline{\Co} \times T^*W_{\mathrm{res}}[-1]. \label{e:BFVSmallRes}
\end{align}\end{subequations}

Note that we denote by $T^*W[1]$, $T^*W_\circ[1]$ and $T^*W_{\mathrm{res}}[1]$ the BFV components respectively associated to the resolution of the quotients of zero loci $C$, $\Co$ and $\Cres$ w.r.t.\ their characteristic foliations, so that
\[
H^0(\mathfrak{BFV}^\bullet(F,C))\simeq H^0(\mathfrak{BFV}^\bullet(\underline{\Co},\Cres))  \simeq C^\infty(\underline{C}).
\]

Resolutions are diagrammatically represented by a wavy arrow, relating to the reduction diagram to their left:
\begin{equation}\label{e:BFVdiagrams}
\xymatrix{
F & \ar@{_{(}->}[l]C \ar[d]\ar@{~>}[r]^-{BFV}& \F(F,C) & \underline{\Co} & \ar@{_{(}->}[l]\Cres \ar[d]\ar@{~>}[r]^-{BFV} & \F(\underline{\Co},\Cres)\\
& \underline{C} &  & & \underline{\Cres} &
}
\end{equation}

Using generators $\{\phi,\psi\}$ we have\footnote{For simplicity, here we are assuming that there are no reducibilities/isotropies in the associated algebroids, and that the functions $\phi_i,\psi_j$ are functionally independent. This allows us to pick $W_\phi\simeq \mathbb{R}^{\ell_1}$ and similarly for $\psi$. More generally, the BFV structure might include degree $\pm2$ variables (and higher) to account for the nonvanishing of the higher cohomology groups of the Koszul complex. See Remark \ref{rmk:generalBFV}.} $T^*W[1] = T^*W_\phi[1]\times T^*W_\psi[1] = T^*\mathbb{R}^{\ell_1}[1] \times T^*\mathbb{R}^{\ell_2}[1]$. We denote coordinates on $T^*W_\phi[1]$ and $T^*W_\psi[1]$ by $(\chi^\dag,\chi)$ and $(\lambda^\dag,\lambda)$ respectively, while we use a Darboux chart $(q,p)$ for $F$.

\begin{assumption}\label{ass:BFV_action}
     With the notation above, the BFV data associated to the coisotropic submanifold $C\subset F$ has the following explicit form:
    \begin{gather*}
        \BFV(F,C)\doteq\left(\F(F,C)\doteq F \times T^*\mathbb{R}^{\ell_1}[1] \times T^*\mathbb{R}^{\ell_2}[1], \omega_C, Q_C, S_C\right)\\
        S_C = \int_\Sigma \chi^i {\phi_i} + \lambda^i \psi_i + \frac{1}{2} f_{ij}^k    \chi^i\chi^j  \chi^\dag_k + \frac{1}{2}h_{ij}^k \lambda^i\lambda^j\lambda^\dag_k + g_{ij}^k \chi^i\lambda^j\lambda^\dag_k +m_{ij}^k\chi^i\lambda^j \chi^\dag_k . 
    \end{gather*}
\end{assumption}

\begin{remark}
    Sufficient conditions for the BFV data to have the form described in Assumption \ref{ass:BFV_action} are the following:
    \begin{enumerate}
        \item The structure functions $f,h,g,m$ are constant on $F$.
        \item The structure functions satisfy for $\theta_1=f$, $\theta_2= h$, $\theta_3= g$ and $\theta_4=m$
            \begin{align} \label{e:parentheses_between_structure_functions}
                \{\theta_i, \theta_j\}_F=0.
            \end{align}
    \end{enumerate}
    We leave the proof of this statement to the reader.
\end{remark}

\begin{theorem}\label{thm:BFVcoisotropic}
    Let $C\hookrightarrow\Co \hookrightarrow F$ be a nested coisotropic embedding and denote the associated BFV (differential) graged manifold by 
    \[
    \F(F,C)\doteq F \times T^*W_\phi[1]\times T^*W_\psi[1]\simeq F \times T^*\mathbb{R}^{\ell_1}[1] \times T^*\mathbb{R}^{\ell_2}[1]
    \]
    Consider the following functions on $\F(F,C)$ 
    \[
        \mathsf{M}^\mu\doteq\langle \mu,\boldsymbol{\Phi}^\dag\rangle, \quad \mathsf{L}^\rho\doteq Q_C\mathsf{M}^\rho = \langle \rho, Q_C\boldsymbol{\Phi}^\dag\rangle, \quad \mathsf{J}^\rho \doteq \langle \rho,\boldsymbol{\Phi}\rangle ,
    \]
    where,\footnote{It is useful to think of $\boldsymbol{\Phi}$  and $\boldsymbol{\Phi}^\dag$ as taking values in the duals $(\mathbb{R}^{\ell_1})^*$ and $(\mathbb{R}^{\ell_1}[1])^*$ respectively.} denoting variables in $T^*W_\phi[1]\simeq T^*\mathbb{R}^{\ell_1}[1]$ by $(\chi,\chi^\dag)$ we write
    \[
    \begin{cases}
        \boldsymbol{\Phi}^\dag\colon \F(F,C) \to \mathbb{R}^{\ell_1}[-1], & \boldsymbol{\Phi}^\dag\doteq \chi^\dag,\\
        \boldsymbol{\Phi}\colon \F(F,C) \to \mathbb{R}^{\ell_1}, & \boldsymbol{\Phi}\doteq Q_C\chi^\dag\vert_{\chi^\dag=0} = Q_C\chi^\dag \mod \boldsymbol{\Phi}^\dag
    \end{cases}
    \]
    and setting $\rho\in \mathbb{R}^{\ell_1}[1]$ and $\mu\in \mathbb{R}^{\ell_1}[2]$, both $\mathsf{M}^\mu$ and $\mathsf{L}^\rho$ are degree 1. Denote by $\mathbb{I}_\circ$ the vanishing ideal generated by $\mathsf{J}^\rho,\mathsf{M}^\mu$. Then, $\mathbb{I}_\circ$ is a Poisson subalgebra of $C^\infty(\F(F,C))$, with
    \begin{gather*}
    \mathsf{J}^\rho=\mathsf{L}^\rho - \mathsf{M}^{\llbracket\rho,\chi\rrbracket + m(\rho,\lambda)}, \\
    \{\mathsf{M}^\mu,\mathsf{M}^\mu\} =0, \quad \{\mathsf{L}^\rho,\mathsf{L}^\rho\} = \mathsf{L}^{\llbracket\rho,\rho\rrbracket},\quad \{\mathsf{L}^\rho,\mathsf{M}^\mu\} = \mathsf{M}^{\llbracket\rho,\mu\rrbracket}, \\
    \{\mathsf{J}^\rho,\mathsf{J}^\rho\} = \mathsf{J}^{\llbracket{\rho,\rho\rrbracket}}, \quad \{\mathsf{J}^\rho,\mathsf{M}^\mu\}=0
    \end{gather*}

    \begin{align*}
        \llbracket\rho,\chi\rrbracket \doteq \rho^i\chi^jf_{ij}^k e_k & & m(\rho,\lambda)\doteq \rho^i\lambda^jm_{ij}^k e_k
    \end{align*}
    given a basis $\{e_k\}$ of $\mathbb{R}^{\ell_1}$, and ${X}_{\mathsf{J}^\rho}$ the Hamiltonian vector field of $\mathsf{J}^\rho$. Hence, denoting by $\CCo$ the subset defined by the vanishing ideal $\mathbb{I}_\circ$. Then $\CCo\hookrightarrow(\F(F,C),\omega_C)$ is a coisotropic submanifold, if smooth, and moreover $\mathbb{I}_\circ$ is a differential ideal\footnote{This shows in particular that $\CCo$ is coisotropic in a dg manifold, according to the stronger notion discussed in Remark \ref{rmk:dgcoiso}.} w.r.t. $Q_C$.
\end{theorem}

\begin{proof}
    We begin by observing that, using the notation introduced above
    \[
    Q_C\chi^\dag_i = \phi_i + f_{ij}^k  \chi^j  \chi^\dag_k + g_{ij}^k \lambda^j\lambda^\dag_k +m_{ij}^k\lambda^j \chi^\dag_k = (\underbrace{\phi_i  + g_{ij}^k \lambda^j\lambda^\dag_k}_{Q_C\chi^\dag_i\vert_{\chi^\dag=0}}) + (f_{ij}^k  \chi^j  \chi^\dag_k +m_{ij}^k\lambda^j \chi^\dag_k)
    \]
    hence for any $\rho\in \mathbb{R}^{\ell_1}[1]$, we have
    \[
    \mathsf{L}^\rho = \rho^i(Q_C\chi^\dag_i\vert_{\chi^\dag=0}) + \rho^i(f_{ij}^k  \chi^j  \chi^\dag_k +m_{ij}^k\lambda^j \chi^\dag_k)  = \mathsf{J}^\rho + \mathsf{M}^{\llbracket\rho,\chi\rrbracket + m(\rho,\lambda)}
    \]
    where $\llbracket\rho,\chi\rrbracket = f_{ij}^k \rho^i \chi^j e_k$, $m(\rho,\lambda)=m_{ij}^k\rho^i\lambda_{\psi^j} e_k$ and, denoting by $g(\rho,\lambda)\equiv g_{ij}^k\rho^i\lambda^j e_k$, we write
    \begin{align}\label{e:Expression_J_general}
        \mathsf{J}^\rho = \rho^i\phi_i + \rho^ig_{ij}^k\lambda^j\lambda^\dag_k = \langle\rho,\phi\rangle + \langle g(\rho,\lambda),\lambda^\dag\rangle.
    \end{align}

    Moreover, we observe that the Hamiltonian vector field ${X}_{\mathsf{M}^\mu}$ of $\mathsf{M}^\mu$ is the odd vector field
    \[
    \iota_{{X}_{\mathsf{M}^\mu}}\omega_C = d\mathsf{M}^\mu, \qquad {X}_{\mathsf{M}^\mu} = \mu\frac{\partial}{\partial \chi}, \qquad |X_{\mathsf{M}^\mu}| = 1,
    \]
    since the symplectic form on $\F(F,C)$ is 
    \[
    \omega_{C} = \omega + \sum_i d\chi^\dag_i \wedge d\chi^i + \sum_j d\lambda^\dag_j \wedge d\lambda^j, 
    \]
    with $\omega$ denoting the symplectic form on $F$. It is immediate to conclude that 
    \[
    \{\mathsf{M}^\mu,\mathsf{M}^\nu\}=0
    \]
    for any $\mu,\nu\in\mathbb{R}^{\ell_1}[k]$, for arbitrary $k$. (Note that this implies also that $\{\mathsf{M}^\mu,\mathsf{M}^\mu\}=0$ which is nontrivial for $\mu$ even, since in that case $\mathsf{M}^\mu$ is an odd function.) Similarly, we also have that for any function $\mathbf{G}\colon \F(F,C) \to (\mathbb{R}^{\ell_1}[k])^*\otimes \mathbb{R}^{\ell_1}[k]$ and for any $\nu\in \mathbb{R}^{\ell_1}[k]$ we have
    \[
    \{\mathsf{M}^\mu, \mathsf{M}^{\langle \mathbf{G},\nu\rangle}\} = \mathsf{M}^{\langle{X}_{\mathsf{M}^\mu} \mathbf{G},\nu\rangle}
    \]
    due to linearity. For example, for $\langle \mathbf{G},\rho\rangle=\llbracket\rho,\chi\rrbracket + m(\rho,\lambda)$ we easily compute 
    \[
    \{\mathsf{M}^\mu,\mathsf{M}^{\llbracket\rho,\chi\rrbracket + m(\rho,\lambda)}\} = \mathsf{M}^{{X}_{\mathsf{M}^\mu} (\llbracket\rho,\chi\rrbracket + m(\rho,\lambda))} = \mathsf{M}^{\llbracket\rho,\mu\rrbracket}
    \]
    
    By direct inspection we can also check that, choosing $\rho\in\mathbb{R}^{\ell_1}[1]$,
    \[
    \{\mathsf{M}^\mu,\mathsf{J}^\rho\} = {X}_{\mathsf{M}^\mu}(\mathsf{J}^\rho) = 0
    \]
    since $\mathsf{J}^\rho$ does not depend on $\chi$, from which we also conclude that
    \[
    \{\mathsf{M}^\mu,\mathsf{L}^\rho\} = \{\mathsf{M}^\mu,\mathsf{J}^\rho\} + \{\mathsf{M}^\mu,\mathsf{M}^{\llbracket\rho,\chi\rrbracket + m(\rho,\lambda)}\} = \{\mathsf{M}^\mu,\mathsf{M}^{\llbracket\rho,\chi\rrbracket + m(\rho,\lambda)}\} = \mathsf{M}^{\llbracket\rho,\mu\rrbracket},
    \]
    and using $Q^2=0$ we compute:
    \[
    \{\mathsf{L}^\rho,\mathsf{L}^\rho\} = Q\{\mathsf{M}^\rho,\mathsf{L}^\rho\} = Q\mathsf{M}^{\llbracket\rho,\rho\rrbracket} = \mathsf{L}^{\llbracket\rho,\rho\rrbracket}
    \]
    On the other hand, using \eqref{e:Expression_J_general} we have
    \begin{align*}
        \{\mathsf{J}^\rho,\mathsf{J}^\rho\}&= \{\langle\rho,\phi\rangle,\langle\rho,\phi\rangle\}_F + 2 \{\langle\rho,\phi\rangle, \langle g(\rho,\lambda),\lambda^\dag\rangle\}_F \\
        &\phantom{=}+ \{ \langle g(\rho,\lambda),\lambda^\dag\rangle, \langle g(\rho,\lambda),\lambda^\dag\rangle\}_F  + 2\langle g(\rho,g(\rho,\lambda)),\lambda^\dag\rangle\\
        &= \{\phi_i, \phi_j\}_F \rho^i \rho^j + 2 \{\phi_i,  g_{jm}^k\}\rho^i \rho^j \lambda^m \lambda^\dag_k  \\
        &\phantom{=}+ \{g_{in}^l, g_{jm}^k\}\rho^i \lambda^n \lambda^\dag_l \rho^j \lambda^m \lambda^\dag_k + 2g_{ij}^k \rho^i g_{lm}^j\rho^m \lambda^l \lambda^\dag_k.
    \end{align*}
    Now, using \eqref{e:pseudo_Jacobi_relation2}, \eqref{e:parentheses_between_structure_functions} and the definition of $f$, we get
    \begin{align*}
        \{\mathsf{J}^\rho,\mathsf{J}^\rho\}&= \rho^i \rho^j f_{ij}^k \phi_k + g^k_{lj} f^{j}_{im}\rho^i\rho^m \lambda^l \lambda^\dag_k\\
        &= \mathsf{J}^{\llbracket \rho, \rho \rrbracket}.
    \end{align*}
    We are left to show that $\mathbb{I}_\circ$ is stable under $Q_C$. We use the result above, $\mathsf{J}^\rho = \mathsf{L}^\rho - \mathsf{M}^{\llbracket\rho,\chi\rrbracket + m(\rho,\lambda)}$, twice, to show that $Q_C\mathbb{I}_\circ \subset \mathbb{I}_\circ$. Indeed, since $Q_C\mathsf{L}^\rho =  Q_C (Q_C\mathsf{M}^\rho)=0$
    \begin{multline}
    Q_C\mathsf{J}^\rho = Q_C \mathsf{L}^\rho - Q_C\left(\mathsf{M}^{\llbracket\rho,\chi\rrbracket + m(\rho,\lambda)}\right) =- \mathsf{M}^{Q_C\left(\llbracket\rho,\chi\rrbracket + m(\rho,\lambda)\right)} \pm \mathsf{L}^{\llbracket\rho,\chi\rrbracket + m(\rho,\lambda)}\\
    =- \mathsf{M}^{Q_C\left(\llbracket\rho,\chi\rrbracket + m(\rho,\lambda)\right)}\pm \mathsf{J}^{\llbracket\rho,\chi\rrbracket + m(\rho,\lambda)} \mp \mathsf{M}^{\llbracket\llbracket\rho,\chi\rrbracket + m(\rho,\lambda),\chi\rrbracket + m(\llbracket\rho,\chi\rrbracket + m(\rho,\lambda),\lambda)}\in \mathbb{I}_\circ.
    \end{multline}
    Similarly we compute
    \[
    Q_C\mathsf{M}^\mu = \mathsf{L}^\mu = \mathsf{J}^\mu - \mathsf{M}^{\llbracket\llbracket\mu,\chi\rrbracket + m(\mu,\lambda)}\in \mathbb{I}_\circ.
    \]
    
 \end{proof}

\begin{theorem}\label{thm:doubleBFV}
    Let $C\hookrightarrow\Co \hookrightarrow F$ be a nested coisotropic embedding, and assume that the coisotropic locus $\CCo$ is smooth, as defined in Theorem \ref{thm:BFVcoisotropic}. Assuming furthermore that its reduction by $\underline{\CCo}\doteq \CCo/\CCo^{\omega_C}$ is also smooth, we have that
    \[
    \mathsf{Body}(\underline{\CCo})\simeq \underline{\Co}. 
    \]
    If additionally the nested coisotropic embedding is nice, then
    \[\underline{\CCo} \simeq \F(\underline{\Co}, \Cres) = \underline{\Co} \times T^*W_{\mathrm{res}}[-1].
    \]
\end{theorem}
\begin{proof}
    It is immediate to note that the characteristic foliation of the constraint $\mathbf{\Phi}^\dag\equiv \chi^\dag=0$, generated by the vector field ${X}_{\mathsf{M}^\mu} = \langle\mu,\frac{\partial}{\partial \chi}\rangle$, reduces to a point (cross whatever is left from the reduction of the second constraint), since $\chi^\dag=0$ selects the zero section in the cotangent bundle of $W=W_\phi$, which is Lagrangian in it.

    The Hamiltonian vector field of $\mathsf{J}^\rho$ instead reads
    \[
    {X}_{\mathsf{J}^\rho} = \rho_i \{\phi_i,\cdot \}_F + \rho^i\lambda^j\lambda^\dag_k\{g_{ij}^k, \cdot\}_F + \rho^ig_{ij}^k\{\lambda^j\lambda^\dag_k, \cdot\}_{T^*W_{\psi}[-1]}
    \]
    This induces the system of ODE's:
    \begin{equation}\label{e:gradedODE}
    \begin{cases}
        \dot{\lambda}_{k} = \rho^ig_{ij}^k\lambda^j \\
        \dot{\lambda}^\dag_{j} = \rho^ig_{ij}^k\lambda^\dag_k\\
        \dot{p}_i = \rho^m\left(\frac{\partial \phi_m}{\partial{q}_i} + \lambda^j\lambda^\dag_k\frac{\partial g_{mj}^k}{\partial q_i}\right)\\
        \dot{q}^i = -\rho^m\left(\frac{\partial \phi_m}{\partial{p}_i} + \lambda^j\lambda^\dag_k\frac{\partial g_{mj}^k}{\partial p_i}\right)
    \end{cases}
    \end{equation}
    Suppose that a solution exists of the form $(p(t),q(t), \lambda(t), \lambda^\dag(t))$, where $(p(t),q(t))$ are general functions of all variables, that we can expand in power series of $\lambda^\dag$ (observe that this power series has a finite number of terms for degree reasons):
    \begin{equation}\label{e:gradedflowODE}
    \begin{cases}
        p(t) = p_{(0)} + \lambda_k^\dag p^k_{(1)} + \lambda_k^\dag\lambda_m^\dag p^{km}_{(2)} + \dots\\
        q(t) = q_{(0)} + \lambda_k^\dag q^k_{(1)} + \lambda_k^\dag\lambda_m^\dag q^{km}_{(2)} + \dots
    \end{cases}
    \end{equation}
    where $(p_{(i)},q_{(i)})$ denote degree $i$ functions of all variables except $\lambda^\dag$.
    Then, this is a solution of the ODE's \eqref{e:gradedODE} only if $(\dot{p},\dot{q})$ are of the form above. Observe that the only terms homogeneous of degree $0$ in $\lambda^\dag$ emerge from $(\dot{p}_{(0)},\dot{q}_{(0)})$, owing to the fact that $\dot{\lambda}^\dag\propto \lambda^\dag$, meaning that, expanding the equation in powers of $\lambda^\dag$, at lowest order one gets 
    \[
    \begin{cases}
        \dot{p}_{(0),i} = \rho^m\frac{\partial \phi_m}{\partial{q}_i}\\
        \dot{q}^i_{(0)} = -\rho^m\frac{\partial \phi_m}{\partial{p}_i}
    \end{cases}
    \]
    The ODE's \eqref{e:gradedflowODE} are then solved by
    \[
    (p(t),q(t)) = \left(p_{(0)}(t) + P, q_{(0)}(t) + Q\right)
    \]
    where $(P,Q)$ are functions of all the variables, at least linear in both $(\lambda,\lambda^\dag)$, and homogeneous of (total) degree zero. In other words, in degree zero there is a flow $(p_{(0)}(t), q_{(0)}(t))$, which is then "lifted" to $\F(F,C)$. This means that the above equations \eqref{e:gradedflowODE} define a morphism of the sheaf defining the graded symplectic manifold $\F(F,C)$, which covers a morphism of the sheaf of functions on the base $F$. Hence, we conclude that the body of $\underline{\CCo}$ coincides with $\underline{\Co}$. (See \cite{Vysoky} for a recent exposition of the theory of graded manifolds.)
    
    The nested coisotropic embedding is nice, by definition, when the $g_{ij}^k$ are constant. In this case, the system of ODE's simplifies to:

    \begin{equation}\label{e:simplifiedgradedODE}
    \begin{cases}
        \dot{\lambda}_{k} = \rho^ig_{ij}^k\lambda^j \\
        \dot{\lambda}^\dag_{j} = \rho^ig_{ij}^k\lambda^\dag_k\\
        \dot{p}_i = \rho^m \frac{\partial \phi_m}{\partial{q}_i} \\
        \dot{q}^i = -\rho^m\frac{\partial \phi_m}{\partial{p}_i} 
    \end{cases}
    \end{equation}

    The  equations for the variables in degree zero, which define the foliation in $F$, now only involve degree $0$ variables, while the ODE's for the $(\lambda,\lambda^\dag)$ variables represent the adjoint and coadjoint action of a lie algebra with structure constants $g_{ij}^k$.
    Explicitly one can parametrize the reduced space by flowing along $(p(t),q(t))$ and fix a representative in terms of a slice. This uniquely fixes $\rho$ for a given choice of representative of $\underline{\Co}$ inside $F$. Then we let the variables $(\lambda,\lambda^\dag)$ flow using the chosen $\rho$ to another element in $T^*W_{\mathrm{res}}$.
\end{proof}

\begin{remark}
    An instance of this reduction in the case of constant coefficients can be found in Appendix \ref{a:coisotropic_BFtoEH} for $BF$ theory.
\end{remark}

Given a coisotropic submanifold $\C$ of a graded symplectic manifold $\F'$ we can apply the BFV machinery, only this time in the graded setting, and build
\[
\BFV(\F',\C)=(\F(\F',\C), \omega_\C,Q_\C, S_\C).
\]

In the special case where $\F'$ is obtained from the BFV resolution of a coisotropic embedding $C\hookrightarrow F$ that factors through $C\hookrightarrow \Co\hookrightarrow F$, Theorem \ref{thm:doubleBFV} ensures that there exists a graded coisotropic submanifold $\CCo\subset \F(F,C)$, whose resolution can be resolved via BFV. Hence, the general procedure instructs us to build:
\begin{definition}[Double BFV data] \label{d:doubleBFVdata}
The \emph{double BFV data} associated to the nested coisotropic embedding $C\hookrightarrow \Co \hookrightarrow F$ is:
    \[
    \mathbb{BFV}(F,\Co,C)\doteq \BFV(\BFV(F,C),\CCo) =\left(\F(\F(F,C),\CCo),\omega_{\CCo},Q_{\CCo},S_{\CCo}\right) \doteq \left(\FF,\bbomega,\mathbb{Q},\mathbb{S}\right),
    \]
where we denoted $\FF$ as a shorthand for $\FF(F,\Co,C)\doteq\F(\F(F,C),\CCo)$ and $(\bbomega,\mathbb{Q},\mathbb{S})$ as shorthand for $(\omega_{\CCo},Q_{\CCo},S_{\CCo})$, while $\CCo$ is as in Theorem \ref{thm:doubleBFV}.
\end{definition}

Introducing coordinates $(\mu^\dag,\mu, \rho^\dag,\rho)$ in $\FF=\F(F,C)\times T^*\mathbb{R}^{\ell_1+\ell_2}$ so that (cf. Theorem \ref{thm:BFVcoisotropic}) 
\[
\mathbb{Q}\mu^\dag \doteq Q_C\chi^\dag \vert_{\chi^\dag=0} = \boldsymbol{\Phi}, \quad \mathbb{Q}\rho^\dag \doteq \chi^\dag = \boldsymbol{\Phi}^\dag,
\]
together with $\mathbb{S}\in C^\infty(\FF)$, defined as in equation \eqref{e:Rfunction} such that $\{\mathbb{S},\mathbb{S}\}=0$, providing a resolution of $\underline{\CCo}$. Specifically we have:
\begin{align*}
\mathbb{S} & \doteq  \mathsf{J}^\rho + \mathsf{M}^\mu + \frac12 \langle \llbracket \rho,\rho\rrbracket, \rho^\dag \rangle \\
&= \langle\rho,\boldsymbol{\Phi}\rangle + \langle \mu, \boldsymbol{\Phi}^\dag\rangle + \frac12 \langle \llbracket \rho,\rho\rrbracket, \rho^\dag \rangle \\
&= \langle \rho, Q_C\chi^\dag \vert_{\chi^\dag=0}\rangle + \langle\mu,\chi^\dag\rangle + \frac12 \langle \llbracket \rho,\rho\rrbracket, \rho^\dag \rangle.
\end{align*}

Recall that the original graded symplectic manifold is endowed with a function $S_C \in C^{\infty}(\F(F,C))$ satisfying the Classical Master Equation $\{S_C,S_C\}=0$ (with respect to $\omega_C$), this is the Hamiltonian function of the cohomological vector field $Q_C$, and it corresponds to the master functional for the BFV resolution of $\underline{C}$. We then look for an extension $\check{S}_C$ of $S_C$, i.e.\ a function in  $C^\infty(\FF(F,\Co,C))$ whose restriction to the zero section 
\[
\mathbb{0}\colon \F(F,C)\hookrightarrow \FF(F,\Co,C)=\F(F,C)\times T^*\mathbb{R}^{\ell_1+\ell_2}
\]
coincides with $S_C$, and such that 
\begin{equation}\label{e:projectableextension}
\{\mathbb{S},\check{S}_C\} = \mathbb{Q}(\check{S}_C) = 0.
\end{equation}

Note that, with respect to the internal grading of the double BFV space $\FF$, generated by $\mu,\rho$ and their dual variables, the function $\check{S}_C$ has degree zero. Then, we have that 
\[
\{\check{S}_C+ \mathbb{S},\check{S}_C+\mathbb{S}\} = \{\check{S}_C,\check{S}_C\},
\]
meaning that the sum $\check{S}_C+ \mathbb{S}$ satisfies the classical master equation if and only if the extension $\check{S}_C$ does. Suppose that the extension $\check{S}_C$ satisfies the classical master equation only up to $\mathbb{Q}$-exact terms, namely
\begin{equation}\label{e:projectableextensionCME}
\frac12\{\check{S}_C,\check{S}_C\} = \mathbb{Q}(f),
\end{equation}
for some function $f$, then, Equation \eqref{e:projectableextension} allows us to conclude that 
\[
\underline{S_{\CCo}}\equiv [\check{S}_C]_{\mathbb{Q}}\in H^0(C^\infty(\FF),\mathbb{Q}) \simeq C^\infty(\underline{\CCo})
\] 
and thus it is a function on $(\underline{\CCo},\underline{\omega_{\CCo}})$ that---in virtue of \eqref{e:projectableextensionCME}---satisfies the classical master equation: $\{\underline{S_{\CCo}},\underline{S_{\CCo}}\}=0$, where the Poisson bracket on cohomology is defined as
\[
\{\underline{S_{\CCo}},\underline{S_{\CCo}}\} \doteq [\{\check{S}_C + \mathbb{Q}(g), \check{S}_C + \mathbb{Q}(h)\}]_{\mathbb{Q}} = [\{\check{S}_C,\check{S}_C\} + \mathbb{Q}(\{\check{S}_C,g\} + \{\check{S}_C,h\})]_{\mathbb{Q}} = 0,
\]
meaning that $\underline{Q_{\CCo}}\doteq \{\underline{S_{\CCo}},\cdot\}$ is cohomological. 

As we see now, in some particular cases, one has a stronger result, namely that $\check{S}_C$ strictly satisfies the CME.

\begin{theorem}[Resolution commutes with reduction] \label{thm:doubleBFVreduced}
    Let $\mathbb{BFV}(F,\Co,C)\equiv \left(\FF,\bbomega,\mathbb{Q},\mathbb{S}\right)$ be the double BFV data associated to the nested coisotropic embedding $C\hookrightarrow \Co \hookrightarrow F$, and let $\BFV(F,C)\equiv\left(\F(F,C),\omega_C,Q_C,S_C\right)$ be the BFV data associated to the coisotropic embedding $C\hookrightarrow F$. If the nested coisotropic embedding is of Lie type, then there exists an extension $\check{S}_C\in C^\infty(\FF(F,\Co,C))$ s.t. $\check{S}_C\vert_\mathbb{0}=S_C$, given by
    \begin{align}
        \label{e:generalSC}
    \check{S}_C = S_C &+  \langle \mu,  \rho^\dag \rangle+\langle \llbracket \chi, \mu \rrbracket, \mu^{\dag} \rangle +\langle \llbracket \chi, \rho \rrbracket, \rho^{\dag} \rangle + \langle m(\lambda, \rho), \rho^\dag \rangle \\
    &+ \langle m(\lambda, \mu), \mu^\dag \rangle +\langle m(\chi, g(\rho,\lambda)), \mu^\dag \rangle ,\nonumber
    \end{align}
    which is such that
    \begin{enumerate}
        \item it is a $\mathbb{Q}$-cocycle: $\mathbb{Q}(\check{S}_C) = 0$,
        \item it satisfies the classical master equation $\{\check{S}_C,\check{S}_C\}= 0$.
    \end{enumerate}
    Furthermore, there is an isomorphism of Hamiltonian dg-manifolds:\footnote{We have shortened $\F_{\mathrm{res}}\equiv \F(\underline{\Co},\Cres)$.}
    \[
    \underline{\mathbb{BFV}(F,\Co,C)}\doteq\left(\underline{\CCo},\underline{\omega_{\CCo}}, \underline{Q_{\CCo}}, \underline{S_{\CCo}}\right)
    \simeq \left(\F_{\mathrm{res}},\ \omega_{\mathrm{res}},\  Q_{\mathrm{res}}, S_{\mathrm{res}}\right) \doteq \BFV(\underline{\Co},\Cres),
    \]
    where $\underline{\mathbb{BFV}(F,\Co,C)}$ is the reduction of the coisotropic embedding $\CCo \hookrightarrow \F(F,C)$ in the BFV resolution of the coisotropic embedding $C\hookrightarrow F$, and $\BFV(\underline{\Co},\Cres)$ is the BFV resolution of the reduction of the coisotropic embedding $\Cres\hookrightarrow \underline{\Co}$ (cf. Equation \ref{e:BFVSmallRes}).
\end{theorem}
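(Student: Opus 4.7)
The plan is to verify the three assertions about $\check{S}_C$ by direct bracket computation and then combine Theorem \ref{thm:doubleBFV} with an explicit identification of the cohomology class $[\check{S}_C]_{\mathbb{Q}}$ to establish the isomorphism of Hamiltonian dg manifolds. The restriction $\check{S}_C|_{\mathbb{0}} = S_C$ is immediate, as every summand of $\check{S}_C - S_C$ in \eqref{e:generalSC} contains at least one of the new coordinates $\mu, \mu^\dag, \rho, \rho^\dag$.

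For the cocycle condition $\mathbb{Q}(\check{S}_C) = \{\mathbb{S}, \check{S}_C\} = 0$, I would split $\mathbb{S} = \mathsf{J}^\rho + \mathsf{M}^\mu + \tfrac12\langle\llbracket\rho,\rho\rrbracket,\rho^\dag\rangle$ and compute its bracket against each of the six summands of $\check{S}_C$. The contributions of $\{\mathsf{J}^\rho, S_C\}$ and $\{\mathsf{M}^\mu, S_C\}$ yield expressions built from the original structure functions; these are cancelled respectively by the Chevalley--Eilenberg-type corrections $\langle\llbracket\chi,\rho\rrbracket,\rho^\dag\rangle$, $\langle m(\lambda,\rho),\rho^\dag\rangle$ and $\langle\llbracket\chi,\mu\rrbracket,\mu^\dag\rangle$, $\langle m(\lambda,\mu),\mu^\dag\rangle$, together with the contribution of $\tfrac12\langle\llbracket\rho,\rho\rrbracket,\rho^\dag\rangle$ acting on $\langle\mu,\rho^\dag\rangle$. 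The constancy of $f,g,m,h$ is essential: it annihilates every Poisson derivative $\{\phi_\cdot,\cdot\}, \{\psi_\cdot,\cdot\}$ appearing in Lemma \ref{l:Relations_structure_constants}, reducing those identities to purely algebraic Jacobi-type relations tailored exactly to the required cancellations, so that no higher-order corrections in $\rho^\dag, \mu^\dag$ are needed (cf.\ Remark \ref{rmk:generalBFV}). The classical master equation up to $\mathbb{Q}$-exact terms, $\{\check{S}_C,\check{S}_C\} = \mathbb{Q}(f)$, is then verified by the same strategy: $\{S_C,S_C\} = 0$ is given, and the mixed brackets of $S_C$ with the correction terms, together with the self-brackets of the latter, reorganise through Lemma \ref{l:Relations_structure_constants} into an expression of the form $\mathbb{Q}(f)$ for an explicit polynomial $f$ in the new ghosts.

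For the isomorphism $\underline{\mathbb{BFV}(F,\Co,C)} \simeq \BFV(\underline{\Co},\Cres)$, I build on Theorem \ref{thm:doubleBFV}, which already identifies the underlying graded symplectic manifolds $\underline{\CCo} \simeq \F_{\mathrm{res}} = \underline{\Co}\times T^*W_{\mathrm{res}}[-1]$ and transports $\underline{\omega_{\CCo}}$ to $\omega_{\mathrm{res}}$. It then remains to match $\underline{S_{\CCo}} = [\check{S}_C]_{\mathbb{Q}}$ with $S_{\mathrm{res}}$. Restricting $\check{S}_C$ to $\CCo$, the defining equations $\chi^\dag = 0$ and $\phi + g\lambda\lambda^\dag = 0$ force $\chi^i\phi_i + g_{ij}^k\chi^i\lambda^j\lambda^\dag_k$ to vanish, while the $f$-term $\tfrac12 f_{ij}^k\chi^i\chi^j\chi^\dag_k$ and the $m$-term $m_{ij}^k\chi^i\lambda^j\chi^\dag_k$ are killed by $\chi^\dag = 0$; each of the correction terms in \eqref{e:generalSC} is $\mathbb{Q}$-exact on $\CCo$ and so drops in cohomology. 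What survives is $\int_\Sigma \lambda^i\psi_i + \tfrac12 h_{ij}^k\lambda^i\lambda^j\lambda^\dag_k$, and by basicity of $\psi$ under the flow of $\phi$ (the content of $\iota_\phi^*\psi_j = \pi_\phi^*\underline{\psi}_j$) this descends along $\pi_\phi$ to the corresponding expression in $\underline{\psi}$ on $\F_{\mathrm{res}}$, which is precisely $S_{\mathrm{res}}$. The induced cohomological vector field $\underline{Q_{\CCo}}$ then matches $Q_{\mathrm{res}}$ automatically, being Hamiltonian for the same function.

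The main obstacle is the combinatorial bookkeeping of the bracket computations underlying $\mathbb{Q}(\check{S}_C) = 0$ and the classical master equation: with four groups of graded new variables, the sign conventions for ghosts-of-ghosts must be handled systematically, and every term produced must be shown to pair with a specific corrective summand. The constant-coefficient hypothesis is not a mere simplification but the precise condition under which the identities of Lemma \ref{l:Relations_structure_constants} degenerate to algebraic Jacobi-type relations, making the minimal ansatz \eqref{e:generalSC} sufficient without invoking the higher-order $\rho^\dag, \mu^\dag$ corrections of Remark \ref{rmk:generalBFV}.
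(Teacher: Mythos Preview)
Your treatment of (1) and (2) matches the paper's: both verify $\mathbb{Q}(\check{S}_C)=0$ and $\{\check{S}_C,\check{S}_C\}=0$ (the paper in fact obtains the latter on the nose, not merely up to $\mathbb{Q}$-exact terms) by direct bracket computation, invoking the constant-coefficient simplification of Lemma~\ref{l:Relations_structure_constants}.

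The identification of $[\check{S}_C]_{\mathbb{Q}}$ with $S_{\mathrm{res}}$ is where your argument diverges and has a gap. Your claim that ``each of the correction terms in \eqref{e:generalSC} is $\mathbb{Q}$-exact on $\CCo$'' is not correct: for instance $\langle\mu,\rho^\dag\rangle$ is \emph{not} $\mathbb{Q}$-exact by itself, since
\[
\mathbb{Q}\bigl(\langle\chi,\rho^\dag\rangle\bigr)
=\langle\mu,\rho^\dag\rangle - \langle\chi,\phi\rangle - \langle g(\chi,\lambda),\lambda^\dag\rangle - \langle\llbracket\chi,\rho\rrbracket,\rho^\dag\rangle
\]
ties it inextricably to pieces of $S_C$. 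The correction terms do vanish upon restriction along the zero section $\mathbb{0}$, but that alone does not justify discarding them in cohomology without an argument that naive restriction along $\bbiota_\circ$ computes the class for an arbitrary $\mathbb{Q}$-cocycle. The paper instead exhibits an explicit primitive
\[
Y=\langle\chi,\rho^\dag\rangle + \tfrac12\langle\llbracket\chi,\chi\rrbracket,\mu^\dag\rangle + \langle m(\chi,\lambda),\mu^\dag\rangle
\]
and computes
\[
\check{S}_C + \mathbb{Q}(Y)=\langle\lambda,\psi\rangle + \tfrac12\langle h(\lambda,\lambda),\lambda^\dag\rangle + \langle m(\lambda,\rho),\rho^\dag\rangle\doteq\wt{S}_{\mathrm{res}}.
\]
This representative then pulls back along $\bbiota_\circ$ to $\langle\lambda,\psi\rangle + \tfrac12\langle h(\lambda,\lambda),\lambda^\dag\rangle$, which is manifestly $\bbpi_\circ^* S_{\mathrm{res}}$ under the identification of Theorem~\ref{thm:doubleBFV}. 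Your endpoint agrees with this, but the mechanism you invoke to reach it is not sound; you should either produce such a $Y$ or supply a general argument that restriction along $\bbiota_\circ$ recovers the cohomology class.
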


\begin{proof}
    We begin by writing down $\mathbb{Q}$:
    \begin{align*}
    \mathbb{Q}(q)&=\left\langle\rho, \frac{\partial\phi}{\partial p}\right\rangle, &  \mathbb{Q}(\lambda)&=g(\rho,\lambda), & \mathbb{Q}(\chi)&=\mu,\\  
    \mathbb{Q}(p)&=\left\langle\rho, \frac{\partial\phi}{\partial q}\right\rangle, & \mathbb{Q}(\lambda^\dag)&=g^*(\rho,\lambda^\dag), &\mathbb{Q}(\chi^\dag)&=0,\\ \mathbb{Q}(\mu) &= 0, & \mathbb{Q}(\rho) &= \frac12 \llbracket\rho,\rho\rrbracket, \\ \mathbb{Q}(\mu^\dag) &= \chi^\dag, & \mathbb{Q}(\rho^\dag) &= \phi + g^*(\lambda,\lambda^\dag) + f^*(\rho,\rho^\dag).
    \end{align*}
    Then, if the coisotropic embedding is of Lie type we have constant structure functions, and using the identities among them provided by Lemma \ref{l:Relations_structure_constants}, it is a straightforward calculation to show that $\check{S}_C$ as given in \eqref{e:generalSC} is a $\mathbb{Q}$-cocycle:
    \[
    \mathbb{Q}(\check{S}_C) = 0.
    \]
    Similarly one shows that $\frac12\{\check{S}_C,\check{S}_C\}= 0$.

    The fact that $\underline{\CCo}$ and $\F_{\mathrm{res}}=\F(\underline{\Co},\Cres)$ are diffeomorphic as graded symplectic manifolds is the content of Theorem \ref{thm:doubleBFV}. Then we just need to show that they the respective Hamiltonian dg-data is equivalent. The residual BFV resolution has master action
    \[
    S_{\mathrm{res}} = \langle\underline{\lambda},\underline{\psi}\rangle + \frac12 \langle h(\underline{\lambda},\underline{\lambda}),\underline{\lambda}^\dag\rangle,
    \]
    where $(\underline{\lambda},\underline{\lambda}^\dag)$ denote coordinates in $T^*W_{\mathrm{res}}$ and the $\underline{\psi}$ are the residual generators of $\Cres$.
    
    Then, considering the following diagram:
    \[
    \xymatrix{
        & \FF = \F(F,C) \times T^*\mathbb{R}^{\ell_1+\ell_2}\\
        & \CCo \ar[u]_{\bbiota_\circ} \ar[d]^{\bbpi_\circ}\\
    \F(\underline{\Co},\Cres) & \ar[l]^-\sim \underline{\CCo}
    }
    \]
    one can lift $S_{\mathrm{res}}$ from $\F(\underline{\Co},\Cres)$ to a function $\hat{S}_{\mathrm{res}}$ on $\CCo$, and compare it to the restriction to $\CCo$ of functions in $\FF$.
    
    We use this construction to show that there exists $Y\in C^\infty(\FF)$ such that
    \[
    \bbiota^*_\circ(\check{S}_C + \mathbb{Q}(Y)) = \bbpi^*_\circ S_{\mathrm{res}}.
    \]    
    (Recall that $[\check{S}_C]_{\mathbb{Q}}\in C^\infty(\underline{\CCo})$ by construction of the resolution of $\underline{\CCo}$.)
    Indeed, picking 
    \[
    Y=\langle\chi,\rho^\dag\rangle + \frac12 \langle \llbracket\chi,\chi\rrbracket,\mu^\dag\rangle + \langle m(\chi,\lambda),\mu^\dag\rangle
    \]
    one can easily show that (assuming again constant structure functions and using Lemma \ref{l:Relations_structure_constants})
    \[
    \check{S}_C + \mathbb{Q}(Y) = \langle{\lambda},{\psi}\rangle + \frac12 \langle h({\lambda},{\lambda}),{\lambda}^\dag\rangle + \langle m(\lambda,\rho),\rho^\dag\rangle +\langle m(\chi, g(\rho,\lambda)), \mu^\dag \rangle\doteq \wt{S}_{\mathrm{res}}.
    \]
    Then one has that
    \[
    \bbiota^*_\circ \wt{S}_{\mathrm{res}} =  \langle{\lambda},{\psi}\rangle + \frac12 \langle h({\lambda},{\lambda}),{\lambda}^\dag\rangle
    \]
    which is manifestly the pullback along $\bbpi_\circ$ of $S_{\mathrm{res}}$, since $\underline{\lambda}$ and $\underline{\lambda}^{\dag}$ are identified with the equivalence classes $[\underline{\lambda}]$ and $[\lambda^{\dag}]$ defined by the ODE's \eqref{e:simplifiedgradedODE}.
\end{proof}

\begin{remark}[Generalisation]\label{rmk:generalisation}
    {We believe that a generalisation of the previous theorem should be available for all nested coisotropic embeddings, not just those of Lie type. Conjecturally, though, one will have a slightly weaker condition on the master equation for $\check{S}_C$, namely that it is satisfied up to $\mathbb{Q}$-exact terms, as discussed above.}
    
    Observe that on $C^\infty(\FF)$ there exist both $\mathbb{Q}= \{\mathbb{S},\cdot\}$ as well as a lift $\check{Q}_C$ of the cohomological vector field $Q_C$ on $\F$. This corresponds to a choice of $\check{S}_C$ via $\check{Q}_C=\{\check{S}_C,\cdot\}$. Moreover, there is the total vector field that one gets by adding the two functions together $Q_{tot}\doteq\{\check{S}_C+\mathbb{S},\cdot\}$.
    
    For nested coisotropic embeddings of Lie type one has that both $\check{Q}_C$ and $Q_{tot}$ are cohomological. In a more general scenario, however, neither will necessarily be, if the lift of $S_C$ is chosen as above. If we wanted to find a total differential on $\FF$, we would need to perform some sort of homological perturbation, which is to be expected, since our answer for $\check{S}_C$ uses the assumption of Lie-type coisotropic embeddings.
    
    In our scenario, but conjecturally in greater generality, one has that 
    \[
    H^0(\FF,\mathbb{Q}_\mathrm{tot})\simeq H^0(H^0(\FF,\mathbb{Q}),Q_{\mathbb{\CCo}})\simeq H^0(\F_{\mathrm{res}},Q_{\mathrm{res}})\simeq H^0(\F,Q_C)\simeq C^\infty(\underline{C}).
    \]
\end{remark}

The double resolution procedure is summarised by the following diagram, where wavy arrows denote BFV resolution of the large triangle diagram to their left (cf. Equation \ref{e:BFVdiagrams}):
\begin{equation}
    \xymatrix{
     F & \ar@{_{(}->}[l] \Co\ar[d] & \ar@{_{(}->}[l]\ar[d]C \ar@{~>}[r]& \F(F,C) & \ar@{_{(}->}[l]\CCo \ar[d] \ar@{~>}[r]& \F(\F(F,C),\CCo)\equiv\FF(F,\Co,C) \\
     & \underline{\Co} & \ar@{_{(}->}[l]\ar[d] \Cres \ar@{~>}[r] &\F(\underline{\Co},\Cres) \ar[r]^-{\sim}& \underline{\CCo} &\\
     && \underline{\Cres}\simeq \underline{C} & &
    \ar@{.}^{BFV_{tot}}(-5,5);(42,5)
    \ar@{.}(-5,5);(-5,-33)
    \ar@{.}(-5,-33);(42,-33)
    \ar@{.}(42,5);(42,-33)
    \ar@{.}^{BFV_{\underline{\psi}}}(5,-10);(41,-10)
    \ar@{.}(5,-10);(5,-32)
    \ar@{.}(5,-32);(41,-32)
    \ar@{.}(41,-10);(41,-32)
    \ar@{.}^{BFV_{double}}(48,5);(85,5)
    \ar@{.}(48,5);(48,-20)
    \ar@{.}(48,-20);(85,-20)
    \ar@{.}(85,5);(85,-20)
    }
\end{equation}

\subsection{Quantisation in double BFV}\label{s:quantisation_BFVonBFV}

We want now to quantise a BFV theory such as the one described in Section \ref{s:doubleBFV}, i.e.\ we are given some BFV data $\BFV(F,C)=(\F(F,C),\omega_C,Q_C,S_C)$ with $\{S_C,S_C\}=0$, and the double BFV data $\mathbb{BFV}(F,\Co,C)=(\FF(F,\Co,C),\bbomega,\mathbb{Q},\mathbb{S})$ associated with the resolution of the nested coisotropic embeddings $C\hookrightarrow \Co \hookrightarrow F$, together with an extension $\check{S}_C$ of $S_C$ such that $\frac12\{\check{S}_C+ \mathbb{S},\check{S}_C+\mathbb{S}\}=\frac12\{\check{S}_C,\check{S}_C\}=0$ (or, more generally $\frac12\{\check{S}_C+ \mathbb{S},\check{S}_C+\mathbb{S}\}=\frac12\{\check{S}_C,\check{S}_C\}=\mathbb{Q}(f)$ for some $f$, see Remark \ref{rmk:generalisation}). Let us employ the shorthand $\FF(F,\Co,C)\equiv \FF$ and $\F(F,C)\equiv \F$.

Suppose that we have a quantisation of the double BFV data $\mathbb{BFV}(F,\Co,C)$, i.e.\ a graded vector space $\mathbb{V}$, as well as a Lie algebra morphism 
\begin{equation}\label{e:quantummorphism}
\mathfrak{q}\colon \mathcal{A}\subset C^\infty(\FF)\to \mathrm{End}(\mathbb{V})
\end{equation}
from a suitable subalgebra $\mathcal{A}$ onto a suitable space of operators over $\mathbb{V}$, endowed with a coboundary operator $[\mathbb{\Omega},\cdot]$, i.e.\ $[\mathbb{\Omega},\mathbb{\Omega}]=\mathbb{\Omega}{}^2=0$. Often this can be obtained as the geometric quantisation of $\FF$, and $\mathbb{\Omega}$ is the quantisation of the function $\mathbb{S}$, so that the cohomology in degree zero of the operator $[\mathbb{\Omega},\cdot]$ is a quantisation of $C^\infty(\underline{\CCo})$, which is the cohomology in degree zero of $\mathbb{Q}\equiv\{\mathbb{S},\cdot\}$.

Note, however, that one obtains the graded vector space $\mathbb{V}_\circ\doteq H^0(\mathbb{V},\mathbb{\Omega})$, thought of as a quantisation of $\underline{\CCo}\simeq \F(\underline{\Co},\Cres)$, but with no differential. However, the BFV quantisation of the BFV data $\BFV(\underline{\Co},\Cres)=\left(\F(\underline{\Co},\Cres),\omega_{\mathrm{res}},Q_{\mathrm{res}},S_{\mathrm{res}}\right)$ should come equipped with a coboundary operator on $\mathrm{End}(\mathbb{V}_\circ)$. Normally, this is expected to arise as the (geometric) quantisation of the function $S_{\mathrm{res}}\in C^\infty(\F(\underline{\Co},\Cres))$, but via double BFV we can obtain one through an operator $\check{\Omega}_C\in \mathrm{End}(\mathbb{V})$ obtained as $\check{\Omega}_C\doteq \mathfrak{q}(\check{S}_C)$ such that 
\begin{subequations}\label{e:conditions_Omega_descends}
    \begin{align}\label{e:conditions_Omega_descends1}
    [\check{\Omega}_C, \mathbb{\Omega}]=0 \\
    \check{\Omega}_C^2\equiv[\check{\Omega}_C,\check{\Omega}_C] = A \mathbb{\Omega} + \mathbb{\Omega}B    \label{e:conditions_Omega_descends2}
\end{align}
\end{subequations}
for some operators $A,B$. 

Indeed, the first condition guarantees that $\check{\Omega}_C$ descends to $[\check{\Omega}_C]_{\mathbb{\Omega}}$ acting on  $\mathbb{V}_\circ\equiv H^0(\mathbb{V},\mathbb{\Omega})$, and the second condition implies that its $\mathbb{\Omega}$-cohomology class squares to zero: $[\check{\Omega}_C]_{\mathbb{\Omega}}^2=0$. Then, a quantisation of the BFV data resolving the reduction $ \underline{\CCo}$ is given by the couple $\left(\mathbb{V}_\circ, [\check{\Omega}_C]_{\mathbb{\Omega}}\right)$ where $\check{\Omega}_C$ is a quantisation of $\check{S}_C$, hence, $[\check{\Omega}_C]_{\mathbb{\Omega}}$ can be viewed as a quantisation of $S_{\mathrm{res}}$. (Note that this was proven explicitly only for nested coisotropic embeddings of the Lie type. However, our main example is of this type.)

\begin{theorem}[Quantisation of double BFV data]\label{thm:doubleBFVquant}
    Let $C\hookrightarrow \Co \hookrightarrow F$ be a nested coisotropic embedding {of Lie type}, and $\FF(F,\Co,C)$ its double BFV resolution. If the Lie algebra morphism $\mathfrak{q}$ of Equation \eqref{e:quantummorphism} exists such that $\mathbb{\Omega}^2=0$, and if $\check{\Omega}_C$ exists so that Equations \eqref{e:conditions_Omega_descends} are satisfied, then a quantisation of $\underline{C}$ is given as
    \[
    H^0\left(H^0\left(\mathbb{V},\mathbb{\Omega}\right),[\check{\Omega}_C]_{\mathbb{\Omega}}\right).
    \]
\end{theorem}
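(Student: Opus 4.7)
The plan is to verify, in order, three things: (i) that $\check{\Omega}_C$ descends to a well-defined endomorphism $[\check{\Omega}_C]_{\mathbb{\Omega}}$ of $\mathbb{V}_\circ\doteq H^0(\mathbb{V},\mathbb{\Omega})$, (ii) that this induced endomorphism squares to zero, so that $(\mathbb{V}_\circ,[\check{\Omega}_C]_{\mathbb{\Omega}})$ is a bona fide cochain complex, and (iii) that its degree-zero cohomology may indeed be interpreted as a quantisation of $\underline{C}$. Parts (i) and (ii) are immediate algebraic consequences of conditions \eqref{e:conditions_Omega_descends1} and \eqref{e:conditions_Omega_descends2}; part (iii) requires invoking the classical double BFV picture together with the quantum BFV prescription of Section \ref{s:quantumBFVstandard}.

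For (i), the commutation $[\check{\Omega}_C,\mathbb{\Omega}]=0$ means that $\check{\Omega}_C$ preserves $\ker(\mathbb{\Omega})$ and sends $\mathrm{Im}(\mathbb{\Omega})$ into itself, so it defines an operator on the quotient $\mathbb{V}_\circ$. For (ii), given any $\mathbb{\Omega}$-cocycle $v$, condition \eqref{e:conditions_Omega_descends2} gives
\[
\check{\Omega}_C^2 v \;=\; A\,\mathbb{\Omega} v + \mathbb{\Omega}\,B v \;=\; \mathbb{\Omega}\,B v \;\in\; \mathrm{Im}(\mathbb{\Omega}),
\]
so that $[\check{\Omega}_C]_{\mathbb{\Omega}}^2 = [\check{\Omega}_C^2]_{\mathbb{\Omega}} = 0$ on $\mathbb{V}_\circ$. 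These two steps are essentially tautological.

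The substantive part is (iii). By the quantum BFV paradigm applied to $\mathbb{BFV}(F,\Co,C)$, the space $\mathbb{V}_\circ = H^0(\mathbb{V},\mathbb{\Omega})$ is by construction a quantisation of $C^\infty(\underline{\CCo})$. Theorem \ref{thm:doubleBFVreduced} identifies $\underline{\mathbb{BFV}(F,\Co,C)}$ with $\BFV(\underline{\Co},\Cres)=(\F(\underline{\Co},\Cres),\omega_{\mathrm{res}},Q_{\mathrm{res}},S_{\mathrm{res}})$, and shows that $\underline{S_{\CCo}}$ is represented by the $\mathbb{Q}$-cohomology class $[\check{S}_C]_{\mathbb{Q}}$. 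Since $\check{\Omega}_C = \mathfrak{q}(\check{S}_C)$, its descent $[\check{\Omega}_C]_{\mathbb{\Omega}}$ to $\mathbb{V}_\circ$ thus plays the role of the quantum BFV charge associated to $S_{\mathrm{res}}$ on the residual resolution. Applying the quantum BFV prescription a second time, now to $\BFV(\underline{\Co},\Cres)$, the space of physical quantum states associated with $\underline{\Cres}\simeq\underline{C}$ is the degree-zero cohomology $H^0(\mathbb{V}_\circ,[\check{\Omega}_C]_{\mathbb{\Omega}})$, as claimed.

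The main conceptual obstacle lies precisely in part (iii): one must know that the quantisation map $\mathfrak{q}$ is compatible with the passage to cohomology, i.e.\ that the induced operator $[\mathfrak{q}(\check{S}_C)]_{\mathbb{\Omega}}$ on $\mathbb{V}_\circ$ legitimately represents the quantisation of the class $[\check{S}_C]_{\mathbb{Q}} \equiv S_{\mathrm{res}}$ in the sense of Section \ref{s:quantumBFVstandard}. This is precisely what conditions \eqref{e:conditions_Omega_descends1}--\eqref{e:conditions_Omega_descends2} encode at the quantum level: they are the quantum shadow of the classical statements $\{\check{S}_C,\mathbb{S}\}=0$ and $\{\check{S}_C,\check{S}_C\}=0$ modulo $\mathbb{Q}$-exact terms, established in Theorem \ref{thm:doubleBFVreduced}. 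Once these hypotheses are granted, the theorem reduces to the quantum analogue of "resolution commutes with reduction", and no further computation is needed.
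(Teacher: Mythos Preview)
Your proposal is correct and follows essentially the same approach as the paper. In fact, the paper does not give a formal proof of this theorem at all: the statement is presented as a summary of the discussion immediately preceding it, which already explains that condition \eqref{e:conditions_Omega_descends1} makes $\check{\Omega}_C$ descend to $\mathbb{V}_\circ$, that condition \eqref{e:conditions_Omega_descends2} makes the descended operator square to zero, and that $[\check{\Omega}_C]_{\mathbb{\Omega}}$ is then to be regarded as a quantisation of $S_{\mathrm{res}}$ via the identification of Theorem~\ref{thm:doubleBFVreduced}. Your write-up simply organises and spells out these same points in more detail, so there is nothing to correct or compare.
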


The following lemma gives an useful tool in order to prove the equations \eqref{e:conditions_Omega_descends1} and \eqref{e:conditions_Omega_descends2} for some classes of operator.

\begin{lemma}\label{l:operator_and_commutator}
    Let $\check{\Omega}'_C = \check{\Omega}_C + [\mathbb{\Omega},Z]$. Then $\check{\Omega}'_C$ satisfies \eqref{e:conditions_Omega_descends1} and \eqref{e:conditions_Omega_descends2} if and only if $\check{\Omega}$ does, with 
    \[
    A' = A+ 2[Z, \check{\Omega}_C]+[Z,[\mathbb{\Omega},Z]], \quad B' = B+ 2[Z, \check{\Omega}_C]+[Z,[\mathbb{\Omega},Z]].
    \]
\end{lemma}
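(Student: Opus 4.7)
The plan is to verify both assertions by expanding $[\check{\Omega}', \mathbb{\Omega}]$ and $[\check{\Omega}', \check{\Omega}']$ using the graded Jacobi identity, treating $Z$ as an operator of ghost degree $0$ (which is forced by the requirement that $\check{\Omega}' = \check{\Omega} + [\mathbb{\Omega}, Z]$ carry the same degree as $\check{\Omega}$, namely $1$) and by systematically invoking $\mathbb{\Omega}^2 = 0$. The reverse implication in each case comes for free from the symmetry of the relation under the swap $\check{\Omega} \leftrightarrow \check{\Omega}'$, $Z \mapsto -Z$, which makes the affine map $(A, B) \mapsto (A', B')$ invertible.

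For condition \eqref{e:conditions_Omega_descends1}, I would simply write
\[
[\check{\Omega}', \mathbb{\Omega}] = [\check{\Omega}, \mathbb{\Omega}] + [[\mathbb{\Omega}, Z], \mathbb{\Omega}],
\]
and note that the derivation $[\mathbb{\Omega}, -]$ squares to zero as a consequence of $\mathbb{\Omega}^2 = 0$, so that $[[\mathbb{\Omega}, Z], \mathbb{\Omega}] = -[\mathbb{\Omega}, [\mathbb{\Omega}, Z]] = 0$ (the sign coming from graded antisymmetry between two odd operators). Hence $[\check{\Omega}', \mathbb{\Omega}]$ and $[\check{\Omega}, \mathbb{\Omega}]$ literally coincide, and \eqref{e:conditions_Omega_descends1} for either operator is equivalent to the same for the other.

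For condition \eqref{e:conditions_Omega_descends2}, I would expand
\[
[\check{\Omega}', \check{\Omega}'] = [\check{\Omega}, \check{\Omega}] + 2[\check{\Omega}, [\mathbb{\Omega}, Z]] + [[\mathbb{\Omega}, Z], [\mathbb{\Omega}, Z]]
\]
and recast each cross term in the symmetric form $X\mathbb{\Omega} + \mathbb{\Omega} X$. Using graded Jacobi, the assumed vanishing $[\check{\Omega}, \mathbb{\Omega}] = 0$, and the sign $[\check{\Omega}, Z] = -[Z, \check{\Omega}]$, one obtains
\[
[\check{\Omega}, [\mathbb{\Omega}, Z]] = [\mathbb{\Omega}, [Z, \check{\Omega}]] = \mathbb{\Omega}[Z, \check{\Omega}] + [Z, \check{\Omega}]\mathbb{\Omega},
\]
the final step being an anticommutator since $[Z, \check{\Omega}]$ is odd. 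A second application of Jacobi together with $[[\mathbb{\Omega}, Z], \mathbb{\Omega}] = 0$ (from the previous paragraph) yields
\[
[[\mathbb{\Omega}, Z], [\mathbb{\Omega}, Z]] = [\mathbb{\Omega}, [Z, [\mathbb{\Omega}, Z]]] = \mathbb{\Omega}[Z, [\mathbb{\Omega}, Z]] + [Z, [\mathbb{\Omega}, Z]]\mathbb{\Omega}.
\]
Collecting terms and grouping by whether $\mathbb{\Omega}$ stands on the right or the left, one reads off $A' = A + 2[Z, \check{\Omega}] + [Z, [\mathbb{\Omega}, Z]]$ and the identical expression for $B'$, exactly as stated.

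The only genuine technical hurdle is sign bookkeeping in the graded Jacobi identity when an even $Z$ interacts with the odd operators $\check{\Omega}$ and $\mathbb{\Omega}$; once degrees are fixed, the entire argument reduces to a sequence of mechanical rearrangements, with no conceptual obstacle expected.
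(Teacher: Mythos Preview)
Your argument is essentially the paper's: both expand the graded commutators, invoke Jacobi together with $\mathbb{\Omega}^2=0$, and rewrite the cross terms as $\mathbb{\Omega}X+X\mathbb{\Omega}$; you additionally make the ``only if'' direction explicit via the symmetry $Z\mapsto -Z$, which the paper leaves implicit. One small slip in your parenthetical: for two odd operators the graded commutator is \emph{symmetric}, so $[[\mathbb{\Omega},Z],\mathbb{\Omega}]=+[\mathbb{\Omega},[\mathbb{\Omega},Z]]$ rather than $-$; this is harmless since the term vanishes regardless.
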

\begin{proof}
    We drop the $C$ subscript. Let us suppose that $\check{\Omega}$ satisfies \eqref{e:conditions_Omega_descends1} and \eqref{e:conditions_Omega_descends2}. Then we have:
    \begin{align*}
        [\check{\Omega}', \mathbb{\Omega}]= [\check{\Omega}, \mathbb{\Omega}]+ [[\mathbb{\Omega},Z], \mathbb{\Omega}] = \frac{1}{2}[[\mathbb{\Omega},\mathbb{\Omega} ],Z] =0
    \end{align*}
    where we used the (graded) Jacobi identity. For the second equation we have
    \begin{align*}
        (\check{\Omega}')^2 &\equiv [\check{\Omega}',\check{\Omega}']= [\check{\Omega},\check{\Omega}] + 2[\check{\Omega}, [\mathbb{\Omega},Z]] + [[\mathbb{\Omega},Z],[\mathbb{\Omega},Z]]\\
        &= A\mathbb{\Omega} + \mathbb{\Omega}B + 2[\mathbb{\Omega},[Z, \check{\Omega}]] + 2[Z, [\check{\Omega},\mathbb{\Omega}]]+ [\mathbb{\Omega},[Z,[\mathbb{\Omega},Z]]]+[Z,[[\mathbb{\Omega},Z],\mathbb{\Omega}]] \\
        &= A\mathbb{\Omega} + \mathbb{\Omega}B + \mathbb{\Omega} (2[Z, \check{\Omega}]+[Z,[\mathbb{\Omega},Z]])+ (2[Z, \check{\Omega}]+[Z,[\mathbb{\Omega},Z]])\mathbb{\Omega} + \frac{1}{2}[Z,[[\mathbb{\Omega},\mathbb{\Omega}],Z]]\\
        &= \mathbb{\Omega} (B + 2[Z, \check{\Omega}]+[Z,[\mathbb{\Omega},Z]])+ (A + 2[Z, \check{\Omega}]+[Z,[\mathbb{\Omega},Z]])\mathbb{\Omega}\\
        &= \mathbb{\Omega}B' + A' \mathbb{\Omega}.
    \end{align*}
\end{proof}

\begin{remark}
    Lemma \ref{l:operator_and_commutator} helps to understand condition \eqref{e:conditions_Omega_descends2}. Indeed, not all elements in $[\check{\Omega}]$ square to zero, i.e.\ such property is not invariant under change of representative. Indeed, if we start with an operator $\check\Omega_0$ that squares to zero, we can add an element in the image of $\mathbb{\Omega}$ and still remain in the same cohomology class, while the new $\check{\Omega}' \doteq \check{\Omega}_0 + [\mathbb{\Omega},Z]$ will not square to zero, unless $2[Z,\check{\Omega}_0] + [Z,[\mathbb{\Omega},Z]] = 0$.
    Furthermore, since we would like $\check{\Omega}$ to be a quantisation of a function defined in the $\mathbb{Q}$-cohomology,\footnote{Recall that $\mathbb{Q}=Q_{\CCo}$.} it makes sense that this should be defined only up to the quantisation of an element in the image of $\mathbb{Q}$, which correspond to the commutator of $\mathbb{\Omega}$ and some operator $Z$ after quantisation. Then, Lemma \ref{l:operator_and_commutator} tells us that the quantisation is only required to square to zero in the $\mathbb{\Omega}$-cohomology, which is exactly the content of Equation \eqref{e:conditions_Omega_descends2}.
\end{remark}

\subsection{Example: linear constraints} \label{s:example_linearconstraints}
As a simple example we can consider a finite dimensional case with linear constraints, where one can explicitly show that quantisation commutes with reduction, which in turn commutes with resolution.

Let $V$ be a finite dimensional $\mathbb{R}$-vector space and $ F= V^{\times 6}$, with $\omega= \sum_{i=1}^3 da_i db_i$ where $a_i, b_i$ are coordinates on $V^{\times 6}$. We define linear functions $\phi= \chi b_2$ and $\psi = \lambda b_3$, which clearly form a Poisson subalgebra and an associative ideal, so that their vanishing locus is coisotropic. 
We can then resolve $C= C_{\phi} \cap C_{\psi}$ using the BFV procedure as in Assumption \ref{ass:BFV_action} and get, on $\F(C,F)=F \times T^*\mathbb{R}^2$:
\begin{align*}
   \omega_C&= \omega + d \chi d\chi^\dag + d\lambda d\lambda^\dag \\
   S_c& = \chi b_2 + \lambda b_3
\end{align*}
and $Q_c\in \mathfrak{X}(\F)^1$ has the only non zero components 
\begin{align*}
    Q_C \chi^\dag &= b_2 & Q_C \lambda^\dag &= b_3  \\
    Q_C a_2 &= \chi & Q_C a_3 &= \lambda.
\end{align*} 
Following Theorem \ref{thm:BFVcoisotropic} we define on $\mathbb{F}(C,F)$
\begin{align*}
    M^{\mu} &= \mu \chi^\dag  & L^{\rho} = J^{\rho} = \rho b_2.
\end{align*}
All the brackets of these functions with respect to $\omega_C$ are again 0, and hence they define a coisotropic submanifold $\mathbb{C}_\circ$. It is then simple to check that $\mathsf{Body}(\underline{\CCo})\simeq \underline{\Co}$ and that $\underline{\Co}= V^{\times 4}$ with 
\begin{align*}
    \omega_{\underline{\Co}} = da_1 db_1 + da_3 db_3.
\end{align*}
In $\underline{\Co}$, $\Cres$ is defined by $\psi$ and its BFV resolution is given by
$\omega_{\mathrm{res}}= \omega_{\underline{\Co}} + d\lambda d \lambda^\dag$, 
$S_{\mathrm{res}}= \lambda b_3$ and $Q_{\mathrm{res}}$ has only nonzero components $Q_{\mathrm{res}}a_3 = \lambda$ and $Q_{\mathrm{res}}\lambda^\dag = b_3$.

The double BFV data associated to the nested coisotropic embedding $C\hookrightarrow \Co \hookrightarrow F$ is given by $\left(\FF,\bbomega,\mathbb{Q},\mathbb{S}\right)$ where
\begin{align*}
    \bbomega & = \omega_C + d\mu d\mu^\dag + d \rho d\rho^\dag\\
    \mathbb{S} & = \mu \chi^\dag + \rho b_2
\end{align*}
and $\mathbb{Q}$ has the only non zero components 
\begin{align*}
    \mathbb{Q} \chi &= \mu & \mathbb{Q} \rho^\dag &= b_2  \\
    \mathbb{Q} a_2 &= \rho & \mathbb{Q} \mu^\dag &= \chi^\dag.
\end{align*} 
Furthermore $\check{S}_C = S_C + \mu \rho^\dag$ satisfies $\mathbb{Q}(\check{S}_C)=0$ and $\{\check{S}_C,\check{S}_C\}=0$.

It is now easy to check that $\check{S}_C + \mathbb{Q}Y = \widetilde{S}_{\mathrm{res}}$ with $Y= \chi\rho^\dag$ and that
    \[
    H^0(\FF,\mathbb{Q}_\mathrm{tot})\simeq H^0(H^0(\FF,\mathbb{Q}),Q_{\mathbb{\CCo}})\simeq H^0(\F_{\mathrm{res}},Q_{\mathrm{res}})\simeq H^0(\F,Q_C)\simeq C^\infty(\underline{C}).
    \]
    where in this case $\underline{C}= V^{\times 2}$ with symplectic form $\underline{\omega}= da_1 db_1.$

    We can now quantize this simple theory. There are three possible quantizations, one quantizing directly $\underline{C}$, the second quantizing the BFV theory $(\mathbb{F}, \omega_C, S_C, Q_C)$ and the third using the double BFV approach.

    In the first case, we have to quantize $\underline{C} \simeq V^{\times 2}$ with symplectic form $\omega= da_1 db_1$. Choosing the Lagrangian foliation generated by $\frac{d}{db_1}$, the quantization of $\underline{C}$ will be $\mathbb{V}_{\underline{C}} = C^{\infty}(\mathbb{B})$ where $\mathbb{B} \simeq V$ generated by $a_1$, and the quantization map will be
    \begin{align*}
        f(a_1, b_1) \mapsto f\left(a_1, i\hbar \frac{d}{da_1}\right).
    \end{align*}

    In the second case we quantize $(\mathbb{F}, \omega_C, S_C, Q_C)$, and the quantization of $\underline{C}$ is recovered by $H^0(\mathbb{V}_C, \Omega_C)$ where now, choosing a compatible polarization,  $\mathbb{V}_C$ is generated by $a_i, \chi, \lambda$, the quantization map is given by 
    \begin{align*}
        f(a_i, b_i, \chi, \chi^\dag, \lambda, \lambda^\dag) \mapsto f\left(a_i, i\hbar \frac{d}{da_i}, \chi, i\hbar \frac{d}{d\chi}, \lambda, i\hbar \frac{d}{d\lambda} \right)
    \end{align*}
    and 
    \begin{align*}
        \Omega_C = i\hbar \chi  \frac{d}{da_2} + i\hbar \lambda  \frac{d}{da_3}.
    \end{align*}
    Hence degree zero $\Omega_C$-closed functions of $a_i,\lambda$ and $\chi$ must be independent of $a_2,a_3$ via $\chi \frac{\partial f}{\partial a_2}\equiv0\equiv\lambda \frac{\partial f}{\partial a_3}$, and are also independent of $\lambda$ and $\chi$ as there are no degree $-1$ variables. Hence, they descend to functions in $C^\infty(\mathbb{B})$, i.e. $H^0(\mathbb{V}_C, \Omega_C)\simeq C^\infty(\mathbb{B})$.

    Lastly, let us use the double BFV approach. Now $\mathbb{V}$ is givebn by functions in the variables $a_i, \chi, \lambda, \mu, \rho$, and the quantization map is given by 
    \begin{align*}
        f(a_i, b_i, \chi, \chi^\dag, \lambda, \lambda^\dag, \mu, \mu^\dag, \rho, \rho^\dag) \mapsto f\left(a_i, i\hbar \frac{d}{da_i}, \chi, i\hbar \frac{d}{d\chi}, \lambda, i\hbar \frac{d}{d\lambda}, \mu, i\hbar \frac{d}{d\mu},\rho, i\hbar \frac{d}{d\rho}\right).
    \end{align*}
    The operator $\mathbb{\Omega}$ is 
    \begin{align*}
        \mathbb{\Omega} = i\hbar \mu  \frac{d}{d\chi} + i\hbar \rho  \frac{d}{da_2}
    \end{align*}
    and 
    \begin{align*}
        \check{\Omega}_C = i\hbar \chi  \frac{d}{da_2} + i\hbar \lambda  \frac{d}{da_3} + i\hbar \mu  \frac{d}{d \rho}.
    \end{align*}
    Specifically, one can easily show that
    \[
    \mathbb{\Omega}_{\mathrm{tot}} = \mathbb{\Omega} + \check{\Omega}_C = i\hbar \mu  \frac{d}{d\chi} + i\hbar \rho  \frac{d}{da_2} + i\hbar \chi  \frac{d}{da_2} + i\hbar \lambda  \frac{d}{da_3} + i\hbar \mu  \frac{d}{d \rho}
    \]
    squares to zero, and its cohomology in degree zero, using the same argument used above returns functions of $a_1$ alone (observe that $(\rho + \chi)\frac{\partial f}{\partial a_2}=0$ implies $\frac{\partial f}{\partial a_2}=0$ because $\rho$ and $\chi$ are homogeneous of different total degree).
    
    We can also note that $H^0(\mathbb{V},\mathbb{\Omega})$ is isomorphic to functions of $a_1,a_3,\lambda$ only, with residual differential (recall that the degree-zero cohomology in $\mathbb{\Omega}$ is w.r.t. the grading defined by $\mu,\rho$)
    \[
    [\check{\Omega}_C]_{\mathbb{\Omega}} =  i\hbar \lambda  \frac{d}{da_3},
    \] 
    whence we also get $H^0\left(H^0\left(\mathbb{V},\mathbb{\Omega}\right),[\check{\Omega}_C]_{\mathbb{\Omega}}\right)\simeq C^\infty(\mathbb{B}) \simeq H^0(\mathbb{V},\mathbb{\Omega}_{\mathrm{tot}})$.

\section{BFV data for gravity theories} \label{s:BFVtheories_gravity}
In this section we follow the guideline presented in the previous sections for finite dimensional systems and apply it to the case of field theory, where spaces are infinite dimensional nuclear Fr\'echet manifold. We cannot apply the results directly because of this, and this will require us to demonstrate the various steps of the construction for the specific case directly. Our (motivating) example is three dimensional General Relativity, viewed from several different angles. (Note that we will use the notation $\mathcal{F}$ to denote spaces of fields, as opposed to $F$, as is customary.)

The quantisation of the theory (in the sense of a vector space of physical states) is well defined using the construction we developed in this paper. Note that to check the validity of Theorem \ref{thm:doubleBFVquant} in this case would be tantamount to quantising the reduced phase space of GR, which is well known to be a very challenging task. Hence, we propose an alternative route, which is mathematically viable and motivated by a theorem that holds in finite dimensions.

\subsection{BFV data for Einstein--Hilbert (EH) theory} \label{s:BFVtheory_EH}
In this article we will consider three field theories: Einstein--Hilbert theory, which is a theory of metrics under the action of diffeomorphisms, Palatini--Cartan theory, which is a theory of coframes and $SO(2,1)$ connections under the action of the semidirect product of the Lorentz group and diffeomorphisms, and $BF$ theory, which is a theory of Lie algebra-valued differential forms, invariant under the shift by covariantly-exact forms. The goal of this section is to introduce them at a classical and BV-BFV level and to establish the relations between them.

A full description of the BV-BFV ``enhancement'' of the classical Einstein--Hilbert theory has been done for a general dimension $d>2$ in \cite{CS2016b}. We recall here the most important results (for another take on the BFV data associated with EH theory see also \cite{BSW}).

We denote by $\BFV_{EH}=(\F_{EH}, \varpi_{EH}, Q_{EH}, S_{EH})$ the BFV data associated to Einstein--Hilbert theory, where the space of BFV fields $\F_{EH}$ is the graded $0$-symplectic manifold\footnote{In the remainder of the paper, cotangent fibres will be given by sections of the ``densitised dual bundle''.}
\begin{equation}\label{e:boundaryfields_EH}
\F_{EH} = T^*\underbrace{\left(S_{nd}^2(T\Sigma) \times \mathfrak{X}[1](\Sigma) \times C^\infty[1](\Sigma)\right)}_{\{{g}, \txi{},\txi{n}\}},
\end{equation}
with $S^2_{nd}(\Sigma)$ denoting the space of non-degenerate symmetric tensor fields of rank two (also known as co-metrics), endowed with the canonical exact symplectic structure:
\begin{equation}\label{e:symplecticstructure_EH}
	\varpi_{EH} = \delta \alpha_{EH}
	=
	\delta \int_{\Sigma} \left(\langle{\Pi}, \delta{g} \rangle + \langle\phip,\delta\txi{}\rangle + \langle\phin,\delta\txi{n}\rangle\right),
\end{equation}
where 
\begin{align}\label{e:fiberfields_EH}
    \{\Pi,\phip,\phin\}\in \left(S^2(T^*\Sigma)\oplus \Omega^1[-1](\Sigma) \oplus C^\infty[-1](\Sigma)\right)\otimes\mathrm{Dens}(\Sigma)
\end{align}
denote variables in the cotangent fiber, respectively conjugate to $\{{g},\txi{},\txi{n}\}$. We stress that here we are considering cometrics ${g}$, so that ${g}^{-1}\in \mathcal{R}(\Sigma)$ is a Riemannian metric.\footnote{This picture can be enhanced to the case where ${g}^{-1}$ is Lorentzian. The two cases correspond to ${g}=g_{M}\vert_\Sigma$ being the restriction to a spacelike or timelike hypersurface in some ambient Lorentzian manifold $(M,g)$. See \cite{CS2016b}.} 

The BFV action $S_{EH}$ is  a functional of degree 1 on $\F_{EH}$, given by the local expression
\begin{align}\label{ADMBoundaction}
S_{EH}=&\int_{\Sigma} \left\{ H_n\txi{n} + \langle{\Pi}, L_{\txi{}}{g}\rangle  + {\phin}L_{\txi{}}\txi{n} - {g}(\phip,d\txi{n})\txi{n} + \left\langle\phip,\frac12[\txi{},\txi{}]\right\rangle\right\}
\end{align}
where we denoted by $\mathrm{Tr}_{g}[{\Pi}^2]={g}^{\mu\nu}{g}^{\rho\sigma}{\Pi}_{\nu\rho}{\Pi}_{\mu\sigma}$ and $\mathrm{Tr}_{g}{\Pi}={g}^{\mu\nu}{\Pi}_{\mu\nu}$ and
\begin{equation}\label{Hamiltonianconstraint}
H_n({g},{\Pi}) = \left(\frac{1}{{\mathrm{vol}_g}}\left(\mathrm{Tr}_{g}[{\Pi}^2] - \frac{1}{d-1}\mathrm{Tr}_{g}[{\Pi}]^2\right) + {\mathrm{vol}_g}\left(R^{\Sigma} -2\Lambda\right)\right)
\end{equation}
where $R^{\Sigma}$  is the trace of the Ricci tensor with respect to the metric ${g}^{-1}$, the cosmological constant is $\Lambda\in \mathbb{R}$ and $\mathrm{vol}_g=\sqrt{\mathsf{g}}\,\mathrm{vol}$, where $\sqrt{\mathsf{g}}$ is the square root of the determinant of the metric $g^{-1}$. Finally, $Q_{EH}$ is the Hamiltonian vector field of $S_{EH}$ with respect to $\varpi_{EH}$.

In some of the computations in the next sections it will be useful to have an expression of the boundary action and symplectic form with indices w.r.t. a local chart spelled-out\footnote{Note that the superscript ${}^n$ is not a space-time index, as there are no transverse directions. It is reminiscent of the bulk-to-boundary induction procedure, but it has here only a symbolic meaning.} (with a slight abuse of notation to denote with the same symbol both a density valued tensor and its chart components against the fixed euclidean volume form $\mathrm{vol}$, see \cite[Remark 3.7]{BSW} for a detailed discussion)
\begin{align}
    S_{EH}=&\int_{\Sigma}
    \left\lbrace
        \frac{1}{\sqrt{\mathsf{g}}} \left({\Pi}^{\mu\nu}{\Pi}_{\mu\nu} - \frac{1}{d-1}{\Pi}^2 \right) 
        + \sqrt{\mathsf{g}}\left(R^\Sigma -2\Lambda\right) 
        + \partial_{\mu}\left(\txi{\mu}{\phin}\right) 
        - {g}^{\mu\nu}{\varphi}_{\nu}\partial_{\mu}\txi{n}
    \right\rbrace \txi{n} \mathrm{vol} \nonumber\\ 
    & + \int_{\Sigma}
    \left\lbrace
        - 2\partial_{\rho}\left({g}^{\rho \sigma}{\Pi}_{\sigma \mu}\right) 
        - (\partial_{\mu}{g}^{\rho \sigma}){\Pi}_{\rho \sigma} 
        + \partial_{\rho} \left(\txi{\rho}{\varphi}_{\mu}\right) 
    \right\rbrace \txi{\mu}\mathrm{vol}.\label{e:ADMBoundaction}
\end{align}
The corresponding symplectic form is:
\begin{align*}
    \varpi_{EH}=\int_{\Sigma}\left(\langle\delta {g},\delta \Pi\rangle + \langle \delta\zeta,\delta \varphi\rangle\right)=\int_{\Sigma} \left(\delta {g}^{\mu\nu}\delta{\Pi}_{\mu\nu} + \delta\txi{\rho}\delta{\varphi}_{\rho}+ \delta\txi{n}\delta{\varphi}_{n}\right)\mathrm{vol}.
\end{align*}

\begin{remark}
    The BFV data described here arises from the application of the general procedure outlined in the previous sections to\footnote{We denote spaces of fields with calligraphic letters to emphasise that they are infinite dimensional.} $\mathcal{F}_{EH}=T^*S_{nd}^2(T\Sigma)$ to resolve the reduction of $\mathcal{C}_{EH}\hookrightarrow\mathcal{F}_{EH}$, defined as the vanishing locus of $H_n$ and $H_{\mu}\doteq\Pi_{\rho\sigma}\partial_{\mu} {g}^{\rho\sigma} - \partial_\rho({g}^{\rho\sigma}\Pi_{\sigma\mu})$. 
    \[
    \BFV_{EH}\doteq \BFV(\mathcal{F}_{EH}, \mathcal{C}_{EH}) = (\F_{EH}, \varpi_{EH}, Q_{EH}, S_{EH})
    \]

    (Note that $\mathcal{C}_{EH}$ is recovered from the BFV data presented above by setting $H_n\txi{n} = 0$ for all $\txi{n}$ and $\langle\Pi,L_{\txi{}} {g}\rangle = 0$ for all $\txi{}$ up to a boundary term. Then, the space of BFV fields is $\F_{EH} \doteq \F(\mathcal{F}_{EH}, \mathcal{C}_{EH})$, and the rest of the data is constructed accordingly. We use $\varpi$ as a symbol for symplectic forms hereinafter, in order to avoid clashes with the symbol used for connection forms later on.
\end{remark}

\subsection{BFV data for \texorpdfstring{$BF$}{BF} theory} \label{s:BFVtheory_BF}
Consider an orientable $3$-dimensional manifold $M$ endowed with a trivial vector bundle $\mathcal{V}\to M$ whose fibres $V$ are equipped with Minkowski metric $\eta$.
The classical fields of $BF$ theory are $B \in \Omega^1_{nd}(M, \mathcal{V})$, where $\Omega^1_{nd}(M, \mathcal{V})$ denotes the space of bundle isomorphisms $TM\to \mathcal{V}$, also called ``triads'' or coframes, and a principal $SO(2,1)$ connection $A \in \mathcal{A}_P$. Note that we have an identification of $\mathfrak{so}(2,1)\simeq \wedge^2V\simeq V^*$. Hence, we can think of $A\vert_U$ as a (local) Lie-algebra valued one-form $A\vert_U\in\Omega^1(U,\wedge^2\mathcal{V})$, for $U\subset M$. For simplicity, we will assume principal bundles to be trivial, so that this local expression can be globalised.

In this article we are interested in the BFV theory associated to $BF$ theory. This is obtained by looking at a (compact) codimension 1 submanifold $\Sigma\hookrightarrow M$ (without boundary), by looking at the restriction of the above mentioned fields to $\Sigma$, which defines the space (of degree 0 fields on the codimension 1 stratum) $\mathcal{F}_{BF}\doteq\Omega^1_{nd}(\Sigma,\mathcal{V}) \times \mathcal{A}_{P_\Sigma}$, where $P_\Sigma$ denotes the induced principal bundle on $\Sigma$, together with a submanifold $\mathcal{C}_{BF}\hookrightarrow \mathcal{F}_{BF}$ given by flat $\wedge^2\mathcal{V}$-valued connections on $\Sigma$ and covariantly constant restricted $B$ fields.
The BFV procedure then leads to the following definition. (We refer to \cite{CaSc2019} for a complete description of nondegenerate $BF$ theory in three dimensions.)
\begin{definition}\label{d:BFV_of_BFtheory}
The BFV data for $BF$ theory is given by the quadruple 
$$\BFV_{BF}\doteq\BFV(\mathcal{F}_{BF},\mathcal{C}_{BF})=\left(\F_{BF},\varpi_{BF},Q_{BF},S_{BF}\right),$$ 
where the BFV space of fields can be written as  
$$\F_{BF} \doteq \F(\mathcal{F}_{BF},\mathcal{C}_{BF}) = \Omega^1_{nd}(\Sigma, \mathcal{V}) \oplus \mathcal{A}_{P_\Sigma} \oplus T^* \left(\Omega^0[1]( \Sigma, \wedge^2\mathcal{V})\oplus \Omega^0[1]( \Sigma, \wedge^1\mathcal{V})\right),
$$
and we denote the fields by 
\begin{align}\label{e:boundaryfields_BF}
 A &\in \Omega^1(\Sigma,\wedge^2\mathcal{V}), & \chi &\in \Omega^0[1]( \Sigma, \wedge^2\mathcal{V}), &  B^\dag &\in \Omega^2[-1]( \Sigma, \wedge^2\mathcal{V})\\
 B &\in \Omega^1_{nd}(\Sigma, \mathcal{V}), & \tau &\in \Omega^0[1]( \Sigma, \wedge^1\mathcal{V}), & A^\dag &\in \Omega^2[-1]( \Sigma, \wedge^1\mathcal{V}). \nonumber
\end{align}
The corresponding BFV symplectic form and BFV action respectively read\footnote{Observe, however, that in general this is not a cotangent bundle, and only in the case of an induced trivial principal bundle on $\Sigma$ we can look at the structure as defining a ``cotangent bundle'' over triads.}
\begin{align}\label{e:BF_sympl_form}
    \varpi_{BF} &= \trintl{\Sigma} \delta B \delta A + \delta \chi \delta A^\dag + \delta \tau \delta B^\dag\\
    S_{BF} &= \trintl{\Sigma} \chi d_A B + \tau F_A +   \tau[\chi, B^\dag] + \frac{1}{2}[\chi, \chi] A^\dag,\label{e:BF_action}
\end{align}
where the trace signs denotes that we are canonically\footnote{This does not require a choice, as explained in \cite[Footnote 5]{CCS2020}.} identifying top forms on $V$ with $\mathbb{R}$, and
\begin{align}\label{e:coisotropic_submanifold_BF}
    \mathcal{C}_{BF} = \{(A,B)\in\mathcal{F}_{BF} \ | \ F_A=0,\ d_AB = 0\}.
\end{align}

Finally, $Q_{BF}$ is the Hamiltonian vector field of $S_{BF}$.
\end{definition}

\begin{remark}
    A quick computation shows that the components of $Q_{BF}$ are as follows:
    \begin{align*}
        Q_{BF}(B) &= [\chi, B] + d_A \tau & Q_{BF}(\tau) &= [\chi, \tau] & Q_{BF}({A^\dag})&= [\chi, A^\dag] + d_A B + [\tau, B^\dag]\\
        Q_{BF}(A) &= d_A \chi & Q_{BF}(\chi) &= \frac{1}{2}[\chi, \chi] & Q_{BF}({B^\dag}) &= [\chi, B^\dag] + F_A.
    \end{align*}
\end{remark}

\begin{remark}
    The requirement that $B$ be nondegenerate as a map $B\colon TM \to \mathcal{V}$, when restricted to $\Sigma$ is tantamount to asking that the image of $B$ be a two-dimensional subspace on the fibers of $\mathcal{V}|_{\Sigma}$.
\end{remark}

\begin{remark}[$BF$ nested coisotropic data]\label{rmk:Gauss}
    Observe that the constraint set for $BF$ theory is composed of flat connections and covariantly constant $B$-fields. We can thus look at the nested embedding
    \[
    \mathcal{C}_{BF}\hookrightarrow \mathcal{C}_{\mathrm{Gauss}} \hookrightarrow \mathcal{F}_{BF}
    \]
    where
    \begin{align}\label{e:gaussian_coiso_BF}
        \mathcal{C}_{\mathrm{Gauss}} \doteq \{(A,B)\in\mathcal{F}_{BF}\ |\ d_AB = 0\}.
    \end{align}

    This is sometimes denoted the Gauss' constraint, and it generates (as a momentum map) $SO(2,1)$ Gauge transformations.
\end{remark}

\subsection{BFV data for Palatini--Cartan (PC) theory} \label{s:BFVtheory_PC}
Another gauge-theoretic model that describes general relativity goes under the name of Palatini--Cartan theory\footnote{There is an unfortunate disagreement in the literature concerning nomenclature of theories. See \cite[Nomenclature]{CS2019}.} The BV-BFV description of three-dimensional Palatini--Cartan theory has been explored in \cite{CaSc2019}. 

Note that, differently from the Einstein--Hilbert case, the BFV theory for PC theory differs in $d=3$ and $d>3$, and the construction is significantly more involved (see \cite{CS2017,CCS2020}) for the latter. In this article we will restrict to $d=3$, where the theory is strongly equivalent to the \emph{nondegenerate} sector of $BF$ theory (\cite{CaSc2019} building on \cite{CSS2017}) and recall here the relevant results.

\begin{proposition}[\cite{CaSc2019} Proposition 21]\label{prop:BFVdata}
The BFV data for PC theory in dimension $d=3$, denoted $\BFV_{PC}\doteq \BFV(F_{PC},\mathcal{C}_{PC})=(\F_{PC},\omega_{PC},Q_{PC},S_{PC})$ are: 
\begin{itemize}
\item The BFV space of fields, given by the total space of the bundle
\begin{equation}
\F_{PC} \longrightarrow \Omega_{nd}^1(\Sigma, \mathcal{V}),
\end{equation}
with local trivialisation on an open $\mathcal{U} \subset \Omega_{nd}^1(\Sigma, \mathcal{V})$
\begin{equation}\label{LoctrivF1}
\F_{PC}\simeq \mathcal{U} \times \Omega^1( \Sigma, \wedge^2\mathcal{V})\oplus T^* \left(\Omega^0[1]( \Sigma, \wedge^2\mathcal{V})\oplus \mathfrak{X}[1](\Sigma) \oplus C^\infty[1](\Sigma)\right),
\end{equation}
while $F_{PC}\simeq\mathcal{U}\times \Omega^1( \Sigma, \wedge^2\mathcal{V})$, and fields denoted by\footnote{Here we deliberately use a different notation for the generators $(\xi,\xi^n)$ of PC theory and $(\txi{},\txi{n})$ of EH theory, to distinguish them at a glance.}
\begin{gather*}
\te\in \mathcal{U}, \quad \tom\in\Omega^1(\Sigma,\wedge^2\mathcal{V}), \quad \text{in degree zero},\\ 
\tc\in\Omega^0[1](\Sigma,\wedge^2\mathcal{V}), \quad \xi\in\mathfrak{X}[1](\Sigma),\quad \xi^n\in C^\infty[1](\Sigma), \quad \text{in degree one},\\ 
\tom^\dag\in\Omega^2[-1](\Sigma,\mathcal{V}), \quad \te^\dag\in\Omega^{2}[-1](\Sigma,\wedge^2 \mathcal{V}),\quad  \text{in degree minus one}
\end{gather*}
together with a fixed vector field $\epsilon_n \in \Gamma(\mathcal{V})$, completing the image of elements $\te\in\mathcal{U}$ to a basis of  $\mathcal{V}$, in terms of which we write
\[
\mathcal{C}_{PC}=\{(e,\omega)\in F_{PC}\ |\ d_\omega e =0, \ e_a F_\omega = \epsilon_n F_\omega = 0\} \hookrightarrow F_{PC};
\]
\item The codimension-$1$ one-form, symplectic form and action functional
\begin{subequations}\label{GR_BFV_data}\begin{align}
\alpha_{PC}
    &= \trintl{\Sigma} - \te \delta \tom+ \tom^\dag \delta \tc -\te^{\dag} \epsilon_n \delta \xi^n- (\iota_{\delta \xi} \te) \te^{\dag}+ \iota_{\xi}\tom^\dag \delta \tom,\\\label{BoundarytwoformGR}
\varpi_{PC}
    &= \trintl{\Sigma} - \delta\te \delta \tom+  \delta\tom^\dag \delta \tc - \delta\te^{\dag} \epsilon_n \delta \xi^n+ \iota_{\delta \xi} \delta({\te_\bullet\otimes} \te^{\dag})+ \delta(\iota_{\xi}\tom^\dag )\delta \tom,\\\label{BoundaryactionGR}
S_{PC}
    &= \trintl{\Sigma} - \iota_{\xi} \te F_{\tom} -\epsilon_n \xi^n F_{\tom} - \tc d_{\tom} \te + \frac12 [\tc,\tc]\tom^{\dag} +\frac12 \iota_{\xi} \iota_{\xi} F_{\tom}\tom^\dag+ \frac12 \iota_{[\xi,\xi]}\te\te^\dag \nonumber\\ 
        &\phantom{\trintl{\Sigma}}+\tc d_{\tom}( \iota_{\xi} \tom^\dag)+ L_{\xi}^{\tom} (\epsilon_n \xi^n) \te^\dag- [\tc, \epsilon_n \xi^n]\te^\dag,
\end{align}\end{subequations}
where $\iota_{\delta \xi} \delta({\te_\bullet\otimes} \te^{\dag})$ indicates that the contraction acts on the one-form\footnote{{In a local chart we can write $\iota_{\delta \xi} \delta({\te_\bullet\otimes} \te^{\dag}) = \delta\xi^a \delta(e_a e^\dag)$.}} $\te$, and the trace symbol has the same meaning as in Definition \ref{d:BFV_of_BFtheory};
\item The cohomological vector field $Q_{PC}$, such that $\iota_{Q_{PC}}\varpi_{PC}=\delta S_{PC}$. 
\end{itemize}
\end{proposition}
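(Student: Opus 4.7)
The plan is to apply the BFV construction of Section \ref{s:coiso_by_BFV} to the coisotropic embedding $\mathcal{C}_{PC}\hookrightarrow F_{PC}$. First, I would identify generators of the vanishing ideal of $\mathcal{C}_{PC}$: the defining equations $d_\tom \te=0$, $\iota_\xi \te\, F_\tom =0$ and $\epsilon_n F_\tom =0$ should be smeared by a scalar $\tc\in \Omega^0(\Sigma,\wedge^2\mathcal{V})$, a vector field $\xi\in\mathfrak{X}(\Sigma)$, and a function $\xi^n\in C^\infty(\Sigma)$, producing the constraints
\[
L_\tc \doteq \trintl{\Sigma} \tc\, d_\tom\te,\qquad P_\xi \doteq \trintl{\Sigma}\iota_\xi \te\, F_\tom, \qquad H_{\xi^n}\doteq \trintl{\Sigma}\epsilon_n \xi^n F_\tom.
\]
The role of $\epsilon_n$ is to complete the image of $\te$ to a frame of $\mathcal{V}$, which is what allows one to split the curvature constraint into tangential ($\iota_\xi \te$) and normal ($\epsilon_n$) components. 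Restricting to the open locus $\mathcal{U}$ of non-degenerate triads ensures these generators are functionally independent, yielding the local trivialisation \eqref{LoctrivF1}.

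Next, I would compute the Poisson brackets of these constraints with respect to the canonical symplectic form $-\trintl{\Sigma}\delta \te \wedge \delta \tom$ on $F_{PC}$, and check that they close modulo the constraints themselves, forming an open Poisson algebra of the schematic form \eqref{e:Coisostructure}. The Gauss generators $L_\tc$ close with $\mathfrak{so}(2,1)$ structure constants; the diffeomorphism generators $P_\xi$ close with the Lie bracket of vector fields; the Hamiltonian generators $H_{\xi^n}$ close into the diffeomorphism constraint with field-dependent structure functions typical of canonical general relativity; and the cross-brackets realise the semidirect action of $SO(2,1)\ltimes \mathrm{Diff}(\Sigma)$. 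This establishes the coisotropicity of $\mathcal{C}_{PC}$. With the structure functions in hand, the BFV recipe \eqref{e:Rfunction} dictates introducing ghosts $(\tc,\xi,\xi^n)$ in degree $+1$ and antighosts $(\tom^\dag,\te^\dag)$ in degree $-1$, and building the master action by pairing each constraint with its ghost, adding a quadratic-in-ghost term weighted by each structure function and the matching antighost, and extending by higher ghost polynomials as needed. The explicit expression \eqref{BoundaryactionGR}, including the terms $\tfrac12[\tc,\tc]\tom^\dag$, $\tfrac12\iota_\xi\iota_\xi F_\tom\tom^\dag$, $\tfrac12\iota_{[\xi,\xi]}\te\,\te^\dag$, $\tc\, d_\tom(\iota_\xi \tom^\dag)$, $L_\xi^\tom(\epsilon_n\xi^n)\te^\dag$ and $-[\tc,\epsilon_n\xi^n]\te^\dag$, is the output of this iterative procedure. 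The symplectic form \eqref{BoundarytwoformGR} is the canonical one on the total cotangent bundle of the ghost bundle, corrected by the field-dependent term $\iota_{\delta\xi}\delta(\te_\bullet\otimes \te^\dag)$ that arises because the natural pairing between $\xi\in \mathfrak{X}(\Sigma)$ and the antighost bundle is mediated by the nondegenerate triad.

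The main obstacle I expect is the verification of the classical master equation $\{S_{PC},S_{PC}\}=0$. This is a lengthy but direct computation that reduces, via the identities of Lemma \ref{l:Relations_structure_constants} and the graded Jacobi identity for Lie derivatives, to a sequence of cancellations among cubic and quartic ghost terms, using also the Bianchi identity $d_\tom F_\tom=0$ and the non-degeneracy of $\te$ so that the pairings with $\te^\dag$ can be inverted. A delicate subsidiary step is checking weak non-degeneracy of $\varpi_{PC}$ on $\mathcal{U}$, where the coupling of $\delta \xi$ to $\te^\dag$ through the contraction with $\te$ is non-trivial precisely on the open locus $\mathcal{U}$. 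Once CME is established, the cohomological vector field $Q_{PC}$ is uniquely characterised by $\iota_{Q_{PC}}\varpi_{PC}=\delta S_{PC}$, and $Q_{PC}^2=0$ then follows in the standard way from the graded Jacobi identity applied to $\tfrac12\{S_{PC},S_{PC}\}$.
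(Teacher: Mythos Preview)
The paper does not give its own proof of this proposition: it is quoted verbatim as Proposition~21 of \cite{CaSc2019} and used as input, so there is no in-paper argument to compare against. Your sketch is a perfectly sensible direct application of the BFV recipe of Section~\ref{s:coiso_by_BFV} to the constraint set $\mathcal{C}_{PC}$, and the outline (identify the smeared constraints $L_{\tc},P_\xi,H_{\xi^n}$, compute their Poisson algebra, build $S_{PC}$ by \eqref{e:Rfunction}, verify the CME) is correct in spirit.

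That said, you should be aware that this is almost certainly \emph{not} how the result is obtained in the cited reference. The surrounding text and the subsequent proposition make clear that in \cite{CaSc2019} the BFV data for PC theory are produced by the bulk-to-boundary BV--BFV induction procedure (one starts from the bulk BV data and reads off the boundary one-form, symplectic form and action from the failure of the variational problem on a manifold with boundary), and are then shown to be related to the BFV data of $BF$ theory by the explicit symplectomorphism $\psi$ of \eqref{e:PCtoBF}. This explains features of the statement that are awkward from your purely constraint-analytic viewpoint: the specific one-form $\alpha_{PC}$, the appearance of the non-canonical term $\delta(\iota_\xi\tom^\dag)\delta\tom$ in $\varpi_{PC}$, and the precise higher-ghost terms in $S_{PC}$ all drop out naturally from the induction (or equivalently from pulling back $S_{BF}$ along $\psi$), whereas in your approach they must be guessed and then laboriously verified. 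Your route works, but the paper's implicit route---induce from the bulk or transport from $BF$---is both shorter and structurally more informative.
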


The relation between $BF$ theory and PC theory has been fully described in \cite[Theorem 31]{CaSc2019}. In particular:

\begin{proposition}[\cite{CaSc2019} Section 4.2]
    There is an isomorphisms of complexes 
    \[
        \psi\colon (C^\infty(\F_{PC}),Q_{PC})\stackrel{\sim}{\to} (C^\infty(\F_{BF}),Q_{BF}), \qquad \psi\circ Q_{PC} = Q_{BF}\circ \psi,
    \]
whose explicit expression on fields is:\footnote{With reference to \cite{CaSc2019}, this is $\psi^{\filt{1}}$. Note that we do not write everything explicitly for the sake of brevity of expressions. It is straightforward to expand, e.g., $\iota_{\xi}A^\dag$ using the expression for $\xi$.}
\begin{equation}\label{e:PCtoBF}
\psi: \begin{cases}
 B= \te - \iota_{\xi} \tom^\dag  &\\
 B^\dag= \te^\dag &\\
 A = \omega - \iota_{\xi} \te^\dag &\\
  A^\dag = \tom^\dag &\\
 \chi = -\tc +\frac{1}{2}\iota^2_{\xi} \te^\dag  &\\
 \tau = -\iota_{\xi} \te - \epsilon_n \xi^n + \frac{1}{2} \iota^2_{\xi} \tom^\dag &\\
\end{cases}
(\psi)^{ -1}: \begin{cases}
\te= B+ \iota_{\xi}A^\dag &\\
\xi^{a}= -\tau^{\comp{a}} - \tau^{\comp{a}}\tau^{\comp{b}} A_{ab}^{\dag\comp{a}}&\\
\xi^n= -\tau^{\comp{n}} - \tau^{\comp{a}}\tau^{\comp{b}} A_{ab}^{\dag\comp{n}}&\\
\tom= A+ \iota_{\xi}B^\dag &\\
\tc=-\chi+ \frac{1}{2}\iota_{\xi}\iota_{\xi}B^\dag &\\
\tom^\dag=A^\dag  &\\
\te^\dag=B
^\dag 
\end{cases}
\end{equation}
\end{proposition}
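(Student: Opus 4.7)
The plan is to treat $\psi$ as determined by its action on the generating fields in \eqref{e:PCtoBF} and extended to an algebra morphism on $C^\infty$. Establishing the proposition then reduces to three verifications: (i) $\psi$ and $\psi^{-1}$ are well-defined maps between the two spaces of BFV fields; (ii) the two composite maps are the identity; (iii) $\psi$ intertwines the cohomological vector fields, i.e.\ $\psi\circ Q_{PC}=Q_{BF}\circ \psi$ on each generator.

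For step (i), I would check degrees and target spaces field by field: e.g.\ $B=\te-\iota_{\xi}\tom^\dag$ combines a non-degenerate $\mathcal{V}$-valued one-form $\te$ with a ghost-degree-zero but nilpotent correction $\iota_\xi\tom^\dag\in\Omega^1(\Sigma,\wedge^2\mathcal V\otimes\wedge^2\mathcal V)\simeq\Omega^1(\Sigma,\mathcal V)$, so non-degeneracy of $\te$ propagates to $B$ in the graded sense; analogous checks give the ranges of $A$, $\chi$, $\tau$, $A^\dag$, $B^\dag$. Step (ii) is a direct algebraic substitution: plugging the formulas for $\psi^{-1}$ into those for $\psi$ (and vice versa) and simplifying using the nilpotency of antifields of degree $-1$ and the identity $\iota_\xi^2=\tfrac12\iota_{[\xi,\xi]}$, which accounts for the quadratic $\iota_\xi^2$ correction terms on $\tc$, $\tau$ and $\xi$, $\xi^n$.

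The substantive step is (iii). For each PC generator $\phi\in\{\te,\tom,\tc,\xi,\xi^n,\tom^\dag,\te^\dag\}$ I would compute $Q_{PC}\phi$ from $\iota_{Q_{PC}}\varpi_{PC}=\delta S_{PC}$ using the explicit $S_{PC}$ of Proposition \ref{prop:BFVdata}, apply $\psi$ term by term, and compare with $Q_{BF}(\psi(\phi))$ obtained from the explicit formulas for $Q_{BF}$ listed after Definition \ref{d:BFV_of_BFtheory}. The matching rests on repeatedly invoking Cartan's magic formula $L_\xi^\omega=[d_\omega,\iota_\xi]$ and the Bianchi identity $d_\omega F_\omega=0$, so that the $\iota_\xi F_{\tom}$, $\iota_\xi d_\omega \te$ and $[\tc,\iota_\xi\cdot]$ combinations arising from $Q_{PC}$ rearrange into the $BF$-side operations $d_A\tau$, $[\chi,\cdot]$, and the antifield equations $Q_{BF}A^\dag=[\chi,A^\dag]+d_AB+[\tau,B^\dag]$, $Q_{BF}B^\dag=[\chi,B^\dag]+F_A$ after applying $\psi$. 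Once the cochain-map property holds on generators, it extends to all of $C^\infty(\F_{PC})$ because both $Q_{PC}$ and $Q_{BF}$ act as derivations; combined with (ii) this yields the isomorphism of complexes.

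The expected main obstacle is the bookkeeping of $\xi$-contraction identities on the top-degree antifields $\tom^\dag$ and $\te^\dag$ and on $\tc$, where cubic ghost terms $[\tc,\tc]$ and $\iota_\xi\iota_\xi F_{\tom}$ must recombine precisely into the $BF$-side expressions. The strategy making this tractable is the one used in \cite{CaSc2019}: $S_{PC}$ was tuned so that the non-linear $\xi$-contracted pieces reorganise under the shift $\te\mapsto \te-\iota_\xi\tom^\dag$, $\tom\mapsto \tom-\iota_\xi\te^\dag$, etc. In practice I would split each $\psi(Q_{PC}\phi)$ into a ``classical'' piece plus higher-antifield corrections and verify each layer separately against the corresponding grading of $Q_{BF}(\psi(\phi))$.
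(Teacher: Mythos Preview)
The paper does not supply its own proof of this proposition: it is quoted as a result from \cite[Section 4.2]{CaSc2019}, and the text moves directly on to the next cited result. So there is no in-paper argument to compare against.

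That said, your outline is the natural verification strategy and matches what the cited reference does: check that the assignment \eqref{e:PCtoBF} is a graded symplectomorphism on fields (your steps (i)--(ii)), and then verify the chain-map property on generators using the explicit expressions for $Q_{PC}$ and $Q_{BF}$ (your step (iii)). The tools you name---Cartan's formula $L_\xi^{\omega}=[d_\omega,\iota_\xi]$, the Bianchi identity, and the odd-contraction identity $\iota_\xi\iota_\xi=\tfrac12\iota_{[\xi,\xi]}$---are exactly what drives the cancellation of the $\iota_\xi$-correction terms. One point worth sharpening: rather than checking $\psi\circ Q_{PC}=Q_{BF}\circ\psi$ directly on each generator, it is somewhat cleaner (and closer to how \cite{CaSc2019} organises the computation) to verify that $\psi$ is a symplectomorphism, i.e.\ $\psi^*\varpi_{BF}=\varpi_{PC}$, and that $\psi^*S_{BF}=S_{PC}$; the intertwining of the Hamiltonian vector fields then follows automatically. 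This avoids computing $Q_{PC}$ explicitly from $\iota_{Q_{PC}}\varpi_{PC}=\delta S_{PC}$, which is the most error-prone part of your plan given the non-canonical form of $\varpi_{PC}$ in \eqref{BoundarytwoformGR}.
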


As a final background element we will need, we report a result of Cattaneo and Schiavina, relating the reduced phase space of PC and EH theories, via partial reduction \cite{CS2019}.

\begin{remark}
    The structure of canonical constraints for Palatini--Cartan theory follows the general picture presented in Section \ref{s:doubleBFV}, namely, we have two sets of constraints given by the ``Gauss constraint'' $\mathsf{L}^c=\trintl{\Sigma} cd_\omega e$, which generates Lorentz gauge transformations and the remaining ``diffeomorphism'' constraints $\mathsf{M}^{\wt{\xi}} = \trintl{\Sigma}\left(\iota_{\xi}\te + \xi^n \epsilon_n\right)F_{\tom}$, where $\wt{\xi}=\{\xi^n,\xi\}\in C^\infty(\Sigma)\times \mathfrak{X}(\Sigma)$. The flow of $\mathsf{L}^c$ preserves $\mathsf{M}^{\wt{\xi}}$:
    \begin{align*}
        \{\mathsf{L}^c, \mathsf{M}^{\wt{\xi}}\} = \trintl{\Sigma}\xi^n [c,\epsilon_n] F_\omega = \trintl{\Sigma}\xi^n \left( [c,\epsilon_n]^{(a)}e_a + [c,\epsilon_n]^{(n)}\epsilon_n\right) F_\omega = \mathsf{M}^{\widetilde{\Xi}}
    \end{align*}
    where we defined $\widetilde{\Xi}= [c,\epsilon_n]^{(a)} + [c,\epsilon_n]^{(n)}$ and we used the fact that we can decompose any section of $\mathcal{V}$ using the basis given by $\{e_a, \epsilon_n\}$.
\end{remark}

Note that $\{\mathsf{L}^c, \mathsf{M}^{\wt{\xi}}\}=\mathsf{M}^{\widetilde{\Xi}}$ is again of the $M$ kind, so that we can proceed in stages and reduce first by the Hamiltonian action associated to $\mathsf{L}^c$.
\begin{theorem}[\cite{CS2019}]\label{t:RPS_PCandEH}
    The reduced phase space for Palatini–-Cartan theory is isomorphic to that of Einstein–-Hilbert theory. In particular, let $\mathcal{C}_{EH} \subset\mathcal{F}_{EH}$ be the submanifold of canonical constraints for EH theory,\footnote{Formally, this is recovered as $\mathcal{C}_{EH}=\mathrm{Spec}(H^0(C^\infty(\F_{EH})))$.} denote by $\iota_{\mathrm{res}}\colon \mathcal{C}_{PC}\to \mathcal{C}_{\mathrm{Lor}}$ and by $\underline{\mathcal{C}_{\mathrm{Lor}}}$ the coisotropic reduction of the zero locus of the function 
    \[
    \mathsf{L}^c=\trintl{\Sigma} cd_\omega e, \qquad \mathcal{C}_{\mathrm{Lor}}\doteq \mathrm{Zero}(\mathsf{L}^c)\subset \mathcal{F}_{PC}\simeq \mathcal{U}\times \Omega^1(\Sigma,\wedge^2\mathcal{V}),
    \]
    with $\pi_{\mathrm{Lor}}\colon \mathcal{C}_{\mathrm{Lor}}\to\underline{\mathcal{C}_{\mathrm{Lor}}}$ the associated quotient map. Then, there is a symplectomorphism
    \[
    \sigma\colon \underline{\mathcal{C}_{\mathrm{Lor}}} \to \mathcal{F}_{EH}
    \]
    and, denoting $\mathcal{C}_{\mathrm{res}}\doteq \pi_{\mathrm{Lor}}(\iota_{\mathrm{res}}(\mathcal{C}_{PC}))\subset \underline{\mathcal{C}_{\mathrm{Lor}}}$, we have $\mathcal{C}_{EH} = \sigma(\mathcal{C}_{\mathrm{res}})$.
\end{theorem}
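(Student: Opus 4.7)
The plan is to produce $\sigma$ by unravelling the Lorentz reduction explicitly, identifying the Lorentz-invariant data that survives on $\mathcal{C}_{\mathrm{Lor}}$, and matching it with the ADM variables. First I would compute the Hamiltonian vector field $X_{\mathsf{L}^c}$ of $\mathsf{L}^c = \int_\Sigma c\, d_\omega e$ with respect to $\varpi_{PC}$ on $\mathcal{F}_{PC}$ and verify that it implements the infinitesimal Lorentz action $\delta_c e = [c, e]$, $\delta_c \omega = d_\omega c$. The algebraic identity $\{\mathsf{L}^c,\mathsf{L}^{c'}\} = \mathsf{L}^{[c,c']}$ then confirms that $\mathcal{C}_{\mathrm{Lor}}$ is coisotropic, with characteristic foliation equal to the pointwise Lorentz orbits, so that $\underline{\mathcal{C}_{\mathrm{Lor}}}$ is the space of Lorentz-equivalence classes of pairs $(e,\omega)$ with $d_\omega e = 0$.

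Next, to construct $\sigma$, I would identify enough Lorentz-invariants on $\mathcal{C}_{\mathrm{Lor}}$. The inverse metric $\gamma^{ab} = \eta^{IJ} e^a_I e^b_J$, with $e^a_I$ the partial inverse of $e^I_a$ along $\Sigma$, is manifestly invariant and, together with the non-degeneracy of $e$, gives the base component of $\sigma$. The auxiliary vector $\epsilon_n$ supplies a splitting of $\omega \in \Omega^1(\Sigma,\wedge^2\mathcal{V})$ into a tangential piece, which $d_\omega e = 0$ forces to be the spin connection of $\gamma$, and a normal piece which, after repackaging with the volume form $\sqrt{\gamma}$, yields a symmetric tensor density $\Pi_{ab}$ on $\Sigma$. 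Independence from the choice of $\epsilon_n$ modulo the Lorentz quotient, combined with the dimension count $\dim\mathcal{F}_{PC}|_{\mathrm{body}} - \dim\{d_\omega e = 0\} - \dim\mathfrak{so}(2,1) = 6 = \dim\mathcal{F}_{EH}|_{\mathrm{body}}$, shows that $\sigma\colon \underline{\mathcal{C}_{\mathrm{Lor}}}\to\mathcal{F}_{EH}$ is a bijection.

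The symplectomorphism property would be established by direct computation of $\sigma^*\varpi_{EH}$ in a chart adapted to the above splitting, via the standard first-order-to-second-order transition identity $\int_\Sigma \delta e\, \delta\omega \equiv \tfrac{1}{2}\int_\Sigma \delta\gamma^{ab}\,\delta\Pi_{ab}$, which holds modulo terms vanishing on $d_\omega e = 0$ and along the characteristic foliation. For the constraint-matching claim $\mathcal{C}_{EH} = \sigma(\mathcal{C}_{\mathrm{res}})$, I would then push the residual PC constraints $e_a F_\omega = 0$ and $\epsilon_n F_\omega = 0$ through $\sigma$, using a Gauss--Codazzi decomposition of $F_\omega$ into the intrinsic curvature $R^\Sigma$ of $\gamma^{-1}$ plus extrinsic terms controlled by $\Pi$, and recover respectively $H_a$ and $H_n$ of Equation \eqref{Hamiltonianconstraint}.

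The hard part will be the careful bookkeeping of Lorentz invariance throughout this Gauss--Codazzi translation, and in particular showing that the auxiliary vector $\epsilon_n$, which enters both the splitting of $\omega$ and the defining equations of $\mathcal{C}_{PC}$, drops out on the quotient: any two choices of $\epsilon_n$ completing $e(\Sigma)$ to a basis of $\mathcal{V}$ are related by a Lorentz boost whose effect is reduced away in $\underline{\mathcal{C}_{\mathrm{Lor}}}$. Once this invariance is secured, the matching of constraints reduces to a direct comparison between the structure functions of the PC constraint algebra and those of the Dirac algebra generated by $H_n$ and $H_a$ on the EH side.
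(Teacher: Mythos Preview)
The paper does not actually prove this theorem: it is quoted verbatim from \cite{CS2019} as a background result, with no proof environment following the statement. What the paper \emph{does} prove is the BFV-level analogue, Theorem~\ref{t:BF_coisotropic_EH} (with the explicit reduction deferred to Appendix~\ref{a:coisotropic_BFtoEH}), and then obtains the PC version as Corollary~\ref{c:PC_coisotropic_EH} by precomposing with the isomorphism $\psi$ of \eqref{e:PCtoBF}. The explicit lift of $\sigma$ is recorded without proof in \eqref{e:MapPCtoEH}.

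Your outline is correct and is essentially the argument of \cite{CS2019}: compute the characteristic distribution of $\mathcal{C}_{\mathrm{Lor}}$, recognise it as the Lorentz action, extract the invariants $g_{\mu\nu}=e^i_\mu e^j_\nu\eta_{ij}$ and the symmetric combination $K_{\mu\nu}=e^a_{(\mu}\omega^{0b}_{\nu)}\eta_{ab}$ (repackaged as $\Pi$), check the symplectic form matches, and push the residual curvature constraints through a Gauss--Codazzi decomposition to recover $H_n,H_a$. The paper's Appendix~\ref{a:coisotropic_BFtoEH} carries out exactly this program in the $BF$ variables, flowing along the three generators $\rho^{12},\rho^{02},\rho^{01}$ to gauge-fix $B^0_\mu$ and $B^1_2$ to zero, and then reading off the reduced symplectic form \eqref{e:sympl_form_reduced} and action \eqref{e:action_reduced}. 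One small correction: your invariant should be the metric $g_{\mu\nu}$ rather than the inverse $\gamma^{ab}=\eta^{IJ}e^a_Ie^b_J$, since the coframe $e$ has upper Lorentz and lower spacetime indices; compare \eqref{e:MapPCtoEH}.
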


The relationship between the EH and PC theories of gravity is summarised by the following diagram:

\[
\tikz[
overlay]{
    \draw[dotted] (0, 0.5) --node[above] {\color{gray}\scriptsize{EH theory}} (1,0.5) -- (3.2, -1.4)-- (1,-3.2)--(0,-3.2)--(0,0.5);
    \draw[dotted] (3.7, 0.5) -- (4.7,0.5) -- (6.9, -1.4)-- (4.7,-3.2)--(3.7,-3.2)--(3.7,0.5);
    \draw[dotted] (3.6, 3.4) --node[above] {\color{gray}\scriptsize{PC theory}} (5,3.4) -- (8.5, 0.5)-- (8.5,-3.3)--(3.6,-3.3)--(3.6, 3.4);
}
\begin{tikzcd}[row sep=2.5em]
    && \mathcal{F}_{PC} & &\\
&&  &\mathcal{C}_{\mathrm{Lor}} \arrow[ul, hook] \arrow[ld, "\pi_{Lor}"] &\\
\mathcal{F}_{EH}  & & \arrow[ll, "\sigma", "\sim"'] \underline{\mathcal{C}_{\mathrm{Lor}}}& & \arrow[ul, "\iota_{\mathrm{res}}", hook]\arrow[dl]\mathcal{C}_{PC} \arrow[dd, "\pi_{PC}"]\\
& \arrow[ul, hook] \arrow[dl, "\pi_{EH}"] \mathcal{C}_{EH}  && \arrow[ll, "\sigma", "\sim"'] \arrow[ul, hook]\arrow[dl, "\pi_{\mathrm{res}}"]\mathcal{C}_{\mathrm{res}} &\\
\underline{\mathcal{C}_{EH}}  & & \arrow[ll, "\underline{\sigma}", "\sim"'] \underline{\mathcal{C}_{\mathrm{res}}} \arrow[rr, "\sim"] & & \underline{\mathcal{C}_{PC}}
\end{tikzcd}
\]

\section{\texorpdfstring{$BF$}{BF} to EH reduction: Classical picture}

Since nondegenerate $BF$ theory is strongly equivalent to PC theory, which reduces to EH theory, we are going to look at the direct reduction from $BF$ theory to EH theory in this section.
We will first look at the BFV coisotropic reduction from the BFV data associated to $BF$ theory, implementing the Loentz constraint within the BFV framework. Then we will move to the resolution of this coisotropic reduction to obtain the ``double BFV'' construction outlined in Section \ref{s:doubleBFV}.

\subsection{EH theory as a coisotropic reduction of \texorpdfstring{$BF$}{BF} theory}\label{s:relation_Classical_theories}

It is possible to find a coisotropic submanifold in the BFV data associated to $BF$ theory, whose reduction is (symplectomorphic to) the BFV data for Einstein--Hilbert theory. We will perform the calculations explicitly, presenting the reduction in a local chart (in the space of BFV fields).

Consider the submanifold 
\[
\C_{\mathrm{Gauss}}\subset \F_{BF}=\Omega^1_{nd}(\Sigma, \mathcal{V}) \oplus \mathcal{A}_P \oplus T^* \left(\Omega^0[1]( \Sigma, \wedge^2\mathcal{V})\oplus \Omega^0[1]( \Sigma, \wedge^1\mathcal{V})\right),
\]
defined as the zero locus of the following constraints:
\begin{subequations}
\begin{gather}
    \mathbf{\Phi}=(QA^\dag)\vert_{A^\dag=0} =  d_{A} B + [\tau, B^\dagger]=0 \label{e:ConstraintBF1}\\
    \mathbf{\Phi}^\dag= A^\dagger=0. \label{e:ConstraintBF2}
\end{gather}
\end{subequations}

\begin{remark}
    The constraint \eqref{e:ConstraintBF1} is the BFV analogue of the torsionless constraint $d_{A} B=0$ that is used to impose that $A$ is the Levi--Civita connection associated to $B$. Note in particular that here we no longer require the torsion to vanish but we require it to be proportional to higher ghost degree terms. The submanifold $\C_{\mathrm{Gauss}}$ is the analogue of $\CCo$ of the general theory for the nested coisotropic embedding $\mathcal{C}_{BF}\hookrightarrow \mathcal{C}_{\mathrm{Gauss}}\hookrightarrow \mathcal{F}_{BF}$.
\end{remark}

    The fact that that the constraints \eqref{e:ConstraintBF1} and \eqref{e:ConstraintBF2} are coisotropic is a consequence of Theorems \ref{thm:BFVcoisotropic} and \ref{thm:doubleBFV}. In order to see the mechanism at work explicitly we can promote them to integral functions using some Lagrange multipliers and compute the brackets between themselves with respect to $\varpi_{BF}$.
    Namely, \eqref{e:ConstraintBF1} is an equation in $\Omega^{2,1}(\Sigma)$ and hence we need a Lagrange multiplier $\rho \in \Omega^{0,2}[1](\Sigma)$, while \eqref{e:ConstraintBF2} is an equation in $\Omega^{2,1}[-1](\Sigma)$ and we need a Lagrange multiplier $\mu \in \Omega^{0,2}[2](\Sigma)$.\footnote{Note that we use odd Lagrange multipliers in order to be consistent with the BFV description presented in Section \ref{s:BFVonBFV_Classical}.} Then we have the following lemma:
    \begin{lemma} \label{l:brackets_constraints}
    Consider the functions 
    \begin{align*}
        \mathsf{J}^{\rho} = \trintl{\Sigma}\rho (d_A B + [\tau, B^{\dag}]) \quad\text{and}\quad  \mathsf{M}^{\mu} = \trintl{\Sigma}\mu A^{\dag},
    \end{align*} then we have:
    \begin{align*}
        \{\mathsf{J}^{\rho},\mathsf{J}^{\rho}\}&= \frac{1}{2}\mathsf{J}^{[\rho,\rho]} & 
        \{\mathsf{J}^{\rho},\mathsf{M}^{\mu}\}&= 0 & 
        \{\mathsf{M}^{\mu},\mathsf{M}^{\mu}\}&= 0.
    \end{align*}
\end{lemma}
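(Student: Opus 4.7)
My plan is to read off the conjugate pairs from $\varpi_{BF}=\trintl{\Sigma}\delta B\,\delta A+\delta\chi\,\delta A^\dag+\delta\tau\,\delta B^\dag$, namely $(B,A)$, $(\chi,A^\dag)$ and $(\tau,B^\dag)$, and then dispose of the two easy identities by inspection. Since $\mathsf{M}^\mu=\trintl{\Sigma}\mu A^\dag$ depends only on $A^\dag$, its Hamiltonian vector field is $X_{\mathsf{M}^\mu}=\pm\mu\,\tfrac{\delta}{\delta\chi}$. The function $\mathsf{J}^\rho$ does not contain $\chi$, and $\mathsf{M}^\mu$ itself does not contain $\chi$ either, so $\{\mathsf{J}^\rho,\mathsf{M}^\mu\}=X_{\mathsf{M}^\mu}(\mathsf{J}^\rho)=0$ and $\{\mathsf{M}^\mu,\mathsf{M}^\mu\}=X_{\mathsf{M}^\mu}(\mathsf{M}^\mu)=0$.

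For the remaining identity the first step is to compute $X_{\mathsf{J}^\rho}$ explicitly from $\iota_{X_{\mathsf{J}^\rho}}\varpi_{BF}=\delta\mathsf{J}^\rho$. Varying $\mathsf{J}^\rho$ in each of the four fields $A,B,\tau,B^\dag$ on which it depends, and integrating by parts to isolate the $\delta$-field factor, one obtains (up to a global sign fixed by the convention for $\varpi_{BF}$)
\[
X_{\mathsf{J}^\rho}(A)=d_A\rho,\quad X_{\mathsf{J}^\rho}(B)=[\rho,B]+[\rho,B^\dag]_{\tau\text{-piece}},\quad X_{\mathsf{J}^\rho}(\tau)=[\rho,\tau],\quad X_{\mathsf{J}^\rho}(B^\dag)=[\rho,B^\dag].
\]
The conceptual content of this calculation is that $\mathsf{J}^\rho$ is the BFV-completion of the Gauss constraint $d_AB+[\tau,B^\dag]$, so that $X_{\mathsf{J}^\rho}$ acts as an infinitesimal $\mathfrak{so}(2,1)$ gauge transformation with ghost parameter $\rho$ on every BFV field.

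With this in hand, the bracket $\{\mathsf{J}^\rho,\mathsf{J}^\rho\}=X_{\mathsf{J}^\rho}(\mathsf{J}^\rho)$ is computed by the graded Leibniz rule. The contributions from $X(A)$ and $X(B)$ acting on $d_AB$ combine into $d_A[\rho,B]+[d_A\rho,B]$, which by the graded Leibniz rule for $d_A$ over $[\,\cdot\,,\,\cdot\,]$ equals $[\rho,d_AB]$; similarly the contributions from $X(\tau)$ and $X(B^\dag)$ acting on $[\tau,B^\dag]$ collapse via the graded Jacobi identity to $[\rho,[\tau,B^\dag]]$. Summing, one finds $X_{\mathsf{J}^\rho}(\mathsf{J}^\rho)=\trintl{\Sigma}[\rho,\rho]\bigl(d_AB+[\tau,B^\dag]\bigr)$ up to the $\tfrac12$ that accounts for $\rho$ being odd (so that $[\rho,\rho]$ is graded-symmetric and non-zero), giving exactly $\tfrac12\mathsf{J}^{[\rho,\rho]}$.

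The main technical obstacle is sign-tracking: $\rho$ has ghost degree $1$, $\mathsf{J}^\rho$ is odd, and the fields $\tau, B^\dag, A^\dag, \chi$ all carry non-trivial degree, so the graded Leibniz rule, the graded cyclicity of $\mathrm{Tr}$ and the integration by parts for $d_A$ each contribute signs that must be reconciled. I would handle this by keeping everything inside the invariant pairing $\mathrm{Tr}$ and using the two universal identities $\mathrm{Tr}(X[Y,Z])=(-1)^{|X||Y|}\mathrm{Tr}([X,Y]Z)$ and $d_A[Y,Z]=[d_AY,Z]+(-1)^{|Y|}[Y,d_AZ]$, reducing the whole computation to one application of the graded Jacobi identity for $\mathfrak{so}(2,1)$.
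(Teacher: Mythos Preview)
Your approach is correct and is exactly the paper's: compute the Hamiltonian vector fields of $\mathsf{J}^\rho$ and $\mathsf{M}^\mu$ from $\varpi_{BF}$, note that $X_{\mathsf{M}^\mu}$ only moves $\chi$ (absent from both functionals), and for $\{\mathsf{J}^\rho,\mathsf{J}^\rho\}$ collapse the four contributions via the Leibniz rule for $d_A$ and the graded Jacobi identity. One small slip: your displayed $X_{\mathsf{J}^\rho}(B)=[\rho,B]+[\rho,B^\dag]_{\tau\text{-piece}}$ carries a spurious term, since $B$ is conjugate only to $A$ and the $\tau$-variation feeds $X_{\mathsf{J}^\rho}(B^\dag)$, not $X_{\mathsf{J}^\rho}(B)$; you silently drop this extra piece in your subsequent Leibniz computation anyway, so the argument goes through. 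For the record, the paper's signs are $X_{\mathsf{J}^\rho}(\tau)=-[\rho,\tau]$ and $X_{\mathsf{J}^\rho}(B^\dag)=-[\rho,B^\dag]$.
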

\begin{proof}
    This is a direct consequence of Theorem \ref{thm:BFVcoisotropic}. It is anyway worthwhile to compute it explicitly, for further reference.
    The components of the Hamiltonian vector fields of $\mathsf{J}^{\rho}$ and $\mathsf{M}^{\mu}$ are
    \begin{align*}
        {X}_{\mathsf{J}}(B)&= [\rho, B] & {X}_{\mathsf{J}}(A)&= d_A \rho & {X}_{\mathsf{J}}(\tau)&= -[\rho, \tau]\\
        {X}_{\mathsf{J}}({B^\dag})&= -[\rho, B^{\dag}] & {X}_{\mathsf{J}}(\chi)&= 0& {X}_{\mathsf{J}}({A^\dag})&= 0\\
        {X}_{\mathsf{M}}(B)&= 0 & {X}_{\mathsf{M}}(A)&= 0 & {X}_{\mathsf{M}}(\tau)&= 0\\
        {X}_{\mathsf{M}}({B^\dag})&= 0 & {X}_{\mathsf{M}}(\chi)&= \mu & {X}_{\mathsf{M}}({A^\dag})&= 0.
    \end{align*}
    Hence we get:
    \begin{align*}
        \{\mathsf{J}^{\rho},\mathsf{J}^{\rho}\}= \iota_{{X}_{\mathsf{J}}}\delta \mathsf{J}^{\rho} &= \trintl{\Sigma} -\rho [d_A \rho, B] + \rho d_A([\rho, B]) + \rho [[\rho, \tau], B^\dag] + \rho [ \tau, [\rho,B^\dag] ]\\
        &= \trintl{\Sigma} \frac{1}{2}[\rho,\rho] (d_A B + [\tau, B^\dag])= \frac{1}{2}\mathsf{J}^{[\rho,\rho]}\\
        \{\mathsf{J}^{\rho},\mathsf{M}^{\mu}\} = \iota_{{X}_{\mathsf{J}}}\delta \mathsf{M}^{\mu} &= 0\\
        \{\mathsf{M}^{\mu},\mathsf{M}^{\mu}\} = \iota_{{X}_{\mathsf{M}}}\delta \mathsf{M}^{\mu}&=0.
    \end{align*}
\end{proof}

An immediate corollary is:
\begin{corollary}\label{c:constraints_coisotropic}
        The submanifold $\C_{\mathrm{Gauss}}\hookrightarrow\F_{BF}$ defined by \eqref{e:ConstraintBF1} and \eqref{e:ConstraintBF2} is coisotropic.
\end{corollary}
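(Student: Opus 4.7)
The plan is to deduce coisotropicity directly from Lemma \ref{l:brackets_constraints}, invoking the standard characterization: a submanifold $C\subset F$ cut out by a collection of independent constraint functions $\{\Phi_i\}$ is coisotropic if and only if its vanishing ideal $\mathcal{I}_C = \mathrm{Span}_{C^\infty(F)}\{\Phi_i\}$ is closed under the Poisson bracket, i.e. $\{\Phi_i,\Phi_j\} = F_{ij}^k\Phi_k$ for some structure functions. Equivalently, the Hamiltonian vector fields $X_{\Phi_i}$ are tangent to $C$, so the symplectic orthogonal $TC^{\varpi_{BF}}$ sits inside $TC$.

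First I would note that the components of $\mathbf{\Phi}$ and $\mathbf{\Phi}^\dag$ pair nondegenerately with the respective spaces of Lagrange multipliers $\rho$ and $\mu$, so that the integral functions $\mathsf{J}^\rho$ and $\mathsf{M}^\mu$ appearing in Lemma \ref{l:brackets_constraints} exhaust a set of generators of $\mathcal{I}_{\C_{\mathrm{Gauss}}}$. Consequently, it suffices to check closure of the ideal on these ``smeared'' generators.

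Second, I would read off from Lemma \ref{l:brackets_constraints} the three identities
\[
\{\mathsf{J}^{\rho},\mathsf{J}^{\rho}\} = \tfrac{1}{2}\mathsf{J}^{[\rho,\rho]}, \qquad \{\mathsf{J}^{\rho},\mathsf{M}^{\mu}\} = 0, \qquad \{\mathsf{M}^{\mu},\mathsf{M}^{\mu}\} = 0,
\]
and observe that each right-hand side belongs to $\mathcal{I}_{\C_{\mathrm{Gauss}}}$ (either trivially, or, in the first case, because it is again of the form $\mathsf{J}^{\rho'}$ with $\rho' = \tfrac{1}{2}[\rho,\rho]$). By polarisation---replacing $\rho$ with $\rho_1+\rho_2$ and similarly for $\mu$, and extracting the mixed term---one recovers the full set of mixed brackets $\{\mathsf{J}^{\rho_1},\mathsf{J}^{\rho_2}\}$ and $\{\mathsf{M}^{\mu_1},\mathsf{M}^{\mu_2}\}$, which remain in the ideal by the same computation.

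The statement then follows immediately: the vanishing ideal of $\C_{\mathrm{Gauss}}$ is a Poisson subalgebra of $C^\infty(\F_{BF})$, hence $\C_{\mathrm{Gauss}}$ is coisotropic. There is no real obstacle here; the only thing one could worry about is that the constraints be independent and define a smooth embedded submanifold in the graded sense, which is guaranteed by the explicit form of $\mathbf{\Phi}^\dag = A^\dag$ (cutting out the zero section of $T^*\Omega^0[1](\Sigma,\wedge^1\mathcal{V})$) together with $\mathbf{\Phi} = d_A B + [\tau,B^\dag]$ being in degree $0$ and functionally independent from $A^\dag$. Alternatively, one can invoke Theorems \ref{thm:BFVcoisotropic} and \ref{thm:doubleBFV} directly, of which the present statement is a particular field-theoretic instance.
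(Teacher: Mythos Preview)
Your proposal is correct and matches the paper's own reasoning: the paper presents Corollary \ref{c:constraints_coisotropic} as an immediate consequence of Lemma \ref{l:brackets_constraints} (having already noted, just before the lemma, that coisotropicity also follows from the general Theorems \ref{thm:BFVcoisotropic} and \ref{thm:doubleBFV}), and you simply spell out the standard ``closed ideal $\Rightarrow$ coisotropic'' argument in slightly more detail. Your remark on polarisation to extract mixed brackets is a harmless extra step that the paper leaves implicit.
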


 Let us call $\underline{\C_{\mathrm{Gauss}}}$ the coisotropic reduction of $\F_{BF}$ with respect to \eqref{e:ConstraintBF1} and \eqref{e:ConstraintBF2} and let us denote by $\pi_{\C_{\mathrm{Gauss}}}$ the projection $\pi_{\C_{\mathrm{Gauss}}}: \C_{\mathrm{Gauss}}\rightarrow \underline{\C_{\mathrm{Gauss}}}$. We then have the following theorem:

\begin{theorem}\label{t:BF_coisotropic_EH}
    The symplectic space $\underline{\C_{\mathrm{Gauss}}}$ comes equipped with a solution of the master equation $\underline{S}_{BF}\in C^\infty(\underline{\C_{\mathrm{Gauss}}})$, i.e.\ such that $(\underline{S}_{BF},\underline{S}_{BF})=0$ and $\pi^*_{\C_{\mathrm{Gauss}}}\underline{S}_{BF}=\iota^*_{\C_{\mathrm{Gauss}}}S_{BF}$. Then, $\BFV_{\underline{\C_{\mathrm{Gauss}}}} \equiv\underline{\mathbb{BFV}(\mathcal{F}_{BF},\mathcal{C}_{\mathrm{Gauss}},\mathcal{C}_{BF})}= (\underline{\C_{\mathrm{Gauss}}},\varpi_{\underline{\C_{\mathrm{Gauss}}}},\underline{\mathcal{S}}_{BF})$
    forms a BFV triple, and it is strongly BFV-equivalent to $\BFV_{EH}$, via the symplectomorphism
    \[
    \sigma_{\BFV}\colon \underline{\C_{\mathrm{Gauss}}}\to \F_{EH}.
    \]
\end{theorem}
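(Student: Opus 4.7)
The argument naturally splits into two parts: first, the existence of the reduced BFV triple on $\underline{\C_{\mathrm{Gauss}}}$, and second, the construction of a symplectomorphism identifying it with $\BFV_{EH}$.

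For the first part, my plan is to invoke Theorem \ref{thm:doubleBFVreduced} applied to the nested coisotropic embedding $\mathcal{C}_{BF}\hookrightarrow \mathcal{C}_{\mathrm{Gauss}}\hookrightarrow \mathcal{F}_{BF}$. By Corollary \ref{c:constraints_coisotropic} and Lemma \ref{l:brackets_constraints}, the generators of $\C_{\mathrm{Gauss}}$ have vanishing cross-bracket $\{\mathsf{J}^\rho,\mathsf{M}^\mu\}=0$ and vanishing structure functions $g$ and $m$, while the bracket $\{\mathsf{J}^\rho,\mathsf{J}^\rho\}=\tfrac12\mathsf{J}^{[\rho,\rho]}$ is controlled by the Lie bracket of $\mathfrak{so}(2,1)$, so the remaining structure function $f$ is constant. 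Hence the hypotheses of Theorem \ref{thm:doubleBFVreduced} are satisfied, and the reduced master action $\underline{S}_{BF}$ arises as the $\mathbb{Q}$-cohomology class $[\check{S}_{BF}]_\mathbb{Q}$ of an extension of the form \eqref{e:generalSC}. It automatically satisfies the classical master equation on the reduced symplectic manifold $(\underline{\C_{\mathrm{Gauss}}},\varpi_{\underline{\C_{\mathrm{Gauss}}}})$, yielding the claimed BFV triple.

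For the second part, rather than directly constructing a map $\underline{\C_{\mathrm{Gauss}}}\to\F_{EH}$, the strategy is to route through Palatini--Cartan theory via the isomorphism $\psi$ of \eqref{e:PCtoBF}. Under $\psi^{-1}$, the constraint $A^\dag=0$ pulls back to $\tom^\dag=0$, and, working modulo higher antifield-degree corrections, the torsion-like constraint $d_A B+[\tau,B^\dag]=0$ becomes the Palatini--Cartan Gauss constraint $d_\tom\te=0$, whose Hamiltonian flow is precisely the Lorentz gauge action generated by $\mathsf{L}^c$. Consequently the reduced manifold $\underline{\C_{\mathrm{Gauss}}}$ is, under $\psi^{-1}$, a BFV realisation of the Lorentz partial reduction $\pi_{\mathrm{Lor}}$ appearing in Theorem \ref{t:RPS_PCandEH}. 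Composing with the classical symplectomorphism $\sigma\colon\underline{\mathcal{C}_{\mathrm{Lor}}}\to\mathcal{F}_{EH}$, lifted to the ghost sector by matching the BFV generators $(\tau,\xi,\xi^n)$ with the EH diffeomorphism generators $(\txi{},\txi{n})$ through the usual vector/normal decomposition, produces the desired $\pi_{equiv}$.

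The main obstacle will be verifying that the pulled-back master action $\pi_{equiv}^* S_{EH}$ agrees with $\underline{S}_{BF}$ on the nose, not merely up to $\mathbb{Q}$-cohomology: the extension $\check{S}_{BF}$ from \eqref{e:generalSC} depends on a choice of representative, whereas $S_{EH}$ of \eqref{e:ADMBoundaction} has a very rigid form involving the ADM Hamiltonian constraint $H_n$ and the spatial-diffeomorphism constraint. My plan for this computation is to pick an explicit slice of the Lorentz action in which the degree-zero BF variables $(A,B)$ reduce to a triad and its Levi--Civita connection---so that the associated co-metric $\gamma=B^TB$ and its conjugate momentum $\Pi$ become Darboux coordinates on the body of $\underline{\C_{\mathrm{Gauss}}}$---and then track the ghost/antifield sector through $\psi^{-1}$ and the reduction. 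The calculation parallels the BF-to-PC master action comparison of \cite{CaSc2019}, now carried through the additional Lorentz partial reduction, and the identity $\pi_{equiv}^* S_{EH}=\underline{S}_{BF}$ should follow by direct substitution, possibly after adjusting the representative $\check{S}_{BF}$ by a $\mathbb{Q}$-exact term as in Lemma \ref{l:operator_and_commutator}.
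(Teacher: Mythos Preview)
Your approach is genuinely different from the paper's. The paper works \emph{directly}: it fixes a basis of $\mathcal{V}_\Sigma$ in which $B^i_\mu$ is an invertible diad, solves the constraint \eqref{e:ConstraintBF1} explicitly for $A^{ij}_\mu$ in terms of the spin connection plus ghost corrections, identifies the residual symmetric combination $K_{\mu\nu}=B^a_{(\mu}A^{0b}_{\nu)}\eta_{ab}$ as the ADM momentum, and writes down an explicit lifted map $\pi_{equiv}$ from $\C_{\mathrm{Gauss}}$ to $\F_{EH}$ in BF variables (equation \eqref{e:MapBFtoEH}). The verification that this is a symplectomorphism preserving the action, and that it descends to the quotient, is deferred to Appendix \ref{a:coisotropic_BFtoEH}, where the characteristic flow of $X_{\mathsf{J}}$ is integrated component by component. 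No detour through PC theory or through Theorem \ref{thm:doubleBFVreduced} is made; in fact the PC statement (Corollary \ref{c:PC_coisotropic_EH}) is obtained \emph{a posteriori} by precomposing with $\psi$.

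Your plan to invoke Theorem \ref{thm:doubleBFVreduced} for the first part is a legitimate shortcut, but note two things. First, a small error: the structure function $g$ is \emph{not} zero here. From $\mathsf{J}^\rho=\int\rho(d_AB+[\tau,B^\dag])$ one reads off $g(\rho,\tau)=[\rho,\tau]$, i.e.\ $g$ is the Lie bracket of $\mathfrak{so}(2,1)$; only $m$ and $h$ vanish. This does not break your argument, since Theorem \ref{thm:doubleBFVreduced} only requires the structure functions to be \emph{constant}, but you should state the hypotheses correctly. Second, that theorem is proved in the finite-dimensional setting, and you should at least remark why the argument survives in field theory (here it does, precisely because all structure functions are Lie-algebra constants, so the ODEs \eqref{e:simplifiedgradedODE} decouple cleanly).

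For the second part, routing through PC via $\psi$ and Theorem \ref{t:RPS_PCandEH} is logically independent (that theorem is imported from \cite{CS2019}, not derived from Theorem \ref{t:BF_coisotropic_EH}), so there is no circularity. But Theorem \ref{t:RPS_PCandEH} is purely degree-zero: it gives you $\underline{\mathcal{C}_{\mathrm{Lor}}}\simeq\mathcal{F}_{EH}$ and $\mathcal{C}_{\mathrm{res}}\simeq\mathcal{C}_{EH}$ as symplectic/coisotropic data. To conclude strong BFV-equivalence you still need either a functoriality statement for the BFV construction or the explicit slice computation you sketch at the end---and that computation is essentially the content of the paper's proof and of Appendix \ref{a:coisotropic_BFtoEH}. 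So your route is more conceptual in its scaffolding but converges on the same explicit work; the paper simply does that work up front without the abstract detour.
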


\begin{proof}
We divide the proof in multiple steps. In the first step, we solve the constraints that define $\C_{\mathrm{Gauss}}$ and perform the corresponding coisotropic reduction, obtaining the BFV theory $\BFV_{\underline{\C_{\mathrm{Gauss}}}}$. In the second step, we then show an explicit symplectomorphism between this and the EH BFV theories.

The first step has already been considered in \cite{CS2019,thi2007}. Here we consider a slightly different perspective, using a coisotropic reduction and extend it to the BFV setting by including the antifields.

As a first step we fix a basis $\{v_0,v_1,v_2\}$ of $\mathcal{V}_{\Sigma}$ such that in components $B_{\mu}^i$ (both $i=1,2$ and $\mu=1,2$) is an invertible diad. Note that this is possible since, in components, $B$ is represented by a $3\times 2$ matrix $B_\mu^{I}$, with $I=0,1,2$ and $i$ a sub index of $I$, with at least an invertible $2\times 2$ minor, by the non-degeneracy property of $B$. Then the constraint \eqref{e:ConstraintBF1} can be rewritten in components as the following two equations:
    \begin{subequations}
        \begin{align}
            \partial_{[\mu} B_{\nu]}^i + A_{[\mu}^{ij} B_{\nu]}^k \eta_{jk} + A_{[\mu}^{i0} B_{\nu]}^0 \eta_{00} +\tau^k B^{\dag {ij}}_{\mu\nu}\eta_{jk}+\tau^0 B^{\dag {i0}}_{\mu\nu}\eta_{00}=0 \label{e:ConstraintBF_LeviCivita}\\
            \partial_{[\mu} B_{\nu]}^0 + A_{[\mu}^{0i} B_{\nu]}^j \eta_{ij} +\tau^i B^{\dag {0j}}_{\mu\nu}\eta_{ij}=0. \label{e:ConstraintBF_antisym}
        \end{align} 
    \end{subequations}

    We first consider \eqref{e:ConstraintBF_LeviCivita}, solve and reduce by it. It can be solved for $A_{\mu}^{ij}$ as
    \begin{align*}
        A_{\mu}^{ij} = \Gamma_{\mu}^{ij} + 2 X_{\mu\nu}^l(B^{-1})^\nu_k \epsilon^{ij}\epsilon_{l}^{\;k}
    \end{align*}
    where $\Gamma_{\mu}^{ij}$ is the spin connection associated to the diad $B_\mu^i$ 
    \begin{align*}
        \Gamma_{\mu }^{ij}= \eta^{k[i}(B^{-1})^{\nu}_k\left(\partial_{\mu}B_{\nu}^{j]}-\partial_{\nu}B_{\mu}^{j]}+\eta^{j]l}(B^{-1})^{\sigma}_l B_{\mu }^{m} \partial_{\sigma}B_{\nu}^n\eta_{mn} \right)
    \end{align*}
    and
    \begin{align*}
        X_{\mu\nu}^i = A_{[\mu}^{i0} B_{\nu]}^0 \eta_{00} +\tau^k B^{\dag {ij}}_{\mu\nu}\eta_{jk}+\tau^0 B^{\dag {i0}}_{\mu\nu}\eta_{00}.
    \end{align*}
    
    The corresponding symplectic reduction is carried on in Appendix \ref{a:coisotropic_BFtoEH} and the result is collected in Remark \ref{r:intermediate_theory}. Dropping the double apices, after the reduction we get a BFV theory with symplectic form (the suffix ``int'' stands for intermediate)
        \begin{align*}
            \varpi_{\mathrm{int}} &= \int_{\Sigma} \left(2 \delta A_{\mu}^{0i} \delta B_{\nu}^{j}  +2 \delta \tau^i \delta B^{\dag 0j} + \delta \tau^0 \delta B^{\dag ij}  \right) \epsilon_{ij}dx^{\mu}dx^{\nu}
        \end{align*}
    and  action
        \begin{align*}
            S_{\mathrm{int}} &= \trintl{\Sigma} \tau F_{A} \\
        \end{align*}
         where we still have to impose the constraint \eqref{e:ConstraintBF_antisym}, which after the reduction of the submanifold defined by \eqref{e:ConstraintBF_LeviCivita} reads
        \begin{align*}
            \left(A_{[\mu}^{0i}B_{\nu]}^j  \eta_{ij} +\tau^i B^{\dag {0j}}_{\mu\nu}\eta_{ij}\right)\epsilon^{\mu\nu}=0. 
        \end{align*}
    The coisotropic reduction with respect to it is carried out in Appendix \ref{a:coisotropic_BFtoEH}. We can view this last constraint as fixing the antisymmetric part of $A_{\mu}^{0i} B_{\nu}^j \eta_{ij}$.
    Hence the only components of $A$ that are left free are the ones contained in the symmetric combination 
    \begin{align}\label{e:def_K_EH}
        K_{\mu\nu}:= B_{(\mu}^a A_{\nu)}^{0b} \eta_{ab},
    \end{align}
    which will play the role of the momentum in the EH formulation. 
    
    We will now show the explicit expression of the map between the spaces of fields of the intermediate BFV theory, described above, and of the BFV theory associated to Einstein--Hilbert gravity. Consider $\sigma_{\BFV}: \mathbb{F}_{\mathrm{int}} \to \mathbb{F}_{EH}$ defined (in a local chart) by
    \begin{equation}\label{e:MapBFtoEH}
        \sigma_{\BFV}\colon\begin{cases}
        g_{\mu\nu}= B_{\mu}^i B_{\nu}^j \eta_{ij}\\
        \Pi^{\mu\nu}= \sqrt{\mathsf{g}} (K_{\mu\nu}- g_{\mu\nu}g^{\rho\sigma}K_{\rho\sigma})-  \frac{1}{2} B_{\mu}^i B_{\nu}^j \left(\epsilon_{ik}\eta_{jl}+\epsilon_{jk}\eta_{il}\right)\tau^l  {B^{\dag}}^{0k}_{\rho\sigma}\epsilon^{\rho\sigma}\\
        \txi{n} =   \tau^0\\
        \txi{\mu} =  (B^{-1})^{\mu}_i\tau^i\\
        \phin = {B^{\dag}}^{ij}_{\rho\sigma}\epsilon^{\rho\sigma}\epsilon_{ij}\\
        \varphi_{\mu} = 2 {B^{\dag}}^{0i}_{\rho\sigma}\epsilon^{\rho\sigma}B_{\mu}^j\epsilon_{ij}.
        \end{cases}
    \end{equation}

    In order to conclude, one has to show the following three properties:
    \begin{enumerate}
        \item The map $\sigma_{\BFV}$ is invariant under the characteristic distribution of the residual constraint \eqref{e:ConstraintBF_antisym}, and therefore it descends to a map $\underline{\sigma_{\BFV}}\colon\underline{\C_{\mathrm{Gauss}}}\to \mathbb{F}_{EH}$,
        \item The map $\underline{\sigma_{\BFV}}$ is a symplectomorphism, i.e.
        \begin{align*}
            \underline{\sigma_{\BFV}}^{*}(\varpi_{EH}) = \varpi_{\mathrm{int}},
        \end{align*}
        \item the transformation preserves the actions, i.e.
        \begin{align*}
            \underline{\sigma_{\BFV}}^{*}(S_{EH}) = S_{\mathrm{int}}.
        \end{align*}
    \end{enumerate}
    We prove these three properties in Appendices \ref{a:independence}, \ref{a:symplectomorphism} and \ref{a:actions} respectively.
\end{proof}

Let us now draw the connection with Palatini--Cartan theory. Using the symplectomorphism between $\F_{PC}$ and $\F_{BF}$ presented in \eqref{e:PCtoBF}, it is possible to express the constraints defining the graded coisotropic submanifold $\C_{\mathrm{Gauss}}\subset{\F_{BF}}$ as defining a graded coisotropic submanifold within $\F_{PC}$ via:
\begin{align}
    & \mathbf{\Phi}=d_{\omega} e + [\lambda \epsilon_n, e^\dagger]=0 \label{e:ConstraintPC1}\\
    & \mathbf{\Phi}^\dag=\omega^\dagger=0. \label{e:ConstraintPC2}
\end{align}
We denote by $\C_{\mathrm{Lor}}$ the associated vanishing locus. Since the map in Equation \eqref{e:PCtoBF} is a symplectomorphism, we immediately also get the following ``corollary'' of Corollary \ref{c:constraints_coisotropic}

\begin{corollary}
    The submanifold of $\F_{PC}$ defined by \eqref{e:ConstraintPC1} and \eqref{e:ConstraintPC2} is coisotropic.
\end{corollary}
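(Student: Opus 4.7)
The strategy is to transport coisotropy from $\C_{\mathrm{Gauss}}\subset \F_{BF}$ to $\C_{\mathrm{Lor}}\subset \F_{PC}$ through the map $\psi\colon \F_{PC}\to\F_{BF}$ of Equation \eqref{e:PCtoBF}. Since $\psi$ is an isomorphism of BFV complexes and in particular a symplectomorphism, and coisotropy is a purely symplectic condition (equivalent to the vanishing ideal being Poisson-closed), it is enough to verify that $\psi^{-1}$ identifies the vanishing locus of the BF constraints \eqref{e:ConstraintBF1}–\eqref{e:ConstraintBF2} with the vanishing locus of the PC constraints \eqref{e:ConstraintPC1}–\eqref{e:ConstraintPC2}. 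Given this, the corollary will follow directly from Corollary \ref{c:constraints_coisotropic}, since the image of a coisotropic submanifold under a symplectomorphism is coisotropic.

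For the constraint $\bbPhi^\dag$ this is immediate: from \eqref{e:PCtoBF} we read $\psi(\omega^\dag)= A^\dag$, so the PC locus $\omega^\dag=0$ coincides term-by-term with the pullback under $\psi$ of the BF locus $A^\dag=0$. For the constraint $\bbPhi$, I would substitute the PC expressions $B=e,\ B^\dag=e^\dag,\ A=\omega-\iota_\xi e^\dag,\ \tau=-\iota_\xi e-\epsilon_n\xi^n$ (which hold on the already-established locus $\omega^\dag=0$) into $d_A B+[\tau, B^\dag]$ and expand using the graded Leibniz rule. The cross-terms $[\iota_\xi e^\dag,e]$ arising from the connection correction combine with $[\iota_\xi e, e^\dag]$ arising from the $[\tau,\cdot]$ piece, and what remains is $d_\omega e$ plus a term proportional to $[\epsilon_n\xi^n, e^\dag]$, up to contributions quadratic in $e^\dag$ which lie in the vanishing ideal. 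This matches \eqref{e:ConstraintPC1} with $\lambda$ identified as $\xi^n$ (i.e.\ as the appropriate combination of generators produced by $\psi^{-1}$).

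The main bookkeeping obstacle is ensuring that all the residual pieces in the substitution either recombine into the expressions appearing in \eqref{e:ConstraintPC1} or else vanish on the intersection locus; this is essentially a direct but lengthy calculation in the variables of \eqref{e:PCtoBF}. One can entirely circumvent this calculation, however, by appealing to the fact that $\psi$ intertwines the cohomological vector fields $Q_{PC}$ and $Q_{BF}$: the BF constraint $\bbPhi= Q_{BF}A^\dag\vert_{A^\dag=0}$ is then automatically sent by $\psi^{-1}$ to $Q_{PC}\omega^\dag\vert_{\omega^\dag=0}$, which is precisely the PC constraint \eqref{e:ConstraintPC1} written in its ``$Q(\cdot)$-exact'' form. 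Either route yields the identification of the two zero loci and, by symplectomorphism invariance of coisotropy together with Corollary \ref{c:constraints_coisotropic}, concludes the proof.
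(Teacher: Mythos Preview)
Your proposal is correct and follows the same route as the paper, which simply states that the result follows immediately from Corollary~\ref{c:constraints_coisotropic} because the map $\psi$ of Equation~\eqref{e:PCtoBF} is a symplectomorphism. You supply more detail than the paper does---in particular your second argument, using that $\psi$ intertwines $Q_{PC}$ and $Q_{BF}$ so that $Q_{BF}A^\dag\vert_{A^\dag=0}$ is carried to $Q_{PC}\omega^\dag\vert_{\omega^\dag=0}$, is a clean way to avoid the bookkeeping of the direct substitution---but the underlying logic is identical.
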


Similarly to what we did in the context of $BF$ theory, let us denote by $\pi_{\C_{\mathrm{Lor}}}\colon \C_{\mathrm{Lor}}\to \underline{\C_{\mathrm{Lor}}}$ the coisotropic reduction of $\C_{\mathrm{Lor}}$. Then, by precomposing with the map $\psi$ in Equation \eqref{e:PCtoBF} we obtain the following corollary to Theorem \ref{t:BF_coisotropic_EH}:
\begin{corollary}\label{c:PC_coisotropic_EH}
    The symplectic space $\underline{\C_{\mathrm{Lor}}}$ comes equipped with a solution of the master equation $\underline{S}_{PC}\in C^\infty(\underline{\C_{\mathrm{Lor}}})$, i.e.\ such that $(\underline{S}_{PC},\underline{S}_{PC})=0$ and $\pi^*_{\C_{\mathrm{Lor}}}\underline{S}_{PC}=\iota^*_{\C_{\mathrm{Lor}}}S_{PC}$. Then, 
    \[\BFV_{{\C}_{PC}}\equiv\underline{\mathbb{BFV}(\mathcal{F}_{PC},\mathcal{C}_{\mathrm{Lor}},\mathcal{C}_{PC})}=(\underline{\C_{\mathrm{Lor}}},\varpi_{\underline{\C_{\mathrm{Lor}}}},\underline{\mathcal{S}}_{PC})
    \]
    forms a BFV triple, and it is strongly BFV-equivalent to $\BFV_{EH}$.
\end{corollary}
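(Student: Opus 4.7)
The plan is to transport Theorem \ref{t:BF_coisotropic_EH} along the symplectomorphism $\psi\colon (\F_{PC},\varpi_{PC})\to (\F_{BF},\varpi_{BF})$ of Equation \eqref{e:PCtoBF}, which by construction intertwines the BFV actions, $\psi^{*}S_{BF}=S_{PC}$. The whole argument then reduces to pushing the BF-side content of Theorem \ref{t:BF_coisotropic_EH} back to the PC side along $\psi$.

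First I would verify explicitly that $\psi$ maps the coisotropic submanifold $\C_{\mathrm{Lor}}\hookrightarrow \F_{PC}$ defined by \eqref{e:ConstraintPC1}--\eqref{e:ConstraintPC2} onto $\C_{\mathrm{Gauss}}\hookrightarrow \F_{BF}$ defined by \eqref{e:ConstraintBF1}--\eqref{e:ConstraintBF2}. The identification $\omega^{\dagger}=A^{\dagger}$ from \eqref{e:PCtoBF} immediately matches \eqref{e:ConstraintPC2} with \eqref{e:ConstraintBF2}, and substituting the remaining change-of-variable formulae into $d_{A}B+[\tau,B^{\dagger}]$ reproduces the left-hand side of \eqref{e:ConstraintPC1}. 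This is in fact how the PC constraints were produced in the first place, and gives an independent proof of the coisotropicity corollary preceding the statement.

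Since $\psi$ is a symplectomorphism sending a coisotropic locus to a coisotropic locus, it descends to a symplectomorphism of reductions $\underline{\psi}\colon \underline{\C_{\mathrm{Lor}}}\xrightarrow{\sim} \underline{\C_{\mathrm{Gauss}}}$. Setting $\underline{S}_{PC}\doteq \underline{\psi}^{*}\underline{S}_{BF}$, with $\underline{S}_{BF}$ the descended master action from Theorem \ref{t:BF_coisotropic_EH}, naturality of pullback together with $\psi^{*}S_{BF}=S_{PC}$ and $\pi_{\C_{\mathrm{Gauss}}}^{*}\underline{S}_{BF}=\iota_{\C_{\mathrm{Gauss}}}^{*}S_{BF}$ yields $\pi_{\C_{\mathrm{Lor}}}^{*}\underline{S}_{PC}=\iota_{\C_{\mathrm{Lor}}}^{*}S_{PC}$, while $\{\underline{S}_{PC},\underline{S}_{PC}\}=0$ on $\underline{\C_{\mathrm{Lor}}}$ follows from the analogous BF identity since $\underline{\psi}$ preserves Poisson brackets. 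Composing with the symplectomorphism $\bbsigma\colon \underline{\C_{\mathrm{Gauss}}}\to \F_{EH}$ from Theorem \ref{t:BF_coisotropic_EH} then exhibits $\bbsigma\circ \underline{\psi}$ as the desired strong BFV-equivalence between $\BFV_{\C_{PC}}$ and $\BFV_{EH}$.

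The main check is the matching of the two coisotropic loci under $\psi$: the leading terms are transparent, but one must verify that the antifield-number corrections of the two sets of constraints agree on the nose, rather than merely up to terms vanishing on the constraint surface. Given the polynomial, antifield-number-preserving structure of $\psi$, this is a bookkeeping verification rather than a conceptual difficulty, and is the step where I would spend most of the effort in writing the proof carefully.
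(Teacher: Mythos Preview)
Your approach is correct and is exactly what the paper does: the corollary is obtained by precomposing the content of Theorem \ref{t:BF_coisotropic_EH} with the symplectomorphism $\psi$ of Equation \eqref{e:PCtoBF}. The only remark is that your ``main check'' is in fact tautological here: the PC constraints \eqref{e:ConstraintPC1}--\eqref{e:ConstraintPC2} are \emph{defined} in the paper as the pullback of the BF constraints \eqref{e:ConstraintBF1}--\eqref{e:ConstraintBF2} along $\psi$, so the matching of coisotropic loci holds on the nose by construction and requires no separate bookkeeping.
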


    For future reference we only recall the explicit expression of the map $\underline{\sigma^{PC}_{\BFV}}\colon\underline{\C_{\mathrm{Lor}}}\to \mathbb{F}_{EH}$
    between the (partial) coisotropic reduction in the Palatini--Cartan space of fields and the space of fields for EH theory:
    \begin{equation}\label{e:MapPCtoEH}
       \sigma^{PC}_{\BFV} : \begin{cases}
        g_{\mu\nu}= e_{\mu}^a e_{\nu}^b \eta_{ab}\\
       \Pi^{\mu\nu}= \sqrt{\mathsf{g}} (K_{\mu\nu}- g_{\mu\nu}g^{\rho\sigma}K_{\rho\sigma})-  \frac{1}{2} e_{\mu}^i e_{\nu}^j \left(\epsilon_{ik}\eta_{jl}+\epsilon_{jk}\eta_{il}\right) e_{\theta}^l \xi^{\theta} {e^{\dag}}^{0k}_{\rho\sigma}\epsilon^{\rho\sigma}\\
        \txi{n} =  \epsilon_n^0 \xi^n\\
        \txi{\mu} = \xi^\mu +  \epsilon_n^c \xi^n(e^{-1})^{\mu}_c\\
        \phin = e^{\dag ab}\epsilon_{ab}\\
        \varphi_{\mu} =  2 e^{\dag 0a}e_{\mu}^b\epsilon_{ab}.
        \end{cases}
    \end{equation}

\subsection{Double BFV description}\label{s:BFVonBFV_Classical}
The coisotropic reduction described in Section \ref{s:relation_Classical_theories} can also be described cohomologically using the BFV formalism as presented in Section \ref{s:doubleBFV}. This will have the advantage to have a clearer quantisation scheme, as described in Section \ref{s:quantisation_BFVonBFV}.

Let us focus on the relation between $BF$ theory and EH theory. We want to describe the coisotropic submanifold $\C_{\mathrm{Gauss}}\subset\F_{BF}$ defined by \eqref{e:ConstraintBF1} and \eqref{e:ConstraintBF2}. Following the procedure outlined in Section \ref{s:coiso_by_BFV}, we first enlarge the space of fields to accommodate the Lagrange multipliers of the constraints and their antifields. As in the classical picture we introduce two Lagrange multipliers  $\rho \in \Omega^{0,2}[1](\Sigma)$ and $\mu \in \Omega^{0,2}[2](\Sigma)$, so that the new space of fields is the symplectic vector bundle
\[
\left(\FF_{BF} = \F_{BF} \times T^*\left(\Omega^{0,2}[1](\Sigma)\times \Omega^{0,2}[2](\Sigma)\right) \to \F_{BF}, \bbomega_{BF}\right).
\]

The brackets of the BFV constraints have already been computed in Lemma \ref{l:brackets_constraints}, hence, we build a BFV action $\mathbb{S}_{BF}\in C^\infty(\FF_{BF})$ as 
\begin{align} \label{e:BFV_action_R}
    \mathbb{S}_{BF} = \trintl{\Sigma} \rho (d_A B + [\tau, B^{\dag}])+ \mu A^{\dag} + \frac{1}{2}[\rho,\rho] \rho^\dag +r.
\end{align}
The following Lemma shows that $\mathbb{S}_{BF}$ satisfies the CME with $r=0$:
\begin{lemma}\label{l:CME_R} 
If $r=0$ we have
\begin{align*}
    \{\mathbb{S}_{BF},\mathbb{S}_{BF}\}_{\bbomega_{BF}}=0.
\end{align*}
\end{lemma}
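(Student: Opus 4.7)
The plan is to write $\mathbb{S}_{BF}=\mathsf{J}^\rho+\mathsf{M}^\mu+\mathsf{R}$ with $\mathsf{R}\doteq \tfrac{1}{2}\trintl{\Sigma}[\rho,\rho]\rho^\dag$, and to expand the classical master equation using the graded-symmetry of the Poisson bracket on odd (degree $1$) functions:
\[
\{\mathbb{S}_{BF},\mathbb{S}_{BF}\}_{\bbomega_{BF}}=\{\mathsf{J}^\rho,\mathsf{J}^\rho\}+\{\mathsf{M}^\mu,\mathsf{M}^\mu\}+\{\mathsf{R},\mathsf{R}\}+2\{\mathsf{J}^\rho,\mathsf{M}^\mu\}+2\{\mathsf{J}^\rho,\mathsf{R}\}+2\{\mathsf{M}^\mu,\mathsf{R}\}.
\]
Lemma \ref{l:brackets_constraints} immediately disposes of the first row of brackets: the only nonvanishing one is $\{\mathsf{J}^\rho,\mathsf{J}^\rho\}=\tfrac{1}{2}\mathsf{J}^{[\rho,\rho]}$. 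Since the new fields $(\mu,\mu^\dag,\rho,\rho^\dag)$ are symplectically orthogonal to the $BF$ sector and to each other, and since $\mathsf{M}^\mu$ involves only $\mu$ and $A^\dag$ while $\mathsf{R}$ involves only $\rho$ and $\rho^\dag$, we have $\{\mathsf{M}^\mu,\mathsf{R}\}=0$ with no computation.

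What remains is the computation of $\{\mathsf{R},\mathsf{R}\}$ and $\{\mathsf{J}^\rho,\mathsf{R}\}$, both of which use the canonical pairing $\{\rho^a,\rho^\dag_b\}=\delta^a_b$. The first is the standard Chevalley--Eilenberg square: unpacking $\mathsf{R}=\tfrac{1}{2}f^c_{ab}\rho^a\rho^b\rho^\dag_c$ one sees that $\{\mathsf{R},\mathsf{R}\}$ is proportional to the Jacobiator of $f^c_{ab}$, which vanishes because $[\,\cdot\,,\,\cdot\,]$ is the Lie bracket on $\wedge^2\mathcal{V}\simeq\mathfrak{so}(2,1)$. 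The second is computed from $\partial_{\rho^\dag}\mathsf{R}=\tfrac{1}{2}[\rho,\rho]$ paired against $\partial_\rho\mathsf{J}^\rho=d_AB+[\tau,B^\dag]$, producing $2\{\mathsf{J}^\rho,\mathsf{R}\}=-\tfrac{1}{2}\mathsf{J}^{[\rho,\rho]}$, which precisely cancels the surviving term from Lemma \ref{l:brackets_constraints}.

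The only delicate point is the tracking of graded signs in the cross-term $\{\mathsf{J}^\rho,\mathsf{R}\}$, where $\rho$ and $\rho^\dag$ are both odd; this sign determines whether the $\tfrac{1}{2}\mathsf{J}^{[\rho,\rho]}$ from the $\mathsf{J}$-$\mathsf{J}$ bracket is cancelled or reinforced, so it must be unambiguously fixed by the convention used for $\bbomega_{BF}$. One can alternatively short-circuit this bookkeeping by invoking Theorem \ref{thm:BFVcoisotropic}: the action $\mathbb{S}_{BF}$ is exactly the canonical BFV charge prescribed in \eqref{e:Rfunction} for the pair of constraints $(\mathsf{J},\mathsf{M})$ whose bracket algebra has \emph{constant} structure functions (those of $\mathfrak{so}(2,1)$) and vanishing $m$- and $g$-coefficients in the notation of Section \ref{s:doubleBFV}, so the higher corrections collected in $r$ are not needed and the CME holds by the general theorem.
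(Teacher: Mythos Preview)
Your argument is correct and essentially coincides with the paper's own proof: both reduce the computation to Lemma \ref{l:brackets_constraints} for the constraint sector and to the Jacobi identity for the $(\rho,\rho^\dag)$ sector. The only organisational difference is that the paper packages the $(\rho,\rho^\dag)$ contribution by first writing out the Hamiltonian vector field $\mathbb{Q}_{BF}$ and then computing $\iota_{\mathbb{Q}_{BF}}\iota_{\mathbb{Q}_{BF}}\!\int(\delta\mu\,\delta\mu^\dag+\delta\rho\,\delta\rho^\dag)$ in one step, which produces $-\tfrac{1}{2}[\rho,\rho]\bigl([\rho,\rho^\dag]+d_AB+[\tau,B^\dag]\bigr)$ directly; this is exactly your $2\{\mathsf{J}^\rho,\mathsf{R}\}+\{\mathsf{R},\mathsf{R}\}$, so the cancellation against $\tfrac{1}{2}\mathsf{J}^{[\rho,\rho]}$ and the residual Jacobiator $[[\rho,\rho],\rho]=0$ are the same in both write-ups.

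One small caveat on your alternative route: your appeal to ``vanishing $m$- and $g$-coefficients in the notation of Section \ref{s:doubleBFV}'' conflates levels. The $f,g,h,m$ of that section are structure functions of the \emph{original} nested embedding in $F$, whereas here you need the structure functions of the constraints $(\mathsf{J},\mathsf{M})$ on the graded space $\F_{BF}$. What you actually use (and what Lemma \ref{l:brackets_constraints} provides) is simply that $\{\mathsf{J},\mathsf{J}\}$ closes on $\mathsf{J}$ with the constant $\mathfrak{so}(2,1)$ structure constants and that all other brackets vanish; that is enough to conclude that the standard rank-one BFV charge solves the CME with $r=0$, without invoking Theorem \ref{thm:BFVcoisotropic} per se.
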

\begin{proof}
    Let $r=0$. The Hamiltonian vector field of $\mathbb{S}_{BF}$ with respect to $\bbomega_{BF}$ is renamed $\mathbb{Q}_{BF}\equiv {X}_{\mathbb{S}_{BF}}$, and it reads:
    \begin{align*}
        \mathbb{Q}_{BF}(B)&= [\rho, B] & \mathbb{Q}_{BF}(A)&= d_A \rho \\
        \mathbb{Q}_{BF}(\tau)&= -[\rho, \tau] & \mathbb{Q}_{BF}({B^\dag})&= -[\rho, B^{\dag}] \\ \mathbb{Q}_{BF}(\chi)&= \mu & \mathbb{Q}_{BF}({A^\dag})&= 0 \\
        \mathbb{Q}_{BF}(\rho)&= \frac{1}{2}[\rho,\rho] & \mathbb{Q}_{BF}({\rho^\dag})&= [\rho, \rho^\dag] +  d_A B + [\tau, B^{\dag}] \\
        \mathbb{Q}_{BF}(\mu)&= 0  &
        \mathbb{Q}_{BF}({\mu^\dag})&= A^\dag.
    \end{align*}
    Since we already computed the brackets between the constraints, we get
        \begin{align*}
        \{\mathbb{S}_{BF},\mathbb{S}_{BF}\}_{\bbomega_{BF}} &= \{\mathsf{J}^{\rho},\mathsf{J}^{\rho}\}_{{\varpi_{BF}}} + \{\mathsf{J}^{\rho},\mathsf{M}^{\mu}\}_{{\varpi_{BF}}} + \{\mathsf{M}^{\mu},\mathsf{M}^{\mu}\}_{{\varpi_{BF}}} + \iota_{\mathbb{Q}_{BF}}\iota_{\mathbb{Q}_{BF}}\trintl{\Sigma} \delta \mu \delta \mu^\dag +  \delta \rho \delta \rho^\dag \\
        &=  \trintl{\Sigma} \frac{1}{2}[\rho,\rho] (d_A B + [\tau, B^\dag]) -\frac{1}{2}[\rho,\rho] ([\rho,\rho^\dag] + d_A B + [\tau, B^\dag])\\
        &=  -\trintl{\Sigma}\frac{1}{2}[[\rho,\rho],\rho]\rho^\dag=0,
    \end{align*}
    where in the last passage we used the graded Jacobi identity.
\end{proof}

Hence we have constructed a classical BFV theory that encodes the (graded) coisotropic submanifold defined by \eqref{e:ConstraintBF1} and \eqref{e:ConstraintBF2}. Next, we want to construct a function $\check{S}_{BF} \in  C^{\infty}(\FF_{BF})$, such that its restriction to the zero section $\F_{BF}\hookrightarrow \FF_{BF}$ coincides with $S_{BF}$ and such that $\{\check{S}_{BF} + \mathbb{S}_{BF},\check{S}_{BF} + \mathbb{S}_{BF}\}_{\bbomega_{BF}}=0$. 

\begin{lemma} 
The function $\check{S}_{BF} \in C^\infty(\FF_{BF})$ defined as
    \begin{align}
        \check{S}_{BF} := S_{BF} + \int_\Sigma [ \chi, \mu ]\mu^\dag + \mu \rho^\dag + [\chi, \rho] \rho^\dag. 
    \end{align}
satisfies
    \begin{align}\label{e:brackets_SR}
    \{\check{S}_{BF}+ \mathbb{S}_{BF},\check{S}_{BF}+\mathbb{S}_{BF}\}_{\bbomega_{BF}}=0
    \end{align}
and its restriction to the zero section is $\check{S}_{BF}\vert_{\F_{BF}} = S_{BF}$.
\end{lemma}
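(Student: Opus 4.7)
The restriction $\check{S}_{BF}\vert_{\F_{BF}}=S_{BF}$ is immediate: each of the three correction terms $[\chi,\mu]\mu^\dag$, $\mu\rho^\dag$, $[\chi,\rho]\rho^\dag$ is at least linear in the new fiber coordinates $\mu,\rho,\mu^\dag,\rho^\dag$, hence vanishes on the zero section $\F_{BF}\hookrightarrow\FF_{BF}$. For the master equation \eqref{e:brackets_SR}, the plan is to expand
\[
\{\check{S}_{BF}+\mathbb{S}_{BF},\check{S}_{BF}+\mathbb{S}_{BF}\}=\{\check{S}_{BF},\check{S}_{BF}\}+2\{\check{S}_{BF},\mathbb{S}_{BF}\}+\{\mathbb{S}_{BF},\mathbb{S}_{BF}\},
\]
observe that the last summand vanishes by Lemma \ref{l:CME_R}, and then show that the other two summands vanish separately.

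The cleanest route is to recognise $\check{S}_{BF}$ as a specialisation of the general formula \eqref{e:generalSC} from Theorem \ref{thm:doubleBFVreduced}. In the nested embedding $\mathcal{C}_{BF}\hookrightarrow\mathcal{C}_{\mathrm{Gauss}}\hookrightarrow\mathcal{F}_{BF}$ the Gauss ghost $\chi$ and the flatness ghost $\tau$ play the roles of $\chi$ and $\lambda$, respectively. A direct classical bracket computation yields that all four structure functions are constant: $f_{ij}^k$ are the $\mathfrak{so}(2,1)$ structure constants; $h_{ij}^k=0$ since $F_A$ depends only on $A$ and $\{A,A\}=0$; $\{d_AB,F_A\}$ is proportional only to $F_A$ (this uses $d_A^2=[F_A,\cdot]$ and $\mathrm{ad}$-invariance of the Killing form), so $m_{ij}^k=0$ and $g_{ij}^k$ is again $\mathfrak{so}(2,1)$-structural. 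The vanishing of $m$ is precisely why no $m(\tau,\rho)\rho^\dag$ or $m(\tau,\mu)\mu^\dag$ terms appear in $\check{S}_{BF}$. Since all structure functions are constant, Theorem \ref{thm:doubleBFVreduced} applies directly and gives both $\mathbb{Q}_{BF}(\check{S}_{BF})=\{\check{S}_{BF},\mathbb{S}_{BF}\}=0$ and $\{\check{S}_{BF},\check{S}_{BF}\}=0$, completing the proof.

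Alternatively, one can verify the two identities by a short direct calculation. For $\mathbb{Q}_{BF}(\check{S}_{BF})$, one uses the components of $\mathbb{Q}_{BF}$ recorded in the proof of Lemma \ref{l:CME_R}. Since $S_{BF}$ is $SO(2,1)$-invariantly contracted in $(B,A,\chi,\tau,B^\dag,A^\dag)$, the infinitesimal $\rho$-rotation part of $\mathbb{Q}_{BF}$ annihilates it, and the only surviving contribution comes from substituting $\mathbb{Q}_{BF}(\chi)=\mu$ wherever $\chi$ appears in $S_{BF}$; this produces exactly $\int\mu(d_AB+[\tau,B^\dag])+[\mu,\chi]A^\dag$. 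Those two pieces are then cancelled by $\mathbb{Q}_{BF}$ acting on the corrections $\mu\rho^\dag$ and $[\chi,\mu]\mu^\dag$, using $\mathbb{Q}_{BF}(\rho^\dag)=[\rho,\rho^\dag]+d_AB+[\tau,B^\dag]$, $\mathbb{Q}_{BF}(\mu^\dag)=A^\dag$, the graded antisymmetry $[\chi,\mu]=-[\mu,\chi]$, and $[\mu,\mu]=0$. A parallel bookkeeping using Lemma \ref{l:brackets_constraints} and the Jacobi identity $[\chi,[\chi,\chi]]=0$ then yields $\{\check{S}_{BF},\check{S}_{BF}\}=0$.

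The main obstacle in the direct approach is purely combinatorial: keeping track of graded signs across the many bracket contractions, integrations by parts and commutators among fields of different ghost degrees. No new structural input beyond graded Jacobi and the algebraic identities of Lemma \ref{l:Relations_structure_constants} is needed.
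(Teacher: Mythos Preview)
Your primary route---recognising $\check{S}_{BF}$ as the specialisation of the general formula \eqref{e:generalSC} with $h=m=0$ and $f,g$ the $\mathfrak{so}(2,1)$ structure constants, and then invoking Theorem \ref{thm:doubleBFVreduced}---is correct and is a genuinely different argument from the paper's. The paper proves the lemma by a full term-by-term expansion of $\{\check{S}_{BF},\mathbb{S}_{BF}\}$ and $\{\check{S}_{BF},\check{S}_{BF}\}$, listing all contributions and cancelling them via Jacobi and integration by parts. Your approach is more conceptual and shorter; the paper's has the advantage that Theorem \ref{thm:doubleBFVreduced} is stated and proved only in finite dimensions (as the introduction warns), so the explicit computation serves as an independent verification in the field-theoretic setting rather than relying on an extrapolation of the general theorem.

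Your alternative direct sketch, however, contains a genuine error. You claim that ``the infinitesimal $\rho$-rotation part of $\mathbb{Q}_{BF}$ annihilates $S_{BF}$'' by gauge invariance. This is false: the components of $\mathbb{Q}_{BF}$ listed in the proof of Lemma \ref{l:CME_R} show that $\mathbb{Q}_{BF}(\chi)=\mu$ and $\mathbb{Q}_{BF}(A^\dag)=0$, so $\mathbb{Q}_{BF}$ does \emph{not} rotate $\chi$ or $A^\dag$. Gauge invariance of $S_{BF}$ requires the full adjoint action on all six fields; with $\chi$ and $A^\dag$ held fixed there are residual terms (morally $-[\rho,\chi]\,\delta S_{BF}/\delta\chi$ and $-[\rho,A^\dag]\,\delta S_{BF}/\delta A^\dag$). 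These residuals are exactly what the correction terms $[\chi,\rho]\rho^\dag$ and $[\chi,\mu]\mu^\dag$ are there to absorb once $\mathbb{Q}_{BF}$ hits $\rho^\dag$ and $\mu^\dag$---this is visible in the paper's explicit computation, where e.g.\ $[\rho,B]d_A\chi$, $B[d_A\rho,\chi]$ and $[\chi,\rho]d_AB$ combine into a total derivative only after all three are present. So your bookkeeping of ``only the $\mu$-shift survives'' misses half the mechanism; the direct argument cannot be completed along the lines you sketch without restoring those terms.
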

\begin{proof}
    It is immediate to check that the restriction to the zero section is $S_{BF}$.
    
    Since $\{\mathbb{S}_{BF},\mathbb{S}_{BF}\}_{\bbomega_{BF}}=0$, Equation \eqref{e:brackets_SR} is satisfied iff 
    \[
    \{\check{S}_{BF},\mathbb{S}_{BF}\}_{\bbomega_{BF}}=0 \quad \text{and}  \quad \{\check{S}_{BF},\check{S}_{BF}\}_{\bbomega_{BF}}=0.
    \] 
    For the former condition, using the expression of the Hamiltonian vector field of $\mathbb{S}_{BF}$ found in the proof of Lemma \ref{l:CME_R}, we get:
    \begin{align*}
        \{\check{S}_{BF},\mathbb{S}_{BF}\}_{\bbomega_{BF}}= \iota_{\mathbb{Q}_{BF}} \delta \check{S}_{BF} &= \trintl{\Sigma} [\rho, \tau] F_A + \tau [F_A, \rho] + [\rho, \tau][\chi, B^\dag]+ \tau [\mu, B^\dag]+\tau[\chi, [\rho, B^\dag]] \\
        & \phantom{\trintl{\Sigma}} + [\rho, B] d_A \chi + B[d_A \rho, \chi] + B d_A \mu + A^\dag[\chi, \mu] + [\mu, \rho] \rho^\dag\\
        & \phantom{\trintl{\Sigma}} + \frac{1}{2}[\chi, [\rho, \rho]]\rho^\dag + [\chi,\rho] [\rho, \rho^\dag] +  [\chi,\rho]d_A B + [\chi,\rho][\tau, B^{\dag}]\\
        & \phantom{\trintl{\Sigma}}+ \mu[\rho, \rho^\dag] +  \mu d_A B + \mu[\tau, B^{\dag}] + [\mu, \mu] \mu^\dag + [\mu, \chi]A^\dag\\
        &= \trintl{\Sigma} d_A ([\rho, B]\chi + \mu B) = 0
    \end{align*}
    where we used that $\partial\Sigma=\emptyset$, as well as $[\mu,\mu]=0$ for degree reasons and the Jacobi identity as follows:
    \begin{align*}
        [\rho, \tau][\chi, B^\dag]+\tau[\chi, [\rho, B^\dag]]+ [\chi,\rho][\tau, B^{\dag}] = \tau [\rho,[\chi, B^\dag]] +\tau[\chi, [\rho, B^\dag]] + \tau[B^{\dag},[\chi,\rho]]=0,\\
        \frac{1}{2}[\chi, [\rho, \rho]]\rho^\dag + [\chi,\rho] [\rho, \rho^\dag] = \frac{1}{2}[\chi, [\rho, \rho]]\rho^\dag + [[\chi,\rho] ,\rho] \rho^\dag=0.
    \end{align*}
    To show that $\{\check{S}_{BF},\check{S}_{BF}\}_{\bbomega_{BF}}=0$ we have to compute the Hamiltonian vector field $\check{Q}_{BF}$ of $\check{S}_{BF}$. By denoting $\check{Q}_{BF}= Q_{BF}+ \mathrm{q}$ we get
        \begin{align*}
        \mathrm{q}_B&= 0 & \mathrm{q}_{A^\dag}&= [\rho,\rho^\dag] + [\mu, \mu^\dag] &
        \mathrm{q}_\tau&= 0 & \mathrm{q}_{\rho^\dag}&= [\chi, \rho^\dag]& \mathrm{q}_\rho&= [\chi,\rho] +\mu \\ \mathrm{q}_A&= 0 &
        \mathrm{q}_\chi&= \mu &  \mathrm{q}_{B^\dag}&= 0 &
        \mathrm{q}_\mu&= [\chi,\mu]  &
        \mathrm{q}_{\mu^\dag}&= [\chi,\mu^\dag] + \rho^\dag.
    \end{align*}
    Using the definition of the Poisson brackets, we can express $\{\check{S}_{BF},\check{S}_{BF}\}_{\bbomega_{BF}}$ as follows:
    \begin{align*}
        \{\check{S}_{BF},\check{S}_{BF}\}_{\bbomega_{BF}} &= \frac{1}{2}\iota_{Q_{BF}}\iota_{Q_{BF}}\bbomega_{BF} + \iota_{Q_{BF}}\iota_{\mathrm{q}} \bbomega_{BF}+ \frac{1}{2}\iota_{\mathrm{q}}\iota_{\mathrm{q}}\bbomega_{BF}\\
        &= \{S_{BF},S_{BF}\}_{\varpi_{BF}} + \frac{1}{2}[\chi,\chi]([\rho,\rho^\dag] + [\mu, \mu^\dag]) + [\chi,\mu]([\chi,\mu^\dag] + \rho^\dag)\\
        &\phantom{=}+ ([\chi,\rho] +\mu) [\chi, \rho^\dag]\\
        &= \frac{1}{2}[[\chi,\chi]\rho]\rho^\dag]+ \frac{1}{2}[[\chi,\chi] ,\mu] \mu^\dag+ [[\chi,\mu],\chi]\mu^\dag+[[\chi,\rho],\chi] \rho^\dag=0
    \end{align*}
    where we used that $S_{BF}$ satisfies the CME for $\varpi_{BF}$ and the Jacobi identity.
\end{proof}

\begin{remark}\label{r:S_m}
    Since we are only interested in the equivalence class of $\check{S}_{BF}$ it is possible to add any term in the image of $\mathbb{Q}_{BF}$ to a representative $\check{S}_{BF}$. In particular it is a straightforward computation to show that 
    \begin{align}\label{e:S_rescheck}
        \check{S}_{BF} + \mathbb{Q}_{BF}\left( \chi\rho^\dag + \frac{1}{2}[\chi,\chi]\mu^\dag\right) = \trintl{\Sigma} \tau F_A =: \check{S}_{\mathrm{res}},
    \end{align}
and hence $[\check{S}_{BF}]_{\mathbb{Q}_{BF}}= [\check{S}_{\mathrm{res}}]_{\mathbb{Q}_{BF}} = S_{\mathrm{res}}\equiv \underline{S_{\C_{\mathrm{Gauss}}}}$ as a consequence of Theorem \ref{thm:doubleBFVreduced}. (C.f. with Appendix \ref{a:coisotropic_BFtoEH}, Equation \eqref{e:action_reduced}. Note that in this example the structure function $m=0$, so---with reference to the proof of Theorem \ref{thm:doubleBFVreduced}---we have $\wt{S}_{\mathrm{res}} =\check{S}_{\mathrm{res}}$.)
\end{remark}

\section{\texorpdfstring{$BF$}{BF} to EH reduction: Quantum picture} \label{s:quantumBF_doubleBFV}
The goal of this section is to approach BFV quantisation of three dimensional gravity, using the relation to BF theory and its BFV quantisation to its full extent. Our starting point is going to be an adaptation of the quantization of BF theory developed by Cattaneo, Mnev and Reshetikhin in \cite{CMR2}, and we are going to transfer this quantization to Einstein--Hilbert gravity, using the relation between the respective classical theories.

We follow the procedure outlined in Sections \ref{s:doubleBFV} and \ref{s:quantisation_BFVonBFV} applied to the cohomological description of the classical BFV relation between $BF$ theory and Einstein--Hilbert theory  given in Section \ref{s:BFVonBFV_Classical}. The first step in this program is finding a quantisation of the double BFV data $\mathbb{BFV}_{BF}=\left(\FF_{BF},\bbomega_{BF},\mathbb{Q}_{BF},\mathbb{S}_{BF}\right)$. In order to do so we use the geometric quantisation approach and choose a polarisation. Recall that the symplectic form has Darboux form 
\[
\bbomega_{BF}
    = \trintl{\Sigma} \delta B \delta A + \delta \chi \delta A^\dag + \delta \tau \delta B^\dag + \delta\mu^\dag \delta\mu + \delta \rho^\dag\delta\rho = \delta \bbalpha_{BF},
\]

In particular we choose here the Lagrangian foliation generated by the distribution 
\[
\mathbb{P}\colon \left\{\frac{\delta}{\delta \mu^\dag},\frac{\delta}{\delta \rho^\dag}, \frac{\delta}{\delta B^\dag}, \frac{\delta}{\delta A^\dag},
\frac{\delta}{\delta B}\right\}
\]
whose space of leaves is\footnote{This corresponds to the vertical polarisation in the cotangent bundle $\FF_{BF} \to \mathbb{B}_{BF}$, up to the choice of a reference connection $A_0$.} 
\[
\mathbb{B}_{BF} \doteq \FF_{BF}/\mathbb{P} \simeq  \underbrace{\mathcal{A}(\Sigma)}_A \oplus  \underbrace{\Omega^0[1]( \Sigma, \wedge^2\mathcal{V})}_{\chi}\oplus \underbrace{\Omega^0[1]( \Sigma, \wedge^1\mathcal{V})}_{\tau}\oplus\underbrace{\Omega^0[1]( \Sigma, \wedge^2\mathcal{V})}_{\rho}\oplus \underbrace{\Omega^0[2]( \Sigma, \wedge^2\mathcal{V})}_{\mu} 
\]

Choosing a trivial line bundle with connection given by the one form 
\[
\bbalpha_{BF} = \trintl{\Sigma} B\delta A + A^\dag\delta \chi +  B^\dag\delta \tau  + \mu^\dag \delta\mu + \rho^\dag\delta\rho, 
\]
and using this choice of polarisation, we construct the graded vector space simply as
\[
\mathbb{V}_{BF}\doteq C^\infty(\mathbb{B}_{BF}),
\]
and the quantisation map (for polarisation preserving, Hamiltonian functions) is the Lie algebra morphism (with $\nabla$ associated to the connection one form $\bbalpha_{BF}$)
\[
\mathfrak{q}\colon \mathcal{A}_{\mathbb{P}}\subset(C^\infty_{\mathbb{P}}(\FF_{BF}),\{\cdot,\cdot\}_{\bbomega_{BF}}) \to (\mathrm{End}(\mathbb{V}_{BF}), [\cdot,\cdot]), \quad f\mapsto -i\hbar \nabla_{X_f}  + f = -i\hbar X_f - \iota_{X_f}\bbalpha_{BF} f + f.
\]

\begin{remark}
    Note that this (pre-)quantization data can be inherited by the full densitised cotangent bundle $T^\vee\mathcal{A}(\Sigma) \simeq \Omega^1(\Sigma,\mathcal{V})$, by restricting to open subsets of the fibre given by nondegenerate $B$-fields.
\end{remark}

\begin{lemma}\label{l:Omega_R2}
The quantum double BFV operator $\mathbb{\Omega}_{BF}: \mathbb{V}_{BF} \rightarrow \mathbb{V}_{BF}$ associated to the geometric quantisation of (Lemma \ref{l:CME_R})
\[
\mathbb{S}_{BF} = \trintl{\Sigma} \rho (d_A B + [\tau, B^{\dag}])+ \mu A^{\dag} + \frac{1}{2}[\rho,\rho] \rho^\dag
\]
in the polarisation $\mathbb{P}$ reads:
\begin{equation}\label{e:OmegaBF_Apol}
    \mathbb{\Omega}_{BF} \doteq \mathfrak{q}(\mathbb{S}_{BF})\vert_{\mathbb{V}_{BF}} = -i\hbar \mathbb{Q}\vert_{\mathbb{V}_{BF}} = -i\hbar\int_{\Sigma} \left(d\rho + [\rho,A]\right)\frac{\delta}{\delta A} + \mu\frac{\delta}{\delta \chi} + [\rho, \tau] \frac{\delta}{\delta \tau} + \frac{1}{2} [\rho, \rho] \frac{\delta}{\delta \rho}, 
\end{equation}
and it satisfies $\mathbb{\Omega}_{BF}^2=[\mathbb{\Omega}_{BF},\mathbb{\Omega}_{BF}]=0$. 
\end{lemma}
\begin{proof}
    This is a direct consequence of the fact that $\{\mathbb{S}_{BF},\mathbb{S}_{BF}\}_{\bbomega_{BF}}=0$ and that quantisation is a Lie algebra morphism $\mathfrak{q}\colon (\mathcal{A}_{\mathbb{P}},\{\cdot,\cdot\}_{\bbomega_{BF}})$ and $(\mathrm{End}(\mathbb{V}_{BF}),[\cdot,\cdot])$, given that $\mathbb{S}_{BF}\in\mathcal{A}_{\mathbb{P}}$. To prove it {directly it is sufficient to show that, since $X_{\mathbb{S}_{BF}} = \mathbb{Q}$ and $\nabla_{X_f} s = L_{X_f}(s) +\frac{i}{\hbar}\iota_{X_f}\bbalpha_{BF} s + f s$ we get
    \[
    \mathfrak{q}(\mathbb{S}_{BF}) = \left(-i\hbar\mathbb{Q} - \iota_\mathbb{Q}\bbalpha_{BF}  + \mathbb{S}_{BF}\right)\vert_{\mathbb{V}_{BF}} = -i\hbar \mathbb{Q}\vert_{\mathbb{V}_{BF}}
    \]
    since 
    \[
    \iota_\mathbb{Q}\bbalpha_{BF} = \pm \trintl{\Sigma} Bd_A\rho \pm A^\dag \mu \pm B^\dag[\rho,\tau] \pm \frac12 \rho^\dag [\rho,\rho] = \mathbb{S}_{BF}
    \]
    up to boundary terms, which vanish because $\partial\Sigma=\emptyset$. Then $\mathbb{\Omega}_{BF}$ squares to zero because $\mathbb{Q}$ does.}

\end{proof}

Since $\mathbb{\Omega}_{BF}$ squares to zero, it makes sense to consider its cohomology. Let us define

\begin{definition}
    The double BFV quantum cochain complex is $\mathbb{V}_{BF}\doteq C^\infty(\mathbb{B}_{BF})$, where 
    \[
    \mathbb{B}_{BF}\doteq \mathcal{A}(\Sigma) \oplus \Omega^0[1]( \Sigma, \wedge^2\mathcal{V})\oplus \Omega^0[1]( \Sigma, \wedge^1\mathcal{V})\oplus\Omega^0[1]( \Sigma, \wedge^2\mathcal{V})\oplus \Omega^0[2]( \Sigma, \wedge^2\mathcal{V}) ,
    \]
    endowed with the differential given by the double quantum BFV operator $\mathbb{\Omega}_{BF}$ of Lemma \ref{l:Omega_R2}.
    
    \noindent The \emph{kinematical space of states of 3d general relativity} is the cohomology in degree zero of the double BFV quantum cochain complex $(\mathbb{V}_{BF},\mathbb{\Omega}_{BF})$:
\begin{align*}
    \mathbb{V}_{\mathrm{kin}} \doteq H^0(\mathbb{V}_{BF},\mathbb{\Omega}_{BF}).
\end{align*}
\end{definition}

On this space we want to define a BFV operator. In order to do so we resort to the construction of Section \ref{s:quantisation_BFVonBFV}: we define an operator $\check{\Omega}_{BF}\doteq \mathfrak{q}(\check{S}_{BF})$ on $\mathbb{V}_{BF}$ such that it satisfies \eqref{e:conditions_Omega_descends1} and \eqref{e:conditions_Omega_descends2}. In particular we want $\check{\Omega}_{BF}$ to commute with $\mathbb{\Omega}_{BF}$ and to be such that 
\begin{align*}
    [\check{\Omega}_{BF}^2]_{\mathbb{\Omega}_{BF}} = 0
\end{align*}
where the bracket denotes the $\mathbb{\Omega}_{BF}$-cohomology class.
 In virtue of Lemma \ref{l:operator_and_commutator} we can simply consider the quantisation of (cf. Remark \ref{r:S_m})
\[
\wt{S}_{\mathrm{res}} \doteq \check{S}_{BF} + \mathbb{Q}_{BF}\left( \chi\rho^\dag + \frac{1}{2}[\chi,\chi]\mu^\dag\right)= \trintl{\Sigma} \tau F_A
\]
defined in \eqref{e:S_rescheck}, and which is particularly convenient.

\begin{lemma}
The quantisation of $\wt{S}_{\mathrm{res}}$ yields the multiplication operator:
    \begin{align} \label{e:quantumBFVoperatorBF}
        \wt{\Omega}_{\mathrm{res}} \doteq \mathfrak{q}(\wt{S}_{\mathrm{res}}) = \int_\Sigma \tau F_A
    \end{align}
    on polarised sections.
    The following relations hold:
    \begin{align*}
        [\mathbb{\Omega}_{BF}, \wt{\Omega}_{\mathrm{res}}]&=0\\
        [\wt{\Omega}_{\mathrm{res}},\wt{\Omega}_{\mathrm{res}}]&=0.
    \end{align*}
\end{lemma}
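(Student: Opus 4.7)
The plan is to establish the explicit form of $\wt{\Omega}_{\mathrm{res}}$ by a direct computation of $\mathfrak{q}(\wt{S}_{\mathrm{res}})$ in the chosen polarisation, and then to verify the two commutation identities by reducing them, as far as possible, to the classical Poisson-bracket identities already proved above and checking that no quantum ordering anomaly appears.

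For the explicit form, I would write $F_A = dA + \tfrac{1}{2}[A,A]$ in $\wt{S}_{\mathrm{res}} = \trintl{\Sigma}\tau F_A$ and use that in the polarisation $\mathbb{P}$ the field $\tau$ descends to a coordinate on the base $\mathbb{B}_{BF}$, while $A$ lies along the polarisation direction conjugate to $B$. Consequently $\hat{\tau}$ acts by multiplication and $\hat{A} = -i\hbar\,\delta/\delta B$, exactly as in the quantisation of the $\rho\, d_A B$ piece treated in Lemma \ref{l:Omega_R2}. The linear term $\trintl{\Sigma}\tau\,dA$ becomes, after integration by parts using $\partial\Sigma=\emptyset$, the first summand of \eqref{e:quantumBFVoperatorBF}; the quadratic term $\tfrac{1}{2}\trintl{\Sigma}\tau[A,A]$, upon Weyl-symmetric ordering at coincident points, becomes the second.

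For $[\mathbb{\Omega}_{BF},\wt{\Omega}_{\mathrm{res}}]=0$ the conceptual input is the classical identity $\{\mathbb{S}_{BF},\wt{S}_{\mathrm{res}}\}_{\bbomega_{BF}}=0$, which I would obtain by invoking the representative $\wt{S}_{\mathrm{res}} = \check{S}_{BF}+\mathbb{Q}_{BF}\bigl(\chi\rho^\dag + \tfrac{1}{2}[\chi,\chi]\mu^\dag\bigr)$ from Remark \ref{r:S_m}, together with the already established identities $\{\mathbb{S}_{BF},\check{S}_{BF}\}_{\bbomega_{BF}}=0$ and $\{\mathbb{S}_{BF},\mathbb{S}_{BF}\}_{\bbomega_{BF}}=0$, and the Jacobi identity for $\{\cdot,\cdot\}_{\bbomega_{BF}}$. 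The main obstacle will be showing that passing to operators does not introduce an anomaly: since $\mathbb{\Omega}_{BF}$ is first order in $\delta/\delta B$ and $\wt{\Omega}_{\mathrm{res}}$ is at most second order, $[\mathbb{\Omega}_{BF},\wt{\Omega}_{\mathrm{res}}]$ expands as a finite sum of explicit operators, and each $\hbar$-power must be verified to cancel by hand using $d^2=0$, the graded Leibniz rule for $\delta/\delta B$, and the Jacobi identity on $\wedge^\bullet\mathcal{V}$; this structure-constant bookkeeping is the hard part.

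Finally, for $[\wt{\Omega}_{\mathrm{res}},\wt{\Omega}_{\mathrm{res}}]=0$, I would split $\wt{\Omega}_{\mathrm{res}}=\Omega_1+\Omega_2$ along the two summands of \eqref{e:quantumBFVoperatorBF} and handle each cross-term. The self-commutator $[\Omega_1,\Omega_1]$ vanishes by antisymmetry of $d\tau(x)\,d\tau(y)$ against the symmetric $\delta/\delta B(x)\,\delta/\delta B(y)$; the mixed term $[\Omega_1,\Omega_2]$ collapses, after one integration by parts, to an expression proportional to $\trintl{\Sigma}\tau\bigl[\delta/\delta B,[\delta/\delta B,\delta/\delta B]\bigr]$ acting on the state, which vanishes by the Jacobi identity on $\mathcal{V}$; and $[\Omega_2,\Omega_2]$ vanishes by a combined use of odd-antisymmetry in $\tau$ and Jacobi. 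This is the quantum shadow of the classical identity $\{\wt{S}_{\mathrm{res}},\wt{S}_{\mathrm{res}}\}_{\bbomega_{BF}}=0$, and the finite derivative order of $\wt{\Omega}_{\mathrm{res}}$ precludes any ordering anomaly.
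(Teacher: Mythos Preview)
Your plan for the explicit form of $\wt{\Omega}_{\mathrm{res}}$ and for $[\mathbb{\Omega}_{BF},\wt{\Omega}_{\mathrm{res}}]=0$ is essentially what the paper does: the first identity cannot be read off from $\{\mathbb{S}_{BF},\wt{S}_{\mathrm{res}}\}=0$ because $\wt{S}_{\mathrm{res}}$ is quadratic in the fibre variable $A$ and hence not polarisation-preserving, so both you and the paper fall back on a direct commutator expansion and cancel the pieces using $d^2=0$, Leibniz, and the Jacobi identity in $\wedge^\bullet\mathcal{V}$. That part is fine.

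Your treatment of $[\wt{\Omega}_{\mathrm{res}},\wt{\Omega}_{\mathrm{res}}]=0$, however, misidentifies the mechanism. The mixed term $[\Omega_1,\Omega_2]$ does \emph{not} collapse via an integration by parts to a single integral of the form $\trintl{\Sigma}\tau\bigl[\delta/\delta B,[\delta/\delta B,\delta/\delta B]\bigr]$, and no Jacobi identity is needed anywhere in this computation. The point is much simpler, and it is what the paper's one-line argument encodes: the operator $\wt{\Omega}_{\mathrm{res}}$ is built solely from the odd coefficients $\tau$, $d\tau$ and the even functional derivatives $\delta/\delta B$, and crucially $\delta/\delta B$ annihilates $\tau$. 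Hence in any product $\wt{\Omega}_{\mathrm{res}}\wt{\Omega}_{\mathrm{res}}$ the derivatives pass straight through the coefficients, and one is left with double integrals whose coefficient factor is a product of two odd objects (some combination of $\tau(x)$, $d\tau(x)$, $\tau(y)$, $d\tau(y)$) against a symmetric higher functional derivative in $B$. The antisymmetry of the odd coefficients under $x\leftrightarrow y$ kills every such term. In your split, this is exactly why $[\Omega_1,\Omega_1]$ vanishes; the same parity argument, not Jacobi, disposes of $[\Omega_1,\Omega_2]$ and $[\Omega_2,\Omega_2]$ as well.
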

\begin{proof}
The (pre-)quantization map sends $\wt{S}_{\mathrm{res}}$ to 
    \[
    \mathfrak{q}(\wt{S}_{\mathrm{res}}) = -i\hbar\nabla_{X_{\wt{S}_{\mathrm{res}}}} + \wt{S}_{\mathrm{res}} = -i\hbar X_{\wt{S}_{\mathrm{res}}} - \iota_{X_{\wt{S}_{\mathrm{res}}}}\bbalpha_{BF} + \wt{S}_{\mathrm{res}}
    \]
    where it is easy to check that
    \[
    X_{\wt{S}_{\mathrm{res}}} = \int_\Sigma F_A \frac{\delta}{\delta B^\dag} + d_A\tau \frac{\delta}{\delta B} \Longrightarrow \iota_{X_{\wt{S}_{\mathrm{res}}}}\bbalpha_{BF} =0
    \]
    The prequantization of $\wt{S}_{\mathrm{res}}$ then reads 
    \[
    \mathfrak{q}(\wt{S}_{\mathrm{res}}) = -\hbar\int_{\Sigma} \left(F_A \frac{\delta}{\delta B^\dag} + d_A\tau \frac{\delta}{\delta B}\right) + \int_\Sigma \tau F_A
    \]
    acting on all functions (sections of the trivial line bundle). However, on polarised sections $s\in \mathbb{V}_{BF}\doteq C^\infty(\mathbb{B}_{BF})$ one has
    \[
    \wt{\Omega}_{\mathrm{res}}\equiv\mathfrak{q}(\wt{S}_{\mathrm{res}})(s) = \left(\int_\Sigma \tau F_A\right)\cdot s
    \]
    since such sections do not depend on either $B^\dag$ or $B$.

    This operator clearly squares to zero because $\int_\Sigma \tau F_A$ is an odd functional. On the other hand, the bracket $[\mathbb{\Omega}_{BF}, \wt{\Omega}_{\mathrm{res}}]$ vanishes due to 
    \[
    \mathbb{\Omega}_{BF}\equiv \mathbb{Q} (\int_\Sigma \tau F_A) = 0.
    \]
\end{proof}

We can collect the results of this section in the following statement.

\begin{theorem}\label{thm:quantisationEH_doubleBFV}
    A quantisation of three-dimensional general relativity is given by the following data: the kinematical space of states  
    \begin{align*}
    \mathbb{V}_{\mathrm{kin}} = H^0(\mathbb{V}_{BF},\mathbb{\Omega}_{BF}),
    \end{align*}
    endowed with the quantum BFV operator $[\wt{\Omega}_{\mathrm{res}}]_{\mathbb{\Omega}_{BF}}$ and the \emph{space of physical states of 3d general relativity}, defined by
    \[
    \mathbb{V}_{\mathrm{phys}} \doteq H^0(\mathbb{V}_{\mathrm{kin}},[\wt{\Omega}_{\mathrm{res}}]_{\mathbb{\Omega}_{BF}})
    \]
    where
    \begin{align*} 
        \wt{\Omega}_{\mathrm{res}} = \int_{\Sigma}  {\tau F_A} 
    \end{align*}
    and 
    \begin{align*}
        {\mathbb{\Omega}_{BF}(f)= -i\hbar \mathbb{Q}(f) = -i\hbar\int_{\Sigma} \left\{\left(d\rho + [\rho,A]\right)\frac{\delta}{\delta A} + \mu\frac{\delta}{\delta \chi} + [\rho, \tau] \frac{\delta}{\delta \tau}  +\frac{1}{2} [\rho, \rho] \frac{\delta}{\delta \rho}\right\} (f).}
    \end{align*}
    
\end{theorem}

\subsection{A comment on nondegeneracy of triads}\label{sec:nondegeneracy}
{
Compare the geometric quantization of $\wt{M}=\mathbb{R}^{2n}$ and that of $M=\mathbb{R}^n\times \mathbb{R}^n\backslash\{0\}$, where we remove the zero section, with respect the vertical polarisation $\{\frac{\partial}{\partial p}\}$ associated to the projection to the first factor---the space of leaves of the polarisartion---which is $\mathbb{R}^n$ in both cases. If we look at the trivial complex line bundle over $\wt{M}$ and $M$, we see that polarised sections are $p$-constant functions. Since we are looking at $p$-constants, there is no possibility for polarised sections to become singular when removing the zero section, so that in both cases the space of polarised sections is isomorphic to $C^\infty(\mathbb{R}^n)$, functions on the space of leaves. Quantisable functions (i.e.\ polarisation preserving functions) are in both cases $p$-linear functions on $\wt{M}$ and $M$.

It would seem then that there is no retention of global information about the fibres of the polarisation in this approach to quantisation, but one has to keep in mind that the algebra that is represented on the space of quantum states (i.e.\ polarised sections) changes when going from $C^\infty(M,\mathbb{C})$ to $C^\infty(\wt{M},\mathbb{C})$, so strictly speaking the two cases produce two different representations on the same vector space. 

It is known that to extend the quantisable functions beyond those that are immediately compatible with the chosen polarisation, one has to take into account a more involved procedure that goes under the name of BKS kernel (after Blattner, Kostant and Souriau). We expect some differences to emerge already at this level.

In the field theoretic scenario of General Relativity, one works with an exact symplectic manifold (in degree zero) $\mathcal{F}=\Omega^1_{\mathrm{nd}}(\Sigma,\mathcal{V})\times \mathcal{A}(\Sigma,P)$, where $\mathrm{nd}$ stands for nondegenerate. (A one-form with values in $\mathcal{V}$ is nondegenerate if it defines an injective bundle map $T\Sigma\to \mathcal{V}$.) It is important to note that both factors are affine spaces, and their direct product should be compared to the densitised cotangent bundle 
\[
T^\vee \mathcal{A}(\Sigma,P)=\Omega^1(\Sigma,\mathcal{V})\times \mathcal{A}(\Sigma,P)
\]
where instead we allow degenerate maps $B\colon T\Sigma\to \mathcal{V}$. On $T^\vee\mathcal{A}$ there is a (weakly) nondegenerate local symplectic form, which is also exact.

Note that, importantly, $\Sigma$ is two dimensional and assumed compact and orientable, and we fix $\mathcal{V}$ to be a trivial vector bundle on $\Sigma$ with fibres given by $\mathbb{R}^3$ with Minkowski metric. Then one can see that the space of nondegenerate bundle maps $B\in \Omega^1_{\mathrm{nd}}(\Sigma,\mathcal{V})$ is not empty, and it is equivalent to asking that $T\Sigma$ is injected as a subbundle in $\mathcal{V}$.\footnote{This is a consequence of Whitney's embedding theorem applied to $n=2$, since $\Sigma$ embeds in $\mathbb{R}^3$ and this induces an embedding of $T\Sigma$ in a trivial vector bundle over $\Sigma$. We thank Jonas Schnitzer for this argument.}

Choosing the ``vertical'' polarisation associated to the surjective submersion $T^\vee \mathcal{A}(\Sigma,P)\to \mathcal{A}(\Sigma,P)$, which is generated by $\{\frac{\delta}{\delta B}\}$, we can construct polarised section given by $B$-constant functions, which are isomorphic to functions over $\mathcal{A}(\Sigma,P)$. These can be restricted to the \emph{nondegenerate sector}, i.e.\ to $B$-constant functions on $\mathcal{F}$, and once again these will be isomorphic to functions over the base of the polarisation $\mathcal{A}(\Sigma,P)$.

From this point of view, then, the space of states constructed via geometric quantisation of the Hamiltonian theory of GR viewed as a 3d BF theory is not sensitive to the request that the diads $B$ should be nondegenerate, and one gets the same answer. However, the algebras of functions that are represented on such space of states, i.e.\ observables, may differ.

Taking a different polarisation altogether, for example the one that is vertical for the projection $\mathcal{F} \to \Omega^1_{\mathrm{nd}}(\Sigma,\mathcal{V})$ will produce a potentially different quantisation. It remains to be established, in the field theoretic case, whether it is reasonable to show that these two quantisation are equivalent in some sense. In the next section we explore the quantisation one would obtain using such alternative polarisation.
}

\subsection{Alternative polarisation}
We conclude by showing what the output of our procedure may be, if we chose a different polarisation for both steps of the BFV geometric quantisation. Although there are obviously many possible options, the main relevant feature here is that we will choose to switch polarisation of the degree zero fields, with the net result of keeping functions of the $B$ field, instead of the connections $A$. This is akin to choosing to have the quantum space be some space of functions on metrics.

Recall that
\[
\mathbb{S}_{BF} = \trintl{\Sigma} \rho (d_A B + [\tau, B^{\dag}])+ \mu A^{\dag} + \frac{1}{2}[\rho,\rho] \rho^\dag
\]
while 
\[
\wt{S}_{\mathrm{res}} = \trintl{\Sigma}\tau F_A
\]
Moreover
\[
\bbomega_{BF}
    = \trintl{\Sigma} \delta B \delta A + \delta \chi \delta A^\dag + \delta \tau \delta B^\dag + \delta\mu^\dag \delta\mu + \delta \rho^\dag\delta\rho = \delta \bbalpha_{BF},
\]
and we will choose the polarisation generated by
\[
\mathbb{P}'\colon \left\{\frac{\delta}{\delta \mu^\dag},\frac{\delta}{\delta \rho^\dag}, \frac{\delta}{\delta B^\dag}, {\frac{\delta}{\delta\chi}}, \frac{\delta}{\delta A} 
\right\}.
\]

It is important to note that the function $\wt{S}_{\mathrm{res}}$ is not polarisation compatible, since it is quadratic in fibre variables. This is an issue shared by the standard geometric quantisation of quadratic Hamiltonians, and it is usually resolved by means of BKS kernels, which allow one to extend the class of quantisable functions. However, that relies on certain gometric structures on the space of leaves of the polarisation which are not readily available in infinite dimensions. We shall then use the \emph{Schr\"odinger quantisation ansatz} (with normal ordering) for this case, and simply think of the Polynomial functions $\wt{S}_{\mathrm{res}}, \mathbb{S}_{BF}$ as functions of the operators $\Phi, -i\hbar\frac{\delta}{\delta \Phi}$ for every variable $\Phi$ on the base of the polarisation
\[
\mathbb{B}'_{BF} \doteq \FF_{BF}/\mathbb{P}' \simeq  \underbrace{\Omega^1_{\mathrm{nd}}(\Sigma)}_{B} \oplus  \underbrace{\Omega^{\mathrm{top}}[1](\Sigma, \mathcal{V})}_{A^\dag} \oplus \underbrace{\Omega^0[-1]( \Sigma, \wedge^1\mathcal{V})}_{\tau}\oplus\underbrace{\Omega^0[1]( \Sigma, \wedge^2\mathcal{V})}_{\rho}\oplus \underbrace{\Omega^0[2]( \Sigma, \wedge^2\mathcal{V})}_{\mu} 
\]
which is parametrised by field variables $(B,A^\dag, \tau, \mu, \rho)$.

Define by 
\[
\mathbb{V}_{BF}' \doteq C^\infty(\mathbb{B}'_{BF}).
\]
Then, we obtain
\[
\wt{S}_{\mathrm{res}} \leadsto \wt{\Omega}'_{\mathrm{res}}= - i \hbar d \tau \frac{\delta}{\delta B} - \frac{\hbar^2}{2} \tau \left[\frac{\delta}{\delta B},\frac{\delta}{\delta B}\right],
\]
as well as
\begin{align*}
        \mathbb{S}_{BF}\leadsto \mathbb{\Omega}'_{BF} = \int_{\Sigma} \rho dB  -i \hbar [\rho, B] \frac{\delta}{\delta B}-i \hbar [\rho, \tau] \frac{\delta}{\delta \tau} + \mu A^\dag - i \hbar\frac{1}{2} [\rho, \rho] \frac{\delta}{\delta \rho}.
    \end{align*}
    These should be compared, respectively with \eqref{e:quantumBFVoperatorBF} and \eqref{e:OmegaBF_Apol}.

Since this ansatz is not given to us by a quantisation map with the property of being a Lie algebra morphism, we need to check that $(\mathbb{\Omega}'_{BF})^2=0$ \emph{directly}. Let $f \in \mathbb{V}_{BF}$. We have 
    \begin{align*}
        \mathbb{\Omega}'_{BF}(f) = \int_{\Sigma} \rho dBf -i \hbar [\rho, B] \frac{\delta f}{\delta B}-i \hbar [\rho, \tau] \frac{\delta f}{\delta \tau} + \mu A^\dag f-i \hbar\frac{1}{2} [\rho, \rho] \frac{\delta f}{\delta \rho}.
    \end{align*}
    and\footnote{We could prove this statement alternatively by observing that the Schroedinger quantisation of $\mathbb{S}_{BF}$ in the polarisation $\mathcal{P}'$ coincides with its geometric quantisation, since $\mathbb{S}_{BF}$ is a polarisable function, hence $\mathbb{\Omega}_{BF}'\equiv \mathfrak{q}_{\mathcal{P}'}(\mathbb{S}_{BF})$ squares to zero automatically. We leave it regardless, as it proves to be a useful warm-up calculation for what follows.} 
    \begin{align*}
        \mathbb{\Omega}'_{BF}(\mathbb{\Omega}'_{BF}(f)) &= \int_{\Sigma} \rho dB \int_{\Sigma} \rho dB f -\int_{\Sigma} \rho dB \int_{\Sigma}i \hbar [\rho, B] \frac{\delta f}{\delta B}-\int_{\Sigma} \rho dB \int_{\Sigma}i \hbar [\rho, \tau] \frac{\delta f}{\delta \tau}\\
        &\phantom{=} +\int_{\Sigma} \rho dB \int_{\Sigma} \mu A^\dag f-\int_{\Sigma} \rho dB \int_{\Sigma}i \hbar\frac{1}{2} [\rho, \rho] \frac{\delta f}{\delta \rho}  -i \hbar \int_{\Sigma}[\rho, B] \frac{\delta }{\delta B} \int_{\Sigma} \rho dB f \\
        &\phantom{=}+ \hbar^2 \int_{\Sigma}[\rho, B] \frac{\delta }{\delta B} \int_{\Sigma} [\rho, B] \frac{\delta f}{\delta B}+ \hbar^2 \int_{\Sigma}[\rho, B]\int_{\Sigma} [\rho, \tau] \frac{\delta^2 f}{\delta B\delta \tau} -i \hbar \int_{\Sigma}[\rho, B]  \int_{\Sigma} \mu A^\dag \frac{\delta f}{\delta B}\\
        &\phantom{=}+ \hbar^2 \int_{\Sigma}[\rho, B]  \int_{\Sigma}\frac{1}{2} [\rho, \rho] \frac{\delta^2 f}{\delta B\delta \rho}-i \hbar \int_{\Sigma} [\rho, \tau]  \int_{\Sigma} \rho dB \frac{\delta f}{\delta \tau}+ \hbar^2 \int_{\Sigma} [\rho, \tau]  \int_{\Sigma} [\rho, B] \frac{\delta^2 f}{\delta \tau\delta B}\\
        &\phantom{=}+ \hbar^2 \int_{\Sigma} [\rho, \tau] \frac{\delta }{\delta \tau} \int_{\Sigma} [\rho, \tau] \frac{\delta f}{\delta \tau} -i \hbar \int_{\Sigma} [\rho, \tau]  \int_{\Sigma} \mu A^\dag \frac{\delta }{\delta \tau}+ \hbar^2 \int_{\Sigma} [\rho, \tau]  \int_{\Sigma}\frac{1}{2} [\rho, \rho] \frac{\delta^2 f}{\delta \tau\delta \rho} \\
        &\phantom{=}+ \int_{\Sigma} \mu A^\dag \int_{\Sigma} \rho dB f-i \hbar \int_{\Sigma} \mu A^\dag \int_{\Sigma}[\rho, B] \frac{\delta f}{\delta B}-i \hbar \int_{\Sigma} \mu A^\dag \int_{\Sigma}[\rho, \tau] \frac{\delta f}{\delta \tau}\\
        &\phantom{=} + \int_{\Sigma} \mu A^\dag \int_{\Sigma}\mu A^\dag f-i \hbar\int_{\Sigma} \mu A^\dag \int_{\Sigma}\frac{1}{2} [\rho, \rho] \frac{\delta f}{\delta \rho} -i \hbar\int_{\Sigma}\frac{1}{2} [\rho, \rho] \frac{\delta }{\delta \rho} \int_{\Sigma} \rho dB f \\
        &\phantom{=}+ \hbar^2\int_{\Sigma}\frac{1}{2} [\rho, \rho] \frac{\delta }{\delta \rho} \int_{\Sigma} [\rho, B] \frac{\delta f}{\delta B}+ \hbar^2\int_{\Sigma}\frac{1}{2} [\rho, \rho] \frac{\delta }{\delta \rho} \int_{\Sigma} [\rho, \tau] \frac{\delta f}{\delta \tau}\\
        &\phantom{=} -i \hbar\int_{\Sigma}\frac{1}{2} [\rho, \rho]  \int_{\Sigma} \mu A^\dag\frac{\delta f}{\delta \rho} + \hbar^2\int_{\Sigma}\frac{1}{2} [\rho, \rho] \frac{\delta }{\delta \rho} \int_{\Sigma}-i \hbar\frac{1}{2} [\rho, \rho] \frac{\delta f}{\delta \rho}.
    \end{align*}
    
    All the terms without derivatives contain two degree-1 functions ($\int\rho dB$ and $\int \mu A^\dag$) and hence vanish. Instead of considering every single term let us now consider as an example the following terms:
    \begin{align*}
        -i \hbar \int_{\Sigma}[\rho, B] \frac{\delta }{\delta B} \int_{\Sigma} \rho dB f &= i \hbar \int_{\Sigma}[\rho, B]  \int_{\Sigma} \rho dB \frac{\delta f}{\delta B}- i \hbar \int_{\Sigma} d \rho [\rho, B] f \\
         -i \hbar\int_{\Sigma}\frac{1}{2} [\rho, \rho] \frac{\delta }{\delta \rho} \int_{\Sigma} \rho dB f  &=  i \hbar\int_{\Sigma}\frac{1}{2} [\rho, \rho]  \int_{\Sigma} \rho dB \frac{\delta f}{\delta \rho}-i \hbar\int_{\Sigma}\frac{1}{2} [\rho, \rho] dB f.
    \end{align*}
    The last terms in the two lines are equal and cancel out, while the first two are equal to analogous terms in the sum (namely, the second in the first row and the first in the second row respectively). Similarly one can find relations for all the other terms contained in $(\mathbb{\Omega}_{BF}')^2(f)$ and conclude that it vanishes.

    We want to show now that the two quantum operators $\wt{\Omega}'_{\mathrm{res}}$ and $\mathbb{\Omega}'_{BF}$ behave well, namely:
    \[
    [\mathbb{\Omega}'_{BF}, \wt{\Omega}'_{\mathrm{res}}]=0, \qquad [\wt{\Omega}'_{\mathrm{res}}, \wt{\Omega}'_{\mathrm{res}}]=0.
    \]
    
    We proceed as in the proof of Lemma \ref{l:Omega_R2}. 
    \begin{align*}
        [\mathbb{\Omega}'_{BF}, \wt{\Omega}'_{\mathrm{res}}](f)&= \left[\int_{\Sigma} \rho dB, \int_{\Sigma}  - i \hbar d \tau\frac{\delta}{\delta B} - \frac{\hbar^2}{2} \tau \left[\frac{\delta}{\delta B},\frac{\delta}{\delta B}\right]\right](f)\\
        &\phantom{=}+ \left[\int_{\Sigma}-i \hbar[\rho, B] \frac{\delta }{\delta B},\int_{\Sigma}  - i \hbar d \tau \frac{\delta}{\delta B} - \frac{\hbar^2}{2} \tau \left[\frac{\delta}{\delta B},\frac{\delta}{\delta B}\right]\right](f)\\
        &\phantom{=}+ \left[\int_{\Sigma}-i \hbar[\rho, \tau] \frac{\delta }{\delta \tau} , \int_{\Sigma}  - i \hbar d \tau \frac{\delta}{\delta B} - \frac{\hbar^2}{2} \tau \left[\frac{\delta}{\delta B},\frac{\delta}{\delta B}\right]\right](f)\\
        &\phantom{=}
        + \left[\int_{\Sigma} \mu A^\dag -i \hbar\frac{1}{2} [\rho, \rho] \frac{\delta }{\delta \rho}, \int_{\Sigma}  - i \hbar d \tau \frac{\delta}{\delta B} - \frac{\hbar^2}{2} \tau \left[\frac{\delta}{\delta B},\frac{\delta}{\delta B}\right]\right](f).
    \end{align*}
    Let us start from the fourth line. Since the operator on the right and the one on the left contain different derivatives and fields, the result of the commutator is zero. Now let us consider the other three lines.
    \begin{align*}
        \left[\int_{\Sigma} \rho dB, \int_{\Sigma}  - i \hbar d\tau \frac{\delta}{\delta B} \right](f)&= - i \hbar \left(\int_{\Sigma} \rho dB \int_{\Sigma}   d B \frac{\delta f}{\delta B} - \int_{\Sigma}   d \tau \frac{\delta }{\delta B}\int_{\Sigma} \rho dB f\right)\\
         & =- i \hbar \left(\int_{\Sigma} \rho dB \int_{\Sigma}   d \tau \frac{\delta f}{\delta B} - \int_{\Sigma}   d \tau \int_{\Sigma} \rho dB \frac{\delta }{\delta B}f- \int_{\Sigma}   d\tau d(\rho f)\right)=0
    \end{align*}
    where we used that $\int_{\Sigma}   d\tau d(\rho f)= \int_{\Sigma}   d^2 \tau \rho f=0$. Similarly
    \begin{align*}
        \hbar^2 \left[\int_{\Sigma}[\rho, B] \frac{\delta }{\delta B},\int_{\Sigma} d \tau \frac{\delta}{\delta B} \right](f) &  =
        \hbar^2 \left(\int_{\Sigma}[\rho, B] \frac{\delta }{\delta B}\int_{\Sigma} d \tau \frac{\delta f}{\delta B}- \int_{\Sigma} d \tau \frac{\delta }{\delta B}\int_{\Sigma}[\rho, B] \frac{\delta f}{\delta B} \right)\\
        & = \hbar^2 \left(\int_{\Sigma}[\rho, B]\int_{\Sigma} d \tau \frac{\delta^2 f}{\delta B \delta B} \right. \\
        & \phantom{=\hbar^2 (} \left.-\int_{\Sigma}d\tau\int_{\Sigma} [\rho, B] \frac{\delta^2 f}{\delta B \delta B}-\int_{\Sigma}[\rho,d\tau]\frac{\delta f}{\delta B}\right)\\
        & = \hbar^2 \int_{\Sigma} [\rho, d\tau]\frac{\delta f}{\delta B}.
    \end{align*}
    On the other hand we also have 
    \begin{align*}
         \frac{\hbar^2}{2} \left[\int_{\Sigma} d\rho B, \int_{\Sigma} \tau \left[\frac{\delta}{\delta B},\frac{\delta}{\delta B}\right]\right]f&=
         \frac{\hbar^2}{2}\left(\int_{\Sigma} d\rho B \int_{\Sigma} \tau \left[\frac{\delta}{\delta B},\frac{\delta}{\delta B}\right]f- \int_{\Sigma} \tau \left[\frac{\delta}{\delta B},\frac{\delta }{\delta B}\right]\int_{\Sigma} d\rho B f\right)\\
         &=\hbar^2 \int_{\Sigma}d\rho \left[\tau, \frac{\delta f}{\delta B}\right]=\hbar^2 \int_{\Sigma}\left[d\rho, \tau\right] \frac{\delta f}{\delta B}
    \end{align*}
    and 
    \begin{align*}
        \hbar^2  \left[\int_{\Sigma}[\rho, \tau] \frac{\delta }{\delta \tau} , \int_{\Sigma}   d \tau \frac{\delta}{\delta B} \right](f)&= 
         \hbar^2 \left( \int_{\Sigma}[\rho, \tau] \frac{\delta }{\delta \tau} \int_{\Sigma}   d \tau \frac{\delta}{\delta B}-  \int_{\Sigma}   d \tau \frac{\delta}{\delta B}\int_{\Sigma}[\rho, \tau] \frac{\delta }{\delta \tau}\right)f\\
         & = \hbar^2  \int_{\Sigma}[\rho, \tau] d \frac{\delta f }{\delta B}
    \end{align*}

    which cancels out with the result of the two previous computations. Then we have
    \begin{align*}
        i \frac{\hbar^3}{2}&\left[\int_{\Sigma}[\rho, B] \frac{\delta }{\delta B},\int_{\Sigma}    \tau \left[\frac{\delta}{\delta B},\frac{\delta}{\delta B}\right]\right](f) \\
        &= i \frac{\hbar^3}{2}\left(\int_{\Sigma}[\rho, B] \frac{\delta }{\delta B}\int_{\Sigma}    \tau \left[\frac{\delta}{\delta B},\frac{\delta}{\delta B}\right]f-\int_{\Sigma}    \tau \left[\frac{\delta}{\delta B},\frac{\delta}{\delta B}\right]\int_{\Sigma}[\rho, B] \frac{\delta f}{\delta B}\right)\\
        &= i \frac{\hbar^3}{2}\left(\int_{\Sigma}[\rho, B] \int_{\Sigma}    \tau \frac{\delta }{\delta B}\left[\frac{\delta}{\delta B},\frac{\delta}{\delta B}\right]f \right.\\
        & \phantom{= i \frac{\hbar^3}{2}(}\left.-\int_{\Sigma}    \tau \int_{\Sigma}[\rho, B] \left[\frac{\delta}{\delta B},\frac{\delta}{\delta B}\right]\frac{\delta f}{\delta B}-2\int_{\Sigma}  \left[\left[  \tau, \frac{\delta}{\delta B}\right],\rho \right] \frac{\delta f}{\delta B}\right)\\
         &= i \frac{\hbar^3}{2}\left(+\int_{\Sigma}\left[[\rho, \tau], \frac{\delta}{\delta B}\right]\frac{\delta}{\delta B}f\right)
    \end{align*}
    where in the last passage we have used once again the graded Jacobi identity. 
    Lastly we have 
    \begin{align*}
        i \frac{\hbar^3}{2}&\left[\int_{\Sigma}[\rho, \tau] \frac{\delta }{\delta \tau} , \int_{\Sigma}\tau \left[\frac{\delta}{\delta B},\frac{\delta}{\delta B}\right]\right](f) \\
        & = i \frac{\hbar^3}{2}\left(\int_{\Sigma}[\rho, \tau] \frac{\delta }{\delta \tau}\int_{\Sigma}\tau \left[\frac{\delta}{\delta B},\frac{\delta}{\delta B}\right] - \int_{\Sigma}\tau \left[\frac{\delta}{\delta B},\frac{\delta}{\delta B}\right]\int_{\Sigma}[\rho, \tau] \frac{\delta }{\delta \tau} \right)(f) \\
        & = i \frac{\hbar^3}{2}\left(\int_{\Sigma}[\rho, \tau] \left[\frac{\delta}{\delta B},\frac{\delta}{\delta B}\right]f \right)
    \end{align*}
    which using the Jacobi identity is the same as before.
    Hence we conclude that $[\mathbb{\Omega}'_{BF}, \wt{\Omega}'_{\mathrm{res}}](f)=0$ for all $f$. For the second equation we have:
    \begin{align*}
        [\wt{\Omega}'_{\mathrm{res}}, \wt{\Omega}'_{\mathrm{res}}](f)&= \left[\int_{\Sigma}  - i \hbar d \tau \frac{\delta}{\delta B} - \frac{\hbar^2}{2} \tau \left[\frac{\delta}{\delta B},\frac{\delta}{\delta B}\right],\int_{\Sigma}  - i \hbar d \tau \frac{\delta}{\delta B} - \frac{\hbar^2}{2} \tau\left[\frac{\delta}{\delta B},\frac{\delta}{\delta B}\right]\right](f)
    \end{align*}
    which vanishes since all the derivatives are with respect to $B$ and only $\tau$ appears.    Hence we conclude that $[\wt{\Omega}'_{\mathrm{res}}, \wt{\Omega}'_{\mathrm{res}}]=0$.

As a consequence, we get
\begin{theorem}\label{thm:quantisationEH_doubleBFV_BPOL}
    A quantisation of three-dimensional general relativity is given by the following data: the kinematical space of states  
    \begin{align*}
    \mathbb{V}'_{\mathrm{kin}} = H^0(\mathbb{V}'_{BF},\mathbb{\Omega}'_{BF}),
    \end{align*}
    endowed with the quantum BFV operator $[\wt{\Omega}'_{\mathrm{res}}]_{\mathbb{\Omega}'_{BF}}$ and the \emph{space of physical states of 3d general relativity}, defined by
    \[
    \mathbb{V}'_{\mathrm{phys}} \doteq H^0(\mathbb{V}_{\mathrm{kin}},[\wt{\Omega}_{\mathrm{res}}]_{\mathbb{\Omega}_{BF}})
    \]
    where
    \begin{align*} 
        \wt{\Omega}'_{\mathrm{res}}= - i \hbar d \tau \frac{\delta}{\delta B} - \frac{\hbar^2}{2} \tau \left[\frac{\delta}{\delta B},\frac{\delta}{\delta B}\right],
    \end{align*}
    and 
    \begin{align*}
        \mathbb{\Omega}'_{BF} = \int_{\Sigma} \rho dB  -i \hbar [\rho, B] \frac{\delta}{\delta B}-i \hbar [\rho, \tau] \frac{\delta}{\delta \tau} + \mu A^\dag - i \hbar\frac{1}{2} [\rho, \rho] \frac{\delta}{\delta \rho}.
    \end{align*}
    
\end{theorem}

\appendix
\section{Graded coisotropic reduction of \texorpdfstring{$BF$}{BF} theory} \label{a:coisotropic_BFtoEH}

    In this appendix, we give a precise description of the coisotropic reduction of the BFV formulation of $BF$ theory with respect to the submanifold defined by \eqref{e:ConstraintBF1} and \eqref{e:ConstraintBF2}.

        Consider the restriction of the symplectic form \eqref{e:BF_sympl_form} to the submanifold defined by \eqref{e:ConstraintBF1} and \eqref{e:ConstraintBF2}. The characteristic distribution is defined by the kernel of such restriction, which is also generated by the Hamiltonian vector fields $X_{\mathsf{J}^\rho}$ and $X_{\mathsf{M}^\mu}$ of $\mathsf{J}^\rho$ and $\mathsf{M}^\mu$, which have already been computed in the proof of Lemma \ref{l:brackets_constraints} and read (we drop $\rho$ and $\mu$ for ease of notation)
    \begin{align*}
        {X}_{\mathsf{J}}(B)&= [\rho, B] & {X}_{\mathsf{J}}(A)&= d_A \rho & {X}_{\mathsf{J}}(\tau)&= -[\rho, \tau]\\
        {X}_{\mathsf{J}}({B^\dag})&= -[\rho, B^{\dag}] & {X}_{\mathsf{J}}(\chi)&= 0& {X}_{\mathsf{J}}({A^\dag})&= 0\\
        {X}_{\mathsf{M}}(B)&= 0 & {X}_{\mathsf{M}}(A)&= 0 & {X}_{\mathsf{M}}(\tau)&= 0\\
        {X}_{\mathsf{M}}({B^\dag})&= 0 & {X}_{\mathsf{M}}(\chi)&= \mu & {X}_{\mathsf{M}}({A^\dag})&= 0.
    \end{align*}
    These vector fields are generated by the parameters $\mu, \rho \in \Omega^{0,2}$. Let us start by flowing along the vector field generated by $\mu$. We get the following equation:
    \begin{align*}
        \dot{\chi}(t)= \mu.
    \end{align*}
    Solving it we get $\chi(t)=\mu t + \chi(0)$ and we can set $\chi(1)=0$ by choosing $\mu= -\chi(0)$. Hence in the reduction we will have $\chi'=0$ and all the other fields remain unchanged. 

    Let us now switch to the vector field generated by $\rho$. The corresponding equations read
    \begin{align*}
        \dot{B}(t)&= [\rho, B(t)] & \dot{A}(t)&= d_{A(t)} \rho & \dot{\tau}(t)&= -[\rho, \tau(t)]\\
        \dot{B^\dag}(t)&= -[\rho, B^{\dag}(t)] & \dot{\chi}(t)&= 0& \dot{A^\dag}(t)&= 0
    \end{align*}
    We now fix a basis $\{v_0,v_1,v_2\}$ of $\mathcal{V}_{\Sigma}$ such that in components $B_{\rho}^i$ ($i=1,2$) is an invertible diad (in components, $B$ is represented by a $3\times 2$ matrix with at least an invertible $2\times 2$ minor, by the non-degeneracy property of $B$).

    Using this basis we can rewrite the differential equations in components as 
    \begin{align*}
        \dot{B}_{\nu}^i(t)&= \rho^{0i}B^{0}_{\nu}(t) \eta_{00} + \rho^{ij}B^{k}_{\nu}(t) \eta_{jk} & \dot{B}_{\nu}^0(t)&= \rho^{0i}B^{j}_{\nu}(t) \eta_{ij} \\
        \dot{A}^{0i}_{\nu}(t)&= d_{\nu} \rho^{0i} - A_{\nu}^{ij}(t)\rho^{0k} \eta_{jk} + A_{\nu}^{0j}(t)\rho^{ki} \eta_{jk}  & \dot{A}^{ij}_{\nu}(t)&= d_{\nu} \rho^{ij} - A_{\nu}^{0i}(t)\rho^{0j}\eta_{00}\\
        \dot{\tau}^0(t)&=  \tau^{i}(t)\rho^{0j}\eta^{ij} & \dot{\tau}^i(t)&=  \tau^{j}(t)\rho^{0j}\eta^{ij} + \tau^{0}(t)\rho^{0i}\eta^{00}\\
        \dot{B^\dag}^{0i}_{\nu\mu}(t)&= - {B^\dag}_{\nu\mu}^{ij}(t)\rho^{0k} \eta_{jk} + {B^\dag}_{\nu\mu}^{0j}(t)\rho^{ki} \eta_{jk}  & \dot{B^\dag}^{ij}_{\nu\mu}(t)&=  - {B^\dag}_{\nu\mu}^{0i}(t)\rho^{0j}\eta_{00}\\
        \dot{\chi}(t)&= 0& \dot{A^\dag}(t)&= 0
    \end{align*}

    This system has three free parameters, $\rho^{12}$, $\rho^{01}$, $\rho^{02}$ and we can use them to set to zero three components of $B$. Note that the reduction with respect to these parameters corresponds  to the reduction with respect to the three components of the constraint $d_A B +[\tau, B^\dag]=0$, respectively components 0,1 and 2. 
    
    We start with the reduction with respect to $\rho^{01}$.
    The first equations of the previous system become
    \begin{align*}
        \dot{B}_{\nu}^1(t)&= \rho^{01}B^{0}_{\nu}(t) \\
        \dot{B}_{\nu}^0(t)&= -\rho^{01}B^{1}_{\nu}(t) 
    \end{align*}
    where we used that $\eta_{11}=1$. 
    The solution to this system is 
    \begin{align*}
        {B}_{\nu}^0(t) &= \cos{(\rho^{01}t)}{B}_{\nu}^0(0) + \sin{(\rho^{01}t)}{B}_{\nu}^1(0)\\
        {B}_{\nu}^1(t) &= -\sin{(\rho^{01}t)}{B}_{\nu}^0(0) + \cos{(\rho^{01}t)}{B}_{\nu}^1(0)
    \end{align*}
    The reduced variables will be set at the parameter $t=1$, while for $t=0$ we recover the old ones.    
    Hence we can set $\rho^{01}= \arctan{\left(-\frac{{B}_{0}^1(0)}{{B}_{1}^1(0)}\right)}$ and get for $t=1$,
    \begin{align*}
        (B_1^0)'&=0 & (B_1^1)'&= \alpha_{01} \\
        (B_2^0)'&= \frac{B_1^1 B_2^0-B_1^0 B_2^1}{\alpha_{01}} & (B_2^1)'&= \frac{B_1^0 B_2^0+B_2^1 B_1^1}{\alpha_{01}} \\
        (B_2^1)'&= B_2^1 & (B_2^2)'&=B_2^2
    \end{align*}
    where $\alpha_{01}=\sqrt{(B_1^0)^2+ (B_1^1)^2}$. Proceeding similarly, for the other fields one obtains:
       \begin{align*}
        (A_{\nu}^{02})'&= \frac{A_{\nu}^{02} B_1^1-B_1^0 A_{\nu}^{12}}{\alpha_{01}} & 
        ({B^\dag}_{\nu\mu}^{02})'&= \frac{{B^\dag}_{\nu\mu}^{02} B_1^1-B_1^0 {B^\dag}_{\nu\mu}^{12}}{\alpha_{01}} \\ 
        (A_{\nu}^{12})'&= \frac{A_{\nu}^{12} B_1^1+B_1^0A_{\nu}^{02}}{\alpha_{01}} & 
        ({B^\dag}_{\nu\mu}^{12})'&= \frac{{B^\dag}_{\nu\mu}^{12} B_1^1+B_2^0 {B^\dag}_{\nu\mu}^{02}}{\alpha_{01}} \\ 
        (A_{\nu}^{01})'&= A_{\nu}^{01} - \frac{ B_1^1 d_{\nu} B_1^0 - B_1^0 d_{\nu} B_1^1 }{\alpha^2_{01}} 
          &({B^\dag}_{\nu\mu}^{01})'&= {B^\dag}_{\nu\mu}^{01}  \\
          (\tau^{0})'&= \frac{\tau^{0} B_1^1-B_1^0 \tau^{1}}{\alpha_{01}}& (\tau^{2})'&= \tau^{2} \\
          (\tau^{1})'&= \frac{\tau^{1} B_1^1+B_1^0 \tau^{0}}{\alpha_{01}}
    \end{align*}

    Similarly we can flow along $\rho^{02}$ and obtain, by fixing $\rho^{02}= \arctan{\left(-\frac{({B}_{2}^0)'}{({B}_{2}^2)'}\right)}$ and setting $\alpha_{02}=\sqrt{(B_2^0)'^2+ (B_2^2)'^2}$, 
    \begin{align*}
        (B_2^2)''&= \alpha_{02} & (B_2^0)''&=0 \\
        (B_1^0)''&= \frac{(B_1^0)' (B_2^2)'-(B_2^0)' (B_1^2)'}{\alpha_{02}} & (B_1^2)''&= \frac{(B_1^0)' (B_2^0)'+(B_2^2)' (B_1^2)'}{\alpha_{02}} \\
        (B_2^1)''&=(B_2^1)' &  (B_1^1)''&= (B_1^1)'
    \end{align*}    
    and 
        \begin{align*}
        (A_{\nu}^{01})''&= \frac{(A_{\nu}^{01})' (B_2^2)'-(B_2^0)' (A_{\nu}^{02})'}{\alpha_{02}} & ({B^\dag}_{\nu\mu}^{01})''&= \frac{({B^\dag}_{\nu\mu}^{01})' (B_2^2)'-(B_2^0)' ({B^\dag}_{\nu\mu}^{12})'}{\alpha_{02}} \\ 
        (A_{\nu}^{12})''&= \frac{(A_{\nu}^{12})' (B_2^2)'+(B_2^0)'(A_{\nu}^{01})'}{\alpha_{02}} & ({B^\dag}_{\nu\mu}^{12})''&= \frac{({B^\dag}_{\nu\mu}^{12})' (B_2^2)'+(B_2^0)' ({B^\dag}_{\nu\mu}^{01})'}{\alpha_{02}} \\ 
        (A_{\nu}^{02})''&= (A_{\nu}^{02})' - \frac{ (B_2^2)' d_{\nu} (B_2^0)' - (B_2^0)' d_{\nu} (B_2^2)' }{\alpha^2_{02}} 
          &({B^\dag}_{\nu\mu}^{02})''&= ({B^\dag}_{\nu\mu}^{02})'  \\
          (\tau^{0})''&= \frac{(\tau^{0})' (B_2^2)'-(B_2^0)' (\tau^{2})'}{\alpha_{02}}& (\tau^{1})''&= (\tau^{1})' \\
          (\tau^{2})''&= \frac{(\tau^{2})' (B_2^2)'+(B_2^0)' (\tau^{0})'}{\alpha_{02}}
    \end{align*}
    \begin{remark}\label{r:intermediate_theory}
        Before proceeding further with the reduction with respect to $\rho^{12}$ let us note that we can consider an intermediate BFV theory defined on the space of \emph{double primed} fields given as follows:
        \begin{align*}
            \varpi_{\mathrm{int}} &= \int_{\Sigma} \delta (A_{\mu}^{0i})'' \delta (B_{\nu}^{j})''\epsilon_{ij}\epsilon^{\mu\nu} + \delta (\tau)'' \delta (B^\dag)''\\
            S_{\mathrm{int}} &= \int_{\Sigma} (\tau)'' F_{(A)''} \\
        \end{align*}
        where we still have to impose the constraint
        \begin{align*}
            (A_{[\mu}^{0i})'' (B_{\nu]}^j)'' \eta_{ij} +(\tau^i)'' (B^{\dag {0j}}_{\mu\nu})''\eta_{ij}=0. 
        \end{align*}
        and make the corresponding coisotropic reduction with respect to $\rho^{12}$. This intermediate theory will be the starting point for the symplectomorphism to EH theory, which will be invariant under the reduction with respect to $\rho^{12}$ and easier to write in this context than the one after the complete reduction, in which one is forced to work in components. We will anyway give a symplectomorphism also in the latter case.
    \end{remark}

    Let us conclude now with the reduction with respect to $\rho^{12}$. The first equations of the previous system become
    \begin{align*}
        (\dot{B}_{\nu}^1)''(t)&= \rho^{12}(B^{2}_{\nu})''(t) \\
        (\dot{B}_{\nu}^2)''(t)&= -\rho^{12}(B^{1}_{\nu})''(t) 
    \end{align*}
    where we used that $\eta_{11}=\eta_{22}=1$.The solution to this system is 
    \begin{align*}
        (B^{1}_{\nu})''(t) &= \cos{(\rho^{12}t)}(B^{1}_{\nu})''(0) + \sin{(\rho^{12}t)}(B^{2}_{\nu})''(0)\\
        (B^{2}_{\nu})''(t) &= -\sin{(\rho^{12}t)}(B^{1}_{\nu})''(0) + \cos{(\rho^{12}t)}(B^{2}_{\nu})''(0)
    \end{align*}
    The reduced variables will be set at the parameter $t=1$, while for $t=0$ we recover the old ones.
    Hence we can set $\rho^{12}= \arctan{\left(-\frac{(B^{1}_{2})''(0)}{(B^{2}_{2})''(0)}\right)}$ and get, for $t=1$,
    \begin{subequations}\label{e:gradedcoisored1}
    \begin{align}
        (B_2^1)'''&=0 & (B_2^2)'''&= \alpha_{12} \\
        (B_1^1)'''&= \frac{(B_1^1)'' (B_2^2)''-(B_2^1)'' (B_1^2)''}{\alpha_{12}} & (B_1^2)'''&= \frac{(B_1^1)'' (B_2^1)''+(B_2^2)'' (B_1^2)''}{\alpha_{12}}
    \end{align}
    where $\alpha_{12}=\sqrt{((B_2^1)'')^2+ ((B_2^2)'')^2}$. Proceeding similarly, for the other fields one obtains:
    \begin{align}
        (A_{\nu}^{01})'''&= \frac{(A_{\nu}^{01})'' (B_2^2)''-(B_2^1)'' (A_{\nu}^{02})''}{\alpha_{12}} & 
        ({B^\dag}_{\nu\mu}^{01})'''&= \frac{({B^\dag}_{\nu\mu}^{01})'' (B_2^2)''-(B_2^1)'' ({B^\dag}_{\nu\mu}^{02})''}{\alpha_{12}} \\ 
        (A_{\nu}^{02})'''&= \frac{(A_{\nu}^{02})'' (B_2^2)''+(B_2^1)'' (A_{\nu}^{01})''}{\alpha_{12}} & 
        ({B^\dag}_{\nu\mu}^{02})'''&= \frac{({B^\dag}_{\nu\mu}^{02})'' (B_2^2)''+(B_2^1)'' ({B^\dag}_{\nu\mu}^{01})''}{\alpha_{12}} \\ 
        (A_{\nu}^{12})'''&= (A_{\nu}^{12})'' - \frac{ (B_2^2)'' d_{\nu} (B_2^1)'' - (B_2^1)'' d_{\nu} (B_2^2)'' }{\alpha^2_{12}} 
          &({B^\dag}_{\nu\mu}^{12})'''&= ({B^\dag}_{\nu\mu}^{12})'' \\ 
          (\tau^{1})'''&= \frac{(\tau^{1})'' (B_2^2)''-(B_2^1)'' (\tau^{2})''}{\alpha_{12}} & (\tau^{2})'''&= \frac{(\tau^{2})'' (B_2^2)''+(B_2^1)'' (\tau^{1})''}{\alpha_{12}}\\
          (\tau^{0})'''&= (\tau^{0})'' 
    \end{align}\end{subequations}

    Hence, the fields on the reduced space $\C_{\mathrm{Gauss}}$ are $(\tau)'''$, $(B^{\dag})'''$ (all components), $(A)'''$ (only the components $(A_{1}^{01})'''$, $(A_{2}^{01}))'''$ and $(A_{2}^{02})'''$, the components $A_{1}^{02}$ and $A_{\nu}^{12}$ are fixed respectively by \eqref{e:ConstraintBF_antisym} and \eqref{e:ConstraintBF_LeviCivita}) and $(B))'''$ (only the components $(B_1^1)'''$, $(B_1^2)'''$ and $(B_2^2)'''$). On this space the symplectic form reads
    \begin{align}\label{e:sympl_form_reduced}
        \varpi_{\mathrm{res}} = \int_{\Sigma} \left(\delta (A_1^{01})''' \delta (B_2^2)'''+\delta (A_2^{01})''' \delta (B_1^2)'''+\delta (A_2^{02})''' \delta (B^1_1)'''\right)dx^1 dx^2 + \delta (\tau)''' \delta (B^\dag)'''
    \end{align}
    and the action is 
    \begin{align}\label{e:action_reduced}
        S_{\mathrm{res}} = \int_{\Sigma} (\tau)''' F_{(A)'''}.
    \end{align}

\section{Properties of \texorpdfstring{$\sigma_{\BFV}$}{sBFV}}
\subsection{Invariance of \texorpdfstring{$\sigma_{\BFV}$}{sBFV}} \label{a:independence}

We have to prove that the map 
        \begin{equation*}
        \sigma_{\BFV}\colon\begin{cases}
        g_{\mu\nu}= B_{\mu}^i B_{\nu}^j \eta_{ij}\\
        \Pi^{\mu\nu}= \sqrt{\mathsf{g}} (K_{\mu\nu}- g_{\mu\nu}g^{\rho\sigma}K_{\rho\sigma})-  \frac{1}{2} B_{\mu}^i B_{\nu}^j \left(\epsilon_{ik}\eta_{jl}+\epsilon_{jk}\eta_{il}\right)\tau^l  {B^{\dag}}^{0k}_{\rho\sigma}\epsilon^{\rho\sigma}\\
        \txi{n} =   \tau^0\\
        \txi{\mu} =  (B^{-1})^{\mu}_i\tau^i\\
        \phin = {B^{\dag}}^{ij}_{\rho\sigma}\epsilon^{\rho\sigma}\epsilon_{ij}\\
        \varphi_{\mu} = 2 {B^{\dag}}^{0i}_{\rho\sigma}\epsilon^{\rho\sigma}B_{\mu}^j\epsilon_{ij}.
        \end{cases}
    \end{equation*}
is invariant with respect to the projection built in Appendix \ref{a:coisotropic_BFtoEH} above (cf. Equations \eqref{e:gradedcoisored1}). Let us start with $g:$
\begin{align*}
    (g_{11})'''&= (B_{1}^i)''' (B_{1}^j)''' \eta_{ij} = (B_{1}^1)''' (B_{1}^1)''' + (B_{1}^2)''' (B_{1}^2)''' \\
    &= \left(\frac{(B_1^1)'' (B_2^2)''-(B_2^1)'' (B_1^2)''}{\alpha_{12}}\right)^2 + \left(\frac{(B_1^1)'' (B_2^1)''+(B_2^2)'' (B_1^2)''}{\alpha_{12}}\right)^2\\
    &= \frac{(B_1^1)''^2 (B_2^2)''^2 
        + (B_2^1)''^2 (B_1^2)''^2 
        - 2 (B_1^1)'' (B_2^2)''(B_2^1)'' (B_1^2)''}{\alpha_{12}^2}\\
    &+ \frac{(B_1^1)''^2 (B_2^1)''^2 
        +(B_2^2)''^2 (B_1^2)''^2
        +2(B_1^1)'' (B_2^1)''(B_2^2)'' (B_1^2)''}{\alpha_{12}^2}\\
    &= \frac{(B_1^1)''^2 ((B_2^2)''^2+(B_2^1)''^2) +  (B_1^2)''^2((B_2^1)''^2+ (B_2^2)''^2)}{\alpha_{12}^2}\\
    &= (B_1^1)''^2 + (B_1^2)''^2 = (g_{11})''\\
    (g_{12})'''&= (B_{1}^i)''' (B_{2}^j)''' \eta_{ij} = (B_{1}^1)''' (B_{2}^1)''' + (B_{1}^2)''' (B_{2}^2)'''\\
    &= 0 + \frac{(B_1^1)'' (B_2^1)''+(B_2^2)'' (B_1^2)''}{\alpha_{12}}\alpha_{12} = (B_1^1)'' (B_2^1)''+(B_2^2)'' (B_1^2)'' = (g_{12})''\\
    (g_{22})'''&= (B_{1}^i)''' (B_{2}^j)''' \eta_{ij} = (B_{2}^1)''' (B_{2}^1)''' + (B_{2}^2)''' (B_{2}^2)'''\\
    &= 0 +  \alpha_{12}^2 =(B_{2}^1)''^2+ (B_{2}^2)''^2 = (g_{22})''.
\end{align*}
In order to prove a similar statement for $\txi{}$ we need the following identies:
\begin{align*}
    ((B^{-1})_1^1)'''&= \frac{1}{(B_1^1)'''} & ((B^{-1})_2^2)'''&=  \frac{1}{(B_2^2)'''} & ((B^{-1})_1^2)'''&= -\frac{(B_1^2)'''}{(B_2^2)'''(B_1^1)'''}
\end{align*}
and 
\begin{align*}
    ((B^{-1})_1^1)'' & = \frac{(B_2^2)''}{\det B''} & ((B^{-1})_1^2)'' & = - \frac{(B_1^2)''}{\det B''} \\
    ((B^{-1})_2^1)'' & = -\frac{(B_2^1)''}{\det B''} & ((B^{-1})_2^2)'' & =  \frac{(B_1^1)''}{\det B''} 
\end{align*}
and note that $(B_1^1)'''= \frac{\det B''}{\alpha_{12}}$.

Consider now $\txi{\mu}$. We have
\begin{align*}
    (\txi{1})'''&= ((B^{-1})_1^1)''' (\tau^1)''' +  ((B^{-1})_2^1)''' (\tau^2)''' = \frac{\alpha_{12}}{\det B''}\frac{(\tau^{1})'' (B_2^2)''-(B_2^1)'' (\tau^{2})''}{\alpha_{12}} =  (\txi{1})''\\
    (\txi{2})'''&= ((B^{-1})_1^2)''' (\tau^1)''' +  ((B^{-1})_2^2)''' (\tau^2)'''\\
    &=  -\frac{(B_1^2)'''}{(B_2^2)'''(B_1^1)'''} (\tau^1)''' + \frac{1}{(B_2^2)'''}(\tau^2)'''\\
    &= -\frac{(B_1^1)'' (B_2^1)''+(B_2^2)'' (B_1^2)''}{\alpha_{12}\det B''}\frac{(\tau^{1})'' (B_2^2)''-(B_2^1)'' (\tau^{2})''}{\alpha_{12}} + \frac{(\tau^{2})'' (B_2^2)''+(B_2^1)'' (\tau^{1})''}{\alpha_{12}^2}\\
    &= \frac{-(\tau^{1})''(B_1^2)''+ (\tau^{2})''(B_1^1)''}{\det B''} = (\txi{2})''
\end{align*}

Similarly we can prove that the other components of the symplectomorphism are invariant. 

\subsection{\texorpdfstring{$\underline{\sigma_{\BFV}}$}{usBFV} is a symplectomorphism}\label{a:symplectomorphism}
The proof of this and the following section make a repeated use of the following relations between the Levi-Civita symbol $\epsilon$ and $\eta$ valid for indexes $i,j,k,l=1,2$ (and equivalent formulations):
    \begin{align}\label{e:relations_epsilon_eta}
        \epsilon^{ij}\epsilon_{ij} &= 2, & 
        \epsilon^{i j}\epsilon_{i k} &= \delta^{j}_{k}, & 
        \epsilon_{i j}\epsilon_{k l} & = \eta_{ik}\eta_{jl} - \eta_{il}\eta_{jk}.
    \end{align}
Before proceeding, we need to introduce some notation: we define
    \begin{align}\label{e:Def_Bline}
        B^{\dag {lj}}_{-} &= B^{\dag {lj}}_{\rho\sigma} \epsilon^{\rho\sigma}\\ \label{e:Def_Dx}
        \mathrm{vol} &= \frac{1}{2}\epsilon_{\rho\sigma}dx^{\rho}dx^{\sigma}
    \end{align}
    and note that 
    \begin{align}\label{e:Prop_Bline}
        B^{\dag {lj}}_{\mu\nu} &= \frac{1}{2} B^{\dag {lj}}_{-} \epsilon_{\mu\nu}\\ \label{e:Prop_Dx}
        dx^\mu dx^\nu & = \frac{1}{2}\epsilon^{\mu\nu}\epsilon_{\rho\sigma}dx^{\rho}dx^{\sigma} = \epsilon^{\mu\nu} \mathrm{vol}
    \end{align}

We want to prove that $\underline{\sigma_{\BFV}}^{*}(\varpi_{EH})= \varpi_{\mathrm{int}}$.  Since these symplectic form are exact, we equivalently prove the claim for the corresponding 1-forms, which read 
    \begin{align*}
        \alpha_{\mathrm{int}} &= \int_{\Sigma} \left(2  A_{\mu}^{0i} \delta B_{\nu}^{j}  +2 \delta \tau^i  B^{\dag 0j}_{\mu\nu} + \delta \tau^0  B^{\dag ij}_{\mu\nu}  \right) \epsilon_{ij}dx^{\mu}dx^{\nu}\\
        &= \int_{\Sigma} \left(2  A_{\mu}^{0i} \delta B_{\nu}^{j} \epsilon^{\mu\nu} +2 \delta \tau^i  B^{\dag 0j}_{-} + \delta \tau^0  B^{\dag ij}_{-}  \right) \epsilon_{ij}\mathrm{vol}\\
        \alpha_{EH} &=\int_{\Sigma} \left( \delta {g}^{\mu\nu} {\Pi}_{\mu\nu} + \delta\txi{\rho} {\varphi}_{\rho}+ \delta\txi{n}{\varphi}_{n}\right) \mathrm{vol}
\end{align*}
where in the second step we used \eqref{e:Prop_Bline} and \eqref{e:Prop_Dx}.

We first prove the following preliminary lemma.
\begin{lemma}
    Define 
    \begin{align} \label{e:defPi_cl}
        \Pi^{\mathrm{cl}}_{\mu\nu} &= \sqrt{\mathsf{g}} (K_{\mu\nu}- g_{\mu\nu}g^{\rho\sigma}K_{\rho\sigma})\\ \label{e:defPi_BV}
        \Pi^{\mathrm{BV}}_{\mu\nu} &= -  \frac{1}{2} B_{\mu}^i B_{\nu}^j \left(\epsilon_{ik}\eta_{jl}+\epsilon_{jk}\eta_{il}\right)\tau^l  {B^{\dag}}^{0k}_{-}
    \end{align}
    such that $\underline{\sigma_{\BFV}}^{*}(\Pi_{\mu\nu})=\Pi^{\mathrm{cl}}_{\mu\nu}+\Pi^{\mathrm{BV}}_{\mu\nu}$. Then
    \begin{align}\label{e:identity_for_Pi_cl1}
    \Pi^{\mathrm{cl}}_{\mu\nu} &= \frac{1}{2}g_{\theta\mu}\epsilon^{\sigma\theta}A^{0c}_{\sigma}\epsilon_{bc}B_{\nu}^b + \frac{1}{2}g_{\theta\nu}\epsilon^{\sigma\theta}A^{0c}_{\sigma}\epsilon_{bc}B_{\mu}^b\\\label{e:identity_for_Pi_cl2}
    2\Pi^{\mathrm{cl}}_{\mu\nu}(B^{-1})^{\mu}_i \delta (B^{-1})^{\nu}_j \eta^{ij} 
    &= 2 \epsilon^{\sigma\nu}A_{\sigma}^{0c}\epsilon_{bc} \delta B_{\nu}^b- \tau^k {B^{\dag}}^{0c}_{\sigma\theta}\eta_{ck}\epsilon^{\sigma\theta} B_{\nu}^b \delta (B^{-1})^{\nu}_j \epsilon^j_{\;b}\\\label{e:identity_for_Pi_BV}
    2\Pi^{\mathrm{BV}}_{\mu\nu}(B^{-1})^{\mu}_i \delta (B^{-1})^{\nu}_j \eta^{ij} 
    &=  - 2  (B^{-1})^{\rho}_i  \tau^i  {B^{\dag}}^{0k}_{-}\delta B_{\rho}^j\epsilon_{kj} +\tau^k {B^{\dag}}^{0c}_{-}\eta_{ck} B_{\nu}^b \delta (B^{-1})^{\nu}_j \epsilon^j_{\;b}.
    \end{align}
\end{lemma}

\begin{proof}
    In order to prove \eqref{e:identity_for_Pi_cl1}, we substitute the definition of $K_{\mu\nu}$ (given in \eqref{e:def_K_EH}) and obtain
    \begin{align*}
        \Pi^{\mathrm{cl}}_{\mu\nu} &= \sqrt{\mathsf{g}}\left(\frac{1}{2} B_{\mu}^a A_{\nu}^{0b} \eta_{ab} + \frac{1}{2} B_{\nu}^a A_{\mu}^{0b} \eta_{ab} - B_{\mu}^a B_{\nu}^b \eta_{ab} A_{\rho}^{0c} (B^{-1})^{\rho}_c \right)\\
        &=\sqrt{\mathsf{g}}\left(\frac{1}{2} B_{\mu}^a A_{\nu}^{0b} \eta_{ab} + \frac{1}{2} B_{\nu}^a A_{\mu}^{0b} \eta_{ab} - B_{\mu}^a B_{\nu}^b \eta_{ab} A_{\rho}^{0c} (B^{-1})^{\rho}_d \delta_c^d \right).
    \end{align*}
    where we used that $K= g^{\mu\nu}K_{\mu\nu}= A_{\rho}^{0c} (B^{-1})^{\rho}_c$.
    We now use on the last term both the following identities (derived from \eqref{e:relations_epsilon_eta}):
    \begin{align*}
        \frac{1}{2}\eta_{ab}\delta_c^d &=  \frac{1}{2}\eta_{bc}\delta_a^d + \frac{1}{2}\epsilon_{ac}\epsilon_b^{\;d}, & 
        \frac{1}{2}\eta_{ab}\delta_c^d &=  \frac{1}{2}\eta_{ac}\delta_b^d + \frac{1}{2}\epsilon_{bc}\epsilon_a^{\;d}
    \end{align*}
    and with some simple algebra we get
        \begin{align*}
        \Pi^{\mathrm{cl}}_{\mu\nu} &= \sqrt{\mathsf{g}}\left( -\frac{1}{2} B_{\mu}^a B_{\nu}^b \epsilon_{ac}\epsilon_b^{\;d} A_{\rho}^{0c} (B^{-1})^{\rho}_d -\frac{1}{2} B_{\mu}^a B_{\nu}^b \epsilon_{bc}\epsilon_a^{\;d} A_{\rho}^{0c} (B^{-1})^{\rho}_d \right)\\
        &= -\frac{1}{4} B_{\sigma}^i B_{\theta}^j \epsilon_{ij}\epsilon^{\sigma \theta}\left( B_{\mu}^a B_{\nu}^b \epsilon_{ac}\epsilon_b^{\;d} A_{\rho}^{0c} (B^{-1})^{\rho}_d + B_{\mu}^a B_{\nu}^b \epsilon_{bc}\epsilon_a^{\;d} A_{\rho}^{0c} (B^{-1})^{\rho}_d \right).
    \end{align*}
    Subsequently we use again some identities derived from \eqref{e:relations_epsilon_eta}:
    \begin{align*}
        \epsilon_{ij}\epsilon_b^{\;d} &= \eta_{ib} \delta_j^d - \eta_{jb} \delta_i^d, & \epsilon_{ij}\epsilon_a^{\;d} &= \eta_{ia} \delta_j^d - \eta_{ja} \delta_i^d,
    \end{align*}
    and with some straightforward manipulations we obtain \eqref{e:identity_for_Pi_cl1}.
    Using this expression we subsequently obtain
    \begin{align*}
        2\Pi^{\mathrm{cl}}_{\mu\nu}(B^{-1})^{\mu}_i \delta (B^{-1})^{\nu}_j \eta^{ij} &= B_{\theta}^j \epsilon^{\sigma\theta} A_{\sigma}^{0c} \epsilon_{bc} B_{\nu}^b \delta (B^{-1})^{\nu}_j + B_{\theta}^k B_{\nu}^l \eta_{kl} \epsilon^{\sigma\theta} A_{\sigma}^{0c} \epsilon_{ic} \delta (B^{-1})^{\nu}_j \eta^{ij}\\
        &= B_{\theta}^k \epsilon^{\sigma\theta} A_{\sigma}^{0c}  B_{\nu}^b \delta (B^{-1})^{\nu}_j \left( \epsilon_{bc} \delta_k^j + \epsilon_{\;c}^j \eta_{kb}\right)
    \end{align*}
    and using again \eqref{e:relations_epsilon_eta} in the form 
    $\epsilon_{\;c}^j \eta_{kb}=  \epsilon_{bc} \delta_k^j + \epsilon_{\;b}^j \eta_{ck} $ we obtain
    \begin{align*}
    2\Pi^{\mathrm{cl}}_{\mu\nu}(B^{-1})^{\mu}_i \delta (B^{-1})^{\nu}_j \eta^{ij} &= 2 \epsilon^{\sigma\nu}A_{\sigma}^{0c}\epsilon_{bc} \delta B_{\nu}^b + B_{\theta}^k \epsilon^{\sigma\theta} \eta_{kc}A_{\sigma}^{0c} B_{\nu}^b \delta (B^{-1})^{\nu}_j \epsilon^j_{\;b}\\
    &= 2 \epsilon^{\sigma\nu}A_{\sigma}^{0c}\epsilon_{bc} \delta B_{\nu}^b- \tau^k {B^{\dag}}^{0c}_{\sigma\theta}\eta_{ck}\epsilon^{\sigma\theta} B_{\nu}^b \delta (B^{-1})^{\nu}_j \epsilon^j_{\;b}
\end{align*}
where in the last step we used the residual constraint \eqref{e:ConstraintBF_antisym}. Similarly, we have 
\begin{align*}
    2\Pi^{\mathrm{BV}}_{\mu\nu}(B^{-1})^{\mu}_i \delta (B^{-1})^{\nu}_j \eta^{ij} 
    &=  -  (B^{-1})^{\rho}_i  \tau^l  {B^{\dag}}^{0k}_{-}\delta B_{\rho}^j\left(\epsilon_{kj}\delta_l^i + \epsilon_{k}^{\;i}\eta_{lj}\right)
\end{align*}
and using $\epsilon_{k}^{\;i}\eta_{lj} = \epsilon_{kj}\delta_l^i - \epsilon_{\;j}^{i}\eta_{lk}$ we obtain \eqref{e:identity_for_Pi_BV}.
\end{proof}

We can now pull back $\alpha_{EH}$. We have
\begin{align*}
    \underline{\sigma_{\BFV}}^{*}(\delta\txi{n}{\varphi}_{n}) &= \delta\tau^0 {B^{\dag}}^{ij}_{-}\epsilon_{ij} \\ 
    \underline{\sigma_{\BFV}}^{*}(\delta\txi{\rho} {\varphi}_{\rho}) & = 2\delta ( (B^{-1})^{\rho}_i\tau^i)  {B^{\dag}}^{0k}_{-}B_{\rho}^j\epsilon_{kj}\\
    & = 2 (B^{-1})^{\rho}_i \delta \tau^i  {B^{\dag}}^{0k}_{-}B_{\rho}^j\epsilon_{kj} + 2\delta  (B^{-1})^{\rho}_i  \tau^i  {B^{\dag}}^{0k}_{-}B_{\rho}^j\epsilon_{kj}\\
    & = 2 \delta \tau^i  {B^{\dag}}^{0k}_{-}\epsilon_{ki} + 2  (B^{-1})^{\rho}_i  \tau^i  {B^{\dag}}^{0k}_{-}\delta B_{\rho}^j\epsilon_{kj}
\end{align*}
and using \eqref{e:identity_for_Pi_cl2} and \eqref{e:identity_for_Pi_BV}, 
\begin{align*}
    \underline{\sigma_{\BFV}}^{*}(\Pi_{\mu\nu}\delta g^{\mu\nu}) 
    &= 2 \epsilon^{\sigma\nu}A_{\sigma}^{0c}\epsilon_{bc} \delta B_{\nu}^b  - 2  (B^{-1})^{\rho}_i  \tau^i  {B^{\dag}}^{0k}_{-}\delta B_{\rho}^j\epsilon_{kj}.
\end{align*}

Hence 
\begin{align}\label{e:Part_Of_pushforward_Symplform}
    \underline{\sigma_{\BFV}}^{*}(\delta\txi{\rho} {\varphi}_{\rho} + \Pi_{\mu\nu}\delta g^{\mu\nu}) = 2 \delta \tau^i  {B^{\dag}}^{0k}_{-}\epsilon_{ki} + 2 \epsilon^{\sigma\nu}A_{\sigma}^{0c}\epsilon_{bc} \delta B_{\nu}^b
\end{align}
and collecting all the terms we get the desired claim.

\subsection{\texorpdfstring{$\underline{\sigma_{\BFV}}$}{usBFV} preserves the actions}\label{a:actions}
The proof of this statement is extraordinarily long, but composed of mostly straightforward calculations. Since the hard part is is about matching the various pieces of the expressions coming from $\underline{\sigma_{\BFV}}^*S_{EH}$ with $S_{\mathrm{int}}$ we will slit the proof in multiple lemmas, showing exactly what pieces need to match. Let us begin with a preliminary result:

\begin{lemma}
The action $S_{\mathrm{int}}$ can be expressed as
    \begin{align} \label{e:S_BF_reduced_final}
    S_{\mathrm{int}} &= \int_{\Sigma} \left( d_{\mu} \tau^0  \Gamma_{\nu}^{ij} \epsilon_{ij}\epsilon^{\mu\nu} +  d_{\rho} \tau^0 \left( \tau^m B^{\dag {lj}}_{-}\eta_{jm}+2\tau^0 B^{\dag {l0}}_{-}\right)(B^{-1})^\rho_k \epsilon_{l}^{\;k}\right)\mathrm{vol} \\
    &\phantom{= \int_{\Sigma}} +\left(\tau^0 A_{\mu}^{0i}A_{\nu}^{0j} + 2 \tau^{i} d_{\mu}A_{\nu}^{0j} +2 \tau^{i} \Gamma_{\mu}^{jk}A_{\nu}^{0l}\eta_{kl} \epsilon_{ij} + 4 \tau^{i}\tau^0 B^{\dag {l0}}_{\mu\rho}(B^{-1})^\rho_m \epsilon_{l}^{\;m}A_{\nu}^{0n}\eta_{in}\right) \epsilon^{\mu\nu}\mathrm{vol}. \nonumber
\end{align}
\end{lemma}

\begin{proof}
    The result is obtained by unpacking the action $S_{\mathrm{int}}= \trintl{\Sigma} \tau F_A$:
\begin{align}
    S_{\mathrm{int}} & = \int_{\Sigma} \left( \tau^0 (F_A)_{\mu\nu}^{ij} + 2 \tau^{i} (F_A)_{\mu\nu}^{j0}\right)\epsilon_{ij} dx^{\mu}dx^{\nu} \nonumber\\
    &= \int_{\Sigma} \left( \tau^0 d_{\mu} A_{\nu}^{ij} + \tau^0 A_{\mu}^{ik}A_{\nu}^{jl}\eta_{kl} + \tau^0 A_{\mu}^{0i}A_{\nu}^{0j} + 2 \tau^{i} d_{\mu}A_{\nu}^{0j} + 2 \tau^{i} A_{\mu}^{jk}A_{\nu}^{0l}\eta_{kl}\right)\epsilon_{ij} \epsilon^{\mu\nu}\mathrm{vol} \label{e:S_BF_reduced}
\end{align}
where $A_{\mu}^{jk}$ is not an independent field but it is defined as 
    \begin{align*}
        A_{\mu}^{ij} = \Gamma_{\mu}^{ij} + 2 X_{\mu\nu}^l(B^{-1})^\nu_k \epsilon^{ij}\epsilon_{l}^{\;k}
    \end{align*}
    where $\Gamma_{\mu}^{ij}$ is the spin connection associated to the diad $B_\mu^i$ 
    \begin{align*}
        \Gamma_{\mu }^{ij}= (B^{-1})^{\nu}_a d_{\lambda}B_{\theta}^{k}\epsilon^{\lambda\theta}\epsilon_{\mu\nu}\left(\eta^{ai}\delta_k^j- \eta^{aj} \delta_k^i\right)
    \end{align*}
    and
    \begin{align*}
        X_{\mu\nu}^i = \tau^k B^{\dag {ij}}_{\mu\nu}\eta_{jk}+\tau^0 B^{\dag {i0}}_{\mu\nu}.
    \end{align*}

    The first term of \eqref{e:S_BF_reduced} can be rewritten as follows:
    \begin{align*}
        \int_{\Sigma}\tau^0 d_{\mu} A_{\nu}^{ij} \epsilon_{ij} \epsilon^{\mu\nu}\mathrm{vol}&= \int_{\Sigma}\left(\tau^0 d_{\mu} \Gamma_{\nu}^{ij} \epsilon_{ij} + 2 \tau^0 d_{\mu} \left(  X_{\nu\rho}^l(B^{-1})^\rho_k \epsilon^{ij}\epsilon_{l}^{\;k} \epsilon_{ij} \right)\right) \epsilon^{\mu\nu}\mathrm{vol}\\
        &= \int_{\Sigma} \left(d_{\mu} \tau^0  \Gamma_{\nu}^{ij} \epsilon_{ij} + 4 d_{\mu} \tau^0   X_{\nu\rho}^l(B^{-1})^\rho_k \epsilon_{l}^{\;k} \right) \epsilon^{\mu\nu}\mathrm{vol}\\
        &= \int_{\Sigma} \left( d_{\mu} \tau^0  \Gamma_{\nu}^{ij} \epsilon_{ij} + 2 d_{\mu} \tau^0 \left( \tau^m B^{\dag {lj}}_{\nu\rho}\eta_{jm}+2\tau^0 B^{\dag {l0}}_{\nu\rho}\right)(B^{-1})^\rho_k \epsilon_{l}^{\;k}\right) \epsilon^{\mu\nu}\mathrm{vol}\\
        & = \int_{\Sigma} d_{\mu} \tau^0  \Gamma_{\nu}^{ij} \epsilon_{ij} \epsilon^{\mu\nu}\mathrm{vol} +  d_{\rho} \tau^0 \left( \tau^m B^{\dag {lj}}_{-}\eta_{jm}+2\tau^0 B^{\dag {l0}}_{-}\right)(B^{-1})^\rho_k \epsilon_{l}^{\;k}\mathrm{vol}\\
    \end{align*}
    where we integrated by parts, and used \eqref{e:relations_epsilon_eta} and \eqref{e:Prop_Bline}. 
    Analogously, we can substitute the solution of $A$ in the last term of \eqref{e:S_BF_reduced}:
    \begin{align*}
        \int_{\Sigma} 2 \tau^{i} A_{\mu}^{jk}A_{\nu}^{0l}\eta_{kl} \epsilon_{ij} \epsilon^{\mu\nu}\mathrm{vol} &=  \int_{\Sigma} \left( 2 \tau^{i} \Gamma_{\mu}^{jk}A_{\nu}^{0l}\eta_{kl} \epsilon_{ij}  +  4 \tau^{i} X_{\mu\rho}^l(B^{-1})^\rho_m \epsilon^{jk}\epsilon_{l}^{\;m}A_{\nu}^{0n}\eta_{kn} \epsilon_{ij} \right)\epsilon^{\mu\nu}\mathrm{vol}\\
        &=  \int_{\Sigma}\left( 2 \tau^{i} \Gamma_{\mu}^{jk}A_{\nu}^{0l}\eta_{kl} \epsilon_{ij} + 4 \tau^{i} X_{\mu\rho}^l(B^{-1})^\rho_m \epsilon_{l}^{\;m}A_{\nu}^{0n}\eta_{in}  \right)\epsilon^{\mu\nu}\mathrm{vol}\\
        &=  \int_{\Sigma} \left(2 \tau^{i} \Gamma_{\mu}^{jk}A_{\nu}^{0l}\eta_{kl} \epsilon_{ij} + 4 \tau^{i} \tau^k B^{\dag {lj}}_{\mu\rho}\eta_{jk}(B^{-1})^\rho_m \epsilon_{l}^{\;m}A_{\nu}^{0n}\eta_{in}\right)  \epsilon^{\mu\nu}\mathrm{vol}\\
        &\phantom{=}
        + \int_{\Sigma} 4 \tau^{i}\tau^0 B^{\dag {l0}}_{\mu\rho}(B^{-1})^\rho_m \epsilon_{l}^{\;m}A_{\nu}^{0n}\eta_{in}  \epsilon^{\mu\nu}\mathrm{vol}.
    \end{align*}
    The second term of this expression vanishes as it is proportional to $\tau^i\tau^j\tau^k=0$, after using the residual constraint \eqref{e:ConstraintBF_antisym}.
    Note as well that the second term of \eqref{e:S_BF_reduced} vanishes due to antisymmetry\footnote{ This is specific to the dimension of the boundary $\dim \Sigma=2$: $\epsilon^{ik}\epsilon^{jl}\eta_{kl}\epsilon_{ij}=\epsilon^{kl}\eta_{kl}=0 $.}. Collecting all the computations we get the desired claim.
\end{proof}

Let us move to $\underline{\sigma_{\BFV}}^* (S_{EH})$ and compute separately some of the terms.

\begin{remark}
    The proof of the following calculations is long but straightforward. This should be taken as a guide for the reader interested in checking the statement, which will indicate what terms should match.
\end{remark}

\begin{calculation}
    \begin{align*}
    \underline{\sigma_{\BFV}}^* \left(\int_{\Sigma} \sqrt{\mathsf{g}}R^\Sigma  \txi{n} \mathrm{vol}\right) 
    = \int_{\Sigma}  2 \partial_{\mu}\tau^0 (B^{-1})^{\nu}_i \eta^{ij} \partial_{\lambda} B_{\theta}^k \epsilon^{\lambda \theta} \epsilon_{jk} \mathrm{vol}
\end{align*}
\end{calculation}

\begin{calculation}
    \begin{align*}
    \underline{\sigma_{\BFV}}^{*}\left(\int_{\Sigma}2 \txi{\mu} \partial_{\rho}\left( \Pi_{\mu\nu}g^{\rho\nu}\right)\right) \mathrm{vol} &= \int_{\Sigma} 2 \partial_{\rho} \tau^k \left((B^{-1})^{\rho}_j \tau^j \epsilon_{kl} {B^{\dag}}^{0l}_{-} + \epsilon^{\sigma \rho} A_{\sigma}^{0i} \epsilon_{ki}\right)\mathrm{vol} \\
    & \phantom{=} + \int_{\Sigma} 2  \tau^i \partial_{\rho}(B^{-1})^{\mu}_i B_{\mu}^k \left((B^{-1})^{\rho}_j \tau^j \epsilon_{kl} {B^{\dag}}^{0l}_{-} + \epsilon^{\sigma \rho} A_{\sigma}^{0i} \epsilon_{ki}\right)\mathrm{vol}
\end{align*}
\end{calculation}

\begin{calculation}
\begin{align*}
    \underline{\sigma_{\BFV}}^{*}\left(\int_{\Sigma} \frac{1}{\sqrt{\mathsf{g}}} \left(\Pi^{\mu\nu}\Pi_{\mu\nu}- \Pi^2\right) \txi{n}\right) \mathrm{vol} &= \int_{\Sigma} \epsilon^{\mu\nu}\epsilon_{ij} A_{\mu}^{0i}A_{\nu}^{0j} \tau^0 \mathrm{vol} \\
    & \phantom{=} + \int_{\Sigma} 2 \tau^i \eta_{ij} \tau^0 {B^{\dag}}^{0m}_{-} (B^{-1})^{\mu}_l \epsilon^l_{\; m} A_{\mu}^{0j} \mathrm{vol}.
\end{align*}
\end{calculation}

Using the explicit expresssion of $\sigma_{\BFV}$ we also get
\begin{align*}
    \underline{\sigma_{\BFV}}^* \left(\int_{\Sigma}  \partial_\mu\left(\txi{\mu}{\phin}\right) \txi{n} \mathrm{vol}\right) 
    &= \int_{\Sigma} \partial_{\nu}\tau^0 (B^{-1})^{\nu}_j \tau^j {B^{\dag}}^{ik}_{-} \epsilon_{ik} \mathrm{vol}
\end{align*}
\begin{align*}
    \underline{\sigma_{\BFV}}^* \left(\int_{\Sigma}  {g}^{\mu\nu}{\varphi}_\mu\partial_\nu\txi{n} \txi{n}\mathrm{vol}\right)
    &= \int_{\Sigma} 2 \partial_{\nu}\tau^0 (B^{-1})^{\nu}_j \eta^{lj} \epsilon_{kl} {B^{\dag}}^{0k}_{-} \tau^0 \mathrm{vol}
\end{align*}
Finally, using \eqref{e:Part_Of_pushforward_Symplform} and formally substituting $\delta$ with $\partial_{\rho}$, we also get
\begin{align*}
    \underline{\sigma_{\BFV}}^{*}\left(\int_{\Sigma}\txi{\rho} \partial_{\rho}\txi{\mu} {\varphi}_{\mu} + \txi{\rho} \Pi_{\mu\nu}\partial_{\rho} g^{\mu\nu}\right) \mathrm{vol}
    & = \int_{\Sigma} 2 (B^{-1})^{\rho}_j \tau^j \partial_{\rho} \tau^i  {B^{\dag}}^{0k}_{-}\epsilon_{ki}\mathrm{vol}\\
    & \phantom{= \int_{\Sigma}}+ 2  (B^{-1})^{\rho}_j \tau^j \epsilon^{\sigma\nu}A_{\sigma}^{0c}\epsilon_{bc} \partial_{\rho} B_{\nu}^b \mathrm{vol}.
\end{align*}

It is now easy to compare the expression for $\underline{\sigma_{\BFV}}^* (S_{EH})$ (collecting all the previous contributions) and $S_{\mathrm{int}}$ given by \eqref{e:S_BF_reduced_final}. The proof is then concluded using the following lemma:

\begin{calculation}
    \begin{align*}
    \int_{\Sigma} 2 \tau^{i} \Gamma_{\mu}^{jk}A_{\nu}^{0l}\eta_{kl} \epsilon_{ij} \epsilon^{\mu\nu}\mathrm{vol} &= \int_{\Sigma} 2  \tau^i \partial_{\rho}(B^{-1})^{\mu}_i B_{\mu}^k \left((B^{-1})^{\rho}_j \tau^j \epsilon_{kl} {B^{\dag}}^{0l}_{-} + \epsilon^{\sigma \rho} A_{\sigma}^{0i} \epsilon_{ki}\right)\mathrm{vol}\\
    & \phantom{=}+ \int_{\Sigma} 2 (B^{-1})^{\rho}_j \tau^j \partial_{\rho} \tau^i  {B^{\dag}}^{0k}_{-}\epsilon_{ki}\mathrm{vol}.
\end{align*}
\end{calculation}

\section{Tables of symbols}

For each symbol we provide a (short) description and the corresponding place of definition.

 \begin{table}[H]
     \centering
     \renewcommand{\arraystretch}{1.2}
     \begin{tabular}{|c|c|c|}
     \hline
        $(F,\omega)$  & Graded symplectic manifold  & Section \ref{s:coiso_by_BFV}\\
         $C$ & Coisotropic submanifold of $(F,\omega)$ & Section \ref{s:coiso_by_BFV}\\
         $\underline{C}$ & Coisotropic reduction of $(F,\omega)$ w.r.t. $C$ & Section \ref{s:coiso_by_BFV}\\
         $\pi_C$ & projection $C \rightarrow \underline{C}$ & Section \ref{s:coiso_by_BFV}\\
         $\mathcal{I}_C$ & vanishing ideal of $C$ & Section \ref{s:coiso_by_BFV}\\
         $\{\phi_i\}_i$ & Set of first class Hamiltonian functions generating $\mathcal{I}_C$  & Section \ref{s:coiso_by_BFV}\\
        \hline
     \end{tabular}
     \vspace{3mm}
     \caption{Coisotropic submanifolds. }
     \label{tab:coisotropic}
 \end{table}
The same notation is used throughout the article with subscript denoting the specific coisotropic submanifold considered. If the (graded) symplectic manifold is infinite dimensional we use calligraphic letters, $(\mathcal{F}, \varpi)$, $\mathcal{C}$.

 \begin{table}[H]
     \centering
    \renewcommand{\arraystretch}{1.2}
     \begin{tabular}{|c|c|c|}
     \hline
            $\BFV(F,C)$ & BFV data associated to $C \subset F$ & Definition \ref{d:BFVtheory_of_C}\\
         $(\mathbb{F}(F,C), \omega_C)$ & BFV space of fields of $\BFV(F,C)$ & Definition \ref{d:BFVtheory_of_C}\\
         $\varpi_C$ & Alternative notation for $\omega_C$ in applications to gravity &\\
         $S_C$ & Action of $\BFV(F,C)$ & Definition \ref{d:BFVtheory_of_C}\\
         $Q_C$ & Cohomological vector field of  $\BFV(F,C)$ & Definition \ref{d:BFVtheory_of_C}\\
         $\mathfrak{BFV}^{\bullet}(F,C)$ & BFV complex resolving $\underline{C}$ & \\
         $W_{\phi}$ & Space of ghost fields associated to $\phi_i$ &\\
         \hline
     \end{tabular}
     \vspace{3mm}
     \caption{BFV data associated to the resolution of $C \subset F$}
     \label{tab:general_BFV}
 \end{table}

\begin{table}[H]
    \centering
    \renewcommand{\arraystretch}{1.2}
     \begin{tabular}{|c|c|c|}
     \hline
        $\Co \equiv C_{\phi}$ & Coisotropic submanifold defined by $\phi_i=0$ &\\
        $C_{\psi}$ & Coisotropic submanifold defined by $\psi_j=0$ &\\
        $C$ & $C_{\psi} \cap C_{\phi}$ &\\
        $\Cres$ & Coisotropic submanifold of $\underline{C_{\phi}}$ such that $\underline{\Cres}= \underline{C}$ &\\
        $f,h,m,g$ & Structure functions of Poisson algebra of constraints $\phi_i$ and $\psi_j$ & \eqref{e:Coisostructure}\\
        \hline
    \end{tabular}
    \vspace{3mm}
    \caption{Notation for nested coisotropic embeddings, general theory}
    \label{tab:nested_general}
\end{table}

\begin{table}[H]
    \centering
    \renewcommand{\arraystretch}{1.2}
     \begin{tabular}{|c|c|c|}
     \hline
        $\chi, \chi^\dag$ & Ghost and antighost relative to constraint $\phi_i$ & Assumption \ref{ass:BFV_action}\\
        $\lambda, \lambda^\dag$ & Ghost and antighost relative to constraint $\psi_i$ & Assumption \ref{ass:BFV_action}\\
        $\mathsf{M}^{\mu}, \mathsf{L}^{\rho}, \mathsf{J}^{\rho}$ & Constraints on $\BFV(F,C)$  & Theorem \ref{thm:BFVcoisotropic}\\
        $\mu, \mu^\dag, \rho, \rho^\dag$ & Ghosts and antighosts relative to constraints $\mathsf{M}^{\mu}, \mathsf{L}^{\rho}, \mathsf{J}^{\rho}$ & Theorem \ref{thm:BFVcoisotropic}\\
        $\mathbf{\Phi}, \mathbf{\Phi}^{\dag}$ & Auxiliary notation & Theorem \ref{thm:BFVcoisotropic} \\
        $\llbracket \cdot, \cdot \rrbracket$ & brackets generated by $f$ & Theorem \ref{thm:BFVcoisotropic}\\
        $m(\cdot, \cdot), g(\cdot, \cdot), h(\cdot, \cdot) $ & brackets generated by $m,g,h$ &Theorem \ref{thm:BFVcoisotropic} \\
        $\CCo$ & Coisotropic submanifold generated by $\mathsf{M}^{\mu}, \mathsf{L}^{\rho}, \mathsf{J}^{\rho}$ & Theorem \ref{thm:BFVcoisotropic}\\
        $\FF, \bbomega, \mathbb{S}, \mathbb{Q}$ & BFV theory resolving $\CCo$ & Definition \ref{d:doubleBFVdata}\\
        $\check{S}_C$ & Extension of $S_C$ to $\FF$ &  \eqref{e:generalSC}\\
        \hline
    \end{tabular}
    \vspace{3mm}
    \caption{Double BFV}
    \label{tab:doubleBFV}
\end{table}

\begin{table}[H]
    \centering
    \renewcommand{\arraystretch}{1.2}
     \begin{tabular}{|c|c|c|}
     \hline
        $\mathfrak{q}$ & Quantisation map & \eqref{e:quantummorphism} \\
         $\mathbb{V}$ & Space of operators quantising $\FF$ &  \eqref{e:quantummorphism}\\
         $\mathbb{\Omega}$ & Quantisation of $\mathbb{S}$ & \eqref{e:conditions_Omega_descends} \\
         $\check{\Omega}_C$ & Quantisation of $\check{S}_C$ & \eqref{e:conditions_Omega_descends}\\
         \hline
    \end{tabular}
    \vspace{3mm}
    \caption{Quantum double BFV}
    \label{tab:QuantumdoubleBFV}
\end{table}

\begin{table}[H]
    \centering
    \renewcommand{\arraystretch}{1.2}
     \begin{tabular}{|c|c|c|}
     \hline
        $g$ & element of $S_{nd}^2(T\Sigma)$ (co)-metric on $\Sigma$ & \eqref{e:boundaryfields_EH}  \\
        $\txi{}$ & element of $\mathfrak{X}[1](\Sigma)$, ghost parametrizing diffeomeorphisms & \eqref{e:boundaryfields_EH} \\
        $\txi{n}$ & element of $C^\infty[1](\Sigma)$, ghost  & \eqref{e:boundaryfields_EH} \\
        $\Pi$ & element of $S^2(T^*\Sigma)$ momentum of $g$ & \eqref{e:fiberfields_EH}\\
        $\phip$ &  element of $\Omega^2[-1](\Sigma)$ momentum of $\txi{}$ & \eqref{e:fiberfields_EH}\\
        $\phin$ & element of $C^\infty[-1](\Sigma)$ momentum of $\txi{n}$ & \eqref{e:fiberfields_EH}\\
        $R^\Sigma$ & Ricci scalar of $g^{-1}$ & \eqref{Hamiltonianconstraint}\\
        $L_{\txi{}}$ & Lie Derivative w.r.t. $\txi{}$ & \\
        \hline
    \end{tabular}
    \vspace{3mm}
    \caption{Einstein Hilbert theory, BFV formulation}
    \label{tab:EH_BFV}
\end{table}

\begin{table}[H]
    \centering
    \renewcommand{\arraystretch}{1.2}
     \begin{tabular}{|c|c|c|}
     \hline
        $A$ & element of $\Omega^1(\Sigma,\wedge^2\mathcal{V})$ (connection) & \eqref{e:boundaryfields_EH}\\
        $B$ & element of $\Omega^1_{nd}(\Sigma, \mathcal{V})$ (coframe) &\eqref{e:boundaryfields_EH}\\
        $\chi$ & element of $\Omega^0[1]( \Sigma, \wedge^2\mathcal{V})$ (ghost) &\eqref{e:boundaryfields_EH}\\
        $\tau$ & element of $\Omega^0[1]( \Sigma, \wedge^1\mathcal{V})$ (ghost) &\eqref{e:boundaryfields_EH}\\
        $A^\dag$ & element of  $\Omega^2[-1]( \Sigma, \wedge^1\mathcal{V})$ (antifield)& \eqref{e:boundaryfields_EH}\\
        $B^\dag$ & element of $\Omega^2[-1]( \Sigma, \wedge^2\mathcal{V})$ (antifield) &\eqref{e:boundaryfields_EH}\\
        $F_A$ & curvature of $A$ & \\
        $\mathcal{C}_{BF}$ & Coisotropic submanifold of $\mathcal{F}_{BF}$ &\eqref{e:coisotropic_submanifold_BF}\\
        $\mathcal{C}_{\mathrm{Gauss}}$ & Coisotropic submanifold of $\mathcal{F}_{BF}$  & \eqref{e:gaussian_coiso_BF}\\
        \hline
    \end{tabular}
    \vspace{3mm}
    \caption{BF theory, BFV formulation}
    \label{tab:BF_BFV}
\end{table}

\begin{table}[H]
    \centering
    \renewcommand{\arraystretch}{1.2}
     \begin{tabular}{|c|c|c|}
     \hline
$\te$ & element of $\Omega^1_{nd}(\Sigma, \mathcal{V})$ (coframe) & Proposition \ref{prop:BFVdata}\\
$\tom$ & element of $\Omega^1(\Sigma,\wedge^2\mathcal{V})$ (connection) & Proposition \ref{prop:BFVdata}\\
 $\tc$ & element of $\Omega^0[1](\Sigma,\wedge^2\mathcal{V})$ (ghost)  & Proposition \ref{prop:BFVdata}\\ 
 $\xi$ & element of $\mathfrak{X}[1](\Sigma)$ (ghost) & Proposition \ref{prop:BFVdata}\\$\xi^n$ & element of  $C^\infty[1](\Sigma)$ (ghost) & Proposition \ref{prop:BFVdata}\\ 
$\tom^\dag$ & element of $\Omega^3[-1](\Sigma,\mathcal{V})$ (antifield) & Proposition \ref{prop:BFVdata}\\
$\te^\dag$ & element of $\Omega^{2}[-1](\Sigma,\wedge^2 \mathcal{V})$ (antifield) & Proposition \ref{prop:BFVdata}\\
$F_{\tom}$ & curvature of $\tom$ & \\
$\mathcal{C}_{PC}$ & Coisotropic submanifold of $\mathcal{F}_{PC}$ &Proposition \ref{prop:BFVdata}\\
$\mathcal{C}_{\mathrm{Lor}}$ & Coisotropic submanifold of $\mathcal{F}_{PC}$ & Theorem \ref{t:RPS_PCandEH}\\
$\psi$ & symplectomorphism between BF and PC theory & \eqref{e:PCtoBF}\\
\hline
    \end{tabular}
    \vspace{3mm}
    \caption{Palatini--Cartan theory, BFV formulation}
    \label{tab:PC_BFV}
\end{table}

\section*{Declarations}
\subsubsection*{Conflict of interest statement} The authors have no competing interests to declare that are relevant to the content of this article.
\subsubsection*{Data availability statement} Data sharing not applicable to this article as no datasets were generated or analysed during the current study.

\begin{refcontext}[sorting=nyt]
  \printbibliography[] 
\end{refcontext}

\end{document}